\definecolor{SkyBlue}{RGB}{14, 118, 188}
\definecolor{BrightRed}{RGB}{223,82, 78}
\DeclareMathOperator{\tr}{tr}
\DeclareMathOperator{\vect}{vec}
\DeclareMathOperator{\chol}{chol}
\DeclareMathOperator{\vech}{vech}
\newtheorem{myProposition}{Proposition}
\newtheorem{myRemark}{Remark}
\newcommand{\revision}{\textcolor{black}}
\def\keywordname{{\bfseries \emph Keywords}}%
\def\keywords#1{\par\addvspace\medskipamount{\rightskip=0pt plus1cm
\def\and{\ifhmode\unskip\nobreak\fi\ $\cdot$
}\noindent\keywordname\enspace\ignorespaces#1\par}}
\patchcmd{\@maketitle}{\LARGE}{\Large}{}{}
\title{The Effect of the Prior and the Experimental Design on the Inference of the Precision Matrix in Gaussian Chain Graph Models}
\author{Yunyi Shen\thanks{Depts. of Statistics \& Wildlife Ecology, University of Wisconsin--Madison}\and Claudia Sol\'{i}s-Lemus\thanks{Wisconsin Institute for Discovery \& Dept. of Plant Pathology, University of Wisconsin--Madison. Correspondence to: solislemus@wisc.edu}}
\date{}
\begin{document}
\def\bY{\bm{Y}}
\def\by{\bm{y}} 

\def\bz{\bm{z}}
\def\bX{\bm{X}}
\def\bx{\bm{x}} 

\def\R{\mathbb{R}}
\def\N{\mathcal{N}}
\def\P{\mathbb{P}}
\def\E{\mathbb{E}}

\def\Xcal{\mathcal{X}}

\maketitle


\begin{abstract}
\begin{spacing}{1}
Here, we investigate whether (and how) experimental design could aid in the estimation of the precision matrix in a Gaussian chain graph model, especially the interplay between the design, the effect of the experiment and prior knowledge about the effect. Estimation of the precision matrix is a fundamental task to infer biological graphical structures like microbial networks.
We compare the marginal posterior precision of the precision matrix under four priors: flat,  conjugate Normal-Wishart, Normal-MGIG and a general independent. Under the flat and conjugate priors, the Laplace-approximated posterior precision is not a function of the design matrix rendering useless any efforts to find an optimal experimental design to infer the precision matrix. In contrast, the Normal-MGIG and general independent priors do allow for the search of optimal experimental designs, yet there is a sharp upper bound on the information that can be extracted from a given experiment. We confirm our theoretical findings via a simulation study comparing i) the KL divergence between prior and posterior and ii) the Stein's loss difference of MAPs between random and no experiment. Our findings provide practical advice for domain scientists conducting experiments to better infer the precision matrix as a representation of a biological network.
\end{spacing}
\end{abstract}
\keywords{Linear Regression \and Graphical Models \and Interaction Network \and Microbiome} 

\section{Introduction}

\subsection{Background}
Networks are graphical structures that arise on many biological applications. For example, microbial networks are a graphical representation of a microbial community where nodes represent microbial taxa and edges correspond to some form of interaction \citep{matchado2021network}. Microbial networks arise in microbiome studies from soil \citep{barberan2012using} to human gut \citep{baldassano2016topological, claesson2012gut}. Other biological networks include brain networks where nodes correspond to regions in the brain (or voxels) and edges correspond to connections between brain regions \citep{rubinov2010complex}, or ecological networks which, like microbial networks, represent some community of species in the wild (e.g. food web \citet{pimm1991food}).

Many biological networks need to be estimated from data and given their broad applicability, there has been a plethora of network methods to reconstruct biological networks from a wide variety of data types: from microbiome sequencing data \citep{jovel2016characterization} to MRI images of brain regions \citep{van2013structure}. Many network methods used in microbiome studies, however, estimate a network under \textit{static conditions}. That is, researchers would estimate the human gut microbial network by sequencing reads of gut samples from different individuals \textit{without incorporating information about treatments on the individuals}. 
In the absence of any treatments or disturbers of the network structure, Gaussian graphical models are among the most widely used models to estimate such networks \citep{layeghifard2017disentangling}.

Many researchers, however, want to use information of different experimental treatments to better infer the biological network. For example, we can get samples of human gut microbiome for control patients and for patients under antibiotic treatment. The different responses (abundances of microbes) on different treatments can provide information on the microbial network structure by, for example, eliminating key players in the community like in the case of antibiotic treatments.
The conditions are no longer static (no treatment). Instead, the nodes are perturbed by experimental treatments, and these perturbations provide information on the network structure itself. We thus highlight that experimental treatments are incorporated into the model to better estimate the network structure, and not 
because there is an interest in the effect of these treatments on the responses. On the contrary, exact experimental treatment effects are nuisance parameters and our only focus is the estimation of the network structure with the help that the experimental treatments provide.

\subsection{The Gaussian chain graph model}
\label{themodel}
Standard Gaussian graphical models no longer apply when there are experimental treatments affecting the network structure. Thus, chain graph models arise as a suitable alternative for the ``network under treatment" setting \citep{lauritzen2002chain}. More formally, a Gaussian chain graph model with $k$ response nodes ($\mathbf Y_i\in \mathbb{R}^{ k}$) and $p$ predictor nodes ($\mathbf X_i\in \mathbb{R}^{p}$) is given by
\begin{equation}
\label{eqn:general_model}
    \mathbf Y_i \mid \mathbf{X}_i, \mathbf B, \mathbf \Omega \sim \mathcal{N}(\mathbf \Omega^{-1} \mathbf B^T \mathbf X_i,\mathbf \Omega^{-1})
\end{equation}
where $\mathbf{B} \in \mathbb{R}^{p \times k}$ is the matrix for the regression coefficients (e.g. treatment effects) and $\mathbf{\Omega} \in \mathbb{R}^{k \times k}$ is the precision matrix among responses (e.g. network structure). \revision{Note that the treatment effects could also be used to account for subject heterogeneity when samples are not from independent identically distributed from a Gaussian graphical model.}
In the microbial network example described above, the responses correspond to the abundances of microbes in the samples and the predictors correspond to the experimental treatments. Our parameter of interest is $\mathbf \Omega$ which represents the network among responses (the microbial network) and $\mathbf B$ represents the \textit{direct} effects of treatments on the responses (e.g. the effect of antibiotic on different microbes). As mentioned, we are not interested in $\mathbf B$ itself. The introduction of $\mathbf B$ to the model is done to facilitate inference of $\mathbf \Omega$, so in this sense, $\mathbf B$ is a nuisance parameter.
Bayesian implementations of Gaussian chain graph models further include prior distributions for $\mathbf B$ and $\mathbf \Omega$ which can  allow the inclusion of prior biological knowledge into the model. \revision{Different priors have been proposed for precision matrix such as conjugate Wishart prior, shrinkage prior like LASSO and adaptive LASSO prior \citep{glasso}, spike-and-slab priors \citep{Gan2019}, priors based on matrix decomposition and transformations like spectrum \citep{daniels1999nonconjugate}, Cholesky \citep{daniels2002bayesian}, and Givens angle \citep{kang2011bayesian} reference prior \citep{yang1994estimation}. 
Unfortunately, the fact that the mean structure involves the precision matrix in chain graph models, a Gibbs sampler is not straight-forward to implement. In multivariate linear regression, one can condition on the regression coefficients, and then sample the full conditional 
of the precision matrix using the same sampling method for Gaussian graphical model. In contrast, in a Gaussian chain graph model, the mean structure involves the precision matrix, and thus, sampling usually requires modifications. For instance,
despite following the same derivation as in the original graphical LASSO \citep{glasso}, the sampler for graphical LASSO prior on Gaussian chain graph models \citep{shen2020bayesian} is significantly different in the full conditionals. 
While a thorough comparison of multiple priors on Gaussian chain graph models is indeed an interesting research direction, it is currently beyond the scope of this paper and will focus on the priors that have already been adopted in chain graphs.}

Under the setup of Bayesian inference of Gaussian chain graph model, a rather unexplored statistical question is which experimental design on the predictors (design matrix $\mathbf X$) would improve the inference of the precision matrix ($\mathbf \Omega$). Traditional experimental design in linear models involves selecting a data matrix $\mathbf X$ such that the variance (covariance) of the estimator of the regression coefficient $\hat{\beta}$ is small which translates into selecting $\mathbf X$ such that $\mathbf \Sigma_{\mathbf X} = \frac{1}{n} \mathbf X^T \mathbf X$ is as large as possible. The design problem is then build around an optimality criterion (or measure of success) $V(\mathbf \Sigma_{\mathbf X})$ or $V(\mathbf X)$. For example, under the D-optimality setting, the design aims to maximize the determinant of the inverse $\mathbf \Sigma_{\mathbf X}$ matrix: $V_D(\mathbf \Sigma_{\mathbf X}) = \det(\mathbf \Sigma^{-1}_X)$. Other optimality criteria involve maximizing the largest eigenvalue of $\mathbf \Sigma_{\mathbf X}$ (E-optimality) or the trace of $\mathbf \Sigma_{\mathbf X}$ (A-optimality). Bayesian alternatives of the designs change slightly by the incorporation of priors. For example, the Bayesian D-optimality setting involves maximizing $V_D(\mathbf \Sigma_{\mathbf X}) = \det \left( \left( \mathbf \Sigma_{\mathbf X} + \frac{1}{n} \mathbf V_0 \right)^{-1} \right)$ where $\mathbf V_0$ is the prior covariance matrix of $\beta$.

The question of optimal experimental design is not unexplored from the biological perspective. Researchers have always been interested in experiments with specificity: the treatment is conditionally independent with all but one of the response nodes. However, it can be difficult or impossible to design an experiment where only one response node (e.g. microbe) is perturbed. Thus, in the absence of specific treatments, we can study what other alternative experimental designs can aid in the estimation of the network structure.

Here, we address the question of whether we can find an optimal Bayesian experimental design to infer $\mathbf \Omega$, our parameter of interest, on a Gaussian chain graph model. We focus on the Laplace approximation of the marginal posterior precision matrix of $\mathbf \Omega$ as our optimality criterion. We choose to focus on the marginal precision matrix of $\mathbf \Omega$ (instead of the joint precision matrix of $\mathbf B$ and $\mathbf \Omega$) given that we want to account for the nuisance parameter $\mathbf B$ without focusing on it concretely. In addition, we use the Laplace approximation of the precision matrix (instead of the actual precision matrix) given that the precision matrix tends to be intractable for most posteriors (but see Section \ref{toy}). 

We study the case of four different prior distributions: flat, conjugate Normal-Wishart, the novel Normal-Matrix Generalized Inverse Gaussian (MGIG) and a general independent prior. 
For each prior, we obtain the Laplace approximation of the marginal posterior precision matrix of $\mathbf \Omega$ to use as our optimality criterion to find the optimal experimental design $\mathbf X$.
We find, however, that the Laplace approximation of the marginal posterior precision matrix of $\mathbf \Omega$ is not a function of $\mathbf X$ for the flat nor the conjugate priors. This implies that it is difficult if not impossible to find an optimal Bayesian experimental design to aid in the estimation of $\mathbf \Omega$ for these two priors. In contrast, the  Laplace approximation of the marginal posterior precision matrix of $\mathbf \Omega$ is a function of $\mathbf X$ for the novel Normal-MGIG and for a general independent prior which allows the search for an optimal experimental design. However, we discover an information bound for both of these priors which implies that there is a theoretical limit to how much can be gained from experiments in the inference of $\mathbf \Omega$.

Our work has important repercussions for domain scientists who use experimental settings to aid in the estimation of the network structure ($\mathbf \Omega$). Under a Bayesian Gaussian chain graph model, the choice of prior is highly impactful, but even when appropriate priors are selected, there will be an information bound to information gained from experiment. Such bound depends on our prior knowledge about the (conditional) effect of the experiment itself. 


\subsection{Motivating Example: Toy data with explicit posterior precision}
\label{toy}
We show here one example that illustrates the interplay between prior knowledge and experiments. 
We simulate $k=3$ responses, $p=1$ predictor and $n=200$ samples under AR(1) model with $\sigma_{ij}=0.7^{|i-j|}$ (denoted Model 1 in the Simulations in Section \ref{sec:simulation}). We simulate two settings: 1) a null experiment (\revision{design matrix of $\mathbf{X}=0$ despite a potential experimental effect of $\mathbf{B}\ne 0$}), and 2) experiment (the predictor has an effect only on the third response), and we consider two priors each with two uncertainty levels: 1) Normal-Wishart prior with $\lambda=8$, 
$\mathbf \Phi=10^{-3} \mathbf I_3$ and two uncertainty levels of $\mathbf{B}$: i) $\mathbf \Lambda=10^{-3}$ (certain case) and ii) $\mathbf \Lambda=10^3$ (uncertain case), and 2) Normal-MGIG prior with $\lambda=4$, $\mathbf \Psi=\mathbf \Phi=10^{-3} \mathbf I_8$ and two uncertainty levels of $\mathbf{B}$: i) $\mathbf \Lambda=10^{-3}\mathbf I_3$ (certain case) and ii) $\mathbf \Lambda=10^3$ (uncertain case). We set $\mathbf B_0= \mathbf B$ (see Section \ref{priors} for more details on the priors). The design matrix is sampled from the standard Normal distribution.

For this toy example, $\mathbf \Omega$ has an explicit marginal posterior distribution, namely a Matrix Generalized Inverse Gaussian (MGIG) distribution (Equation \ref{eqn:MGIG_posterior}) with parameters $\lambda + \frac{n}{2}$, $\mathbf{\Psi}+(\mathbf{XB}_0)^T (\mathbf{XB}_0)$ and $\mathbf{\Phi} + \mathbf{Y}^T\mathbf{Y}$. 

Figure \ref{fig:toy_posterior} shows the posterior distribution on the partial correlation between the first and the second responses ($\rho_{12}$). We can observe that the experiment has an effect on the inference of the $\rho_{12}$ and that this effect differs by prior with Normal-MGIG prior displaying less variability compared to the conjugate Normal-Wishart. Empirical knowledge similar to this toy example motivated our pursuit of the theoretical interplay of prior distributions and experimental design in the inference of a precision matrix in a chain graph model.

\begin{figure}
    \centering
    \includegraphics[width = \linewidth]{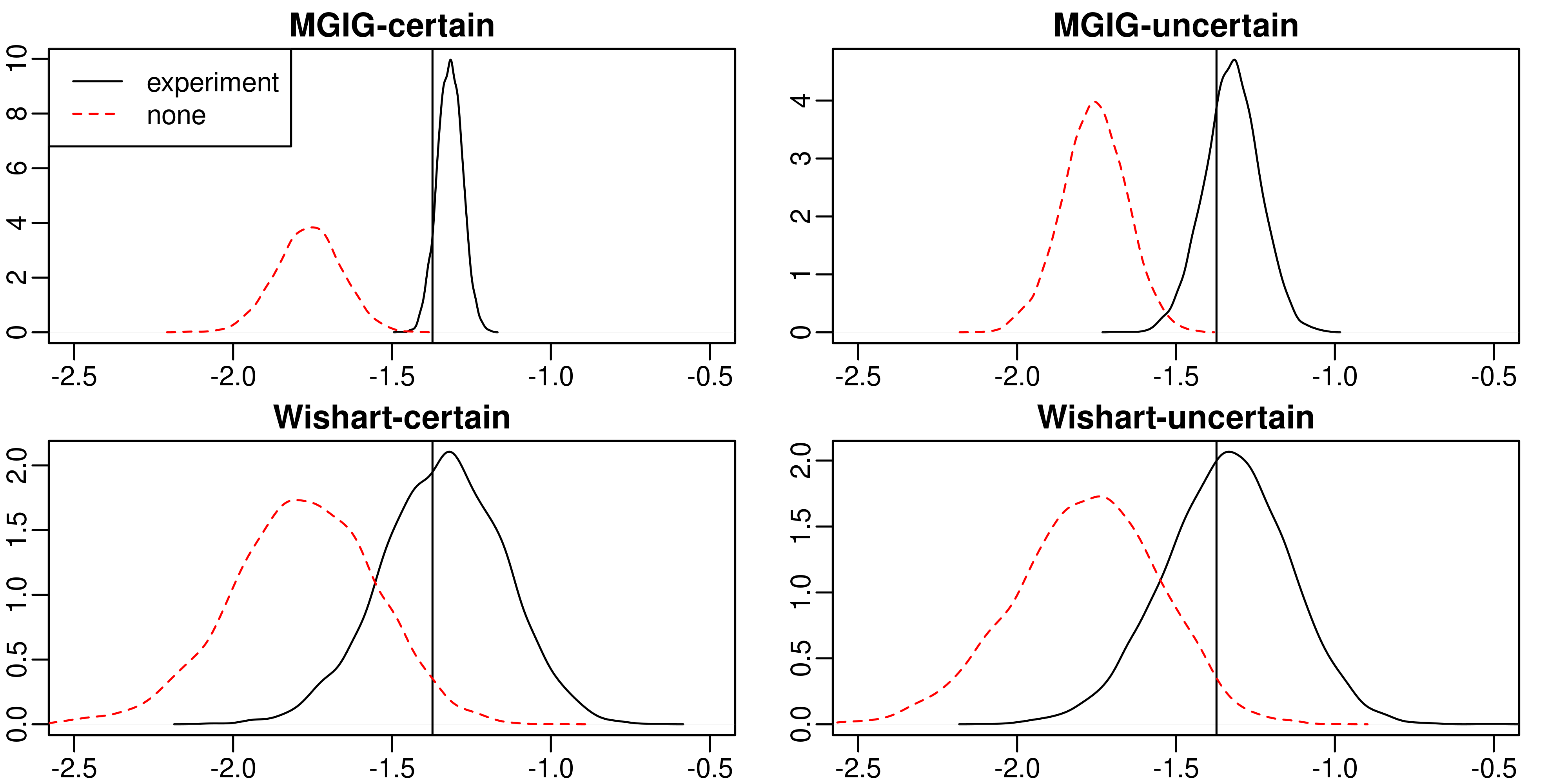}
    \caption{Posterior distribution for the partial correlation between the first and second responses in the toy simulated data with $k=3$ responses, $p=1$ predictor and $n=200$ samples under AR(1) model with and without experiment affecting only the third node, and under Normal-MGIG and Normal-Wishart prior with different uncertainty levels. \revision{Note that we are showing an entry in the precision matrix, not the covariance matrix.} }
    \label{fig:toy_posterior}
\end{figure}

\subsection{Structure of the paper}

The structure of the paper is as follows. In Section \ref{priors} we aimed to understand how prior choice could influence effectiveness of an experimental design. We began with a recap of the Laplace approximation of the posterior precision matrix, and then, we illustrate the Laplace approximation for the marginal posterior precision matrix of the parameter of interest $\mathbf \Omega$ under four priors: flat prior, Normal-Wishart conjugate prior, Normal-MGIG prior and a general  independent prior. We show that optimal experimental design is only possible under the two latter priors, but even in these cases, there is an information limit. In Section \ref{sec:simulation}, we numerically simulate posterior under several priors and different experimental design. We evaluated the result using the Kullback-Leibler divergence between prior and posterior which evaluates the information gain by conducting experiments. In addition, we use the Stein's loss of the maximum a posteriori (MAP) estimation of precision matrix which evaluates the performance of the point estimates from the experiment.

In Section \ref{microbiomedata}, we revisit a similar discussion to the one in the motivating toy dataset where we compare the posterior of partial correlations among responses under different experiments and priors for a real dataset of human gut microbiome.
Finally, in Section \ref{sec:discussion}, we conclude with some practical advice for domain scientists as well as future directions.

\section{Experimental Design under Different Priors in a Gaussian Chain Graph}
\label{priors}

Ideally, we want to use the marginal posterior precision matrix of our parameter of interest $\mathbf \Omega$ as the optimality criterion in an experimental design setting. That is, we want to find the optimal design matrix $\mathbf X$ that maximizes the posterior precision of $\mathbf \Omega$. However, this posterior precision matrix can be intractable in many cases, and thus, we will use its Laplace approximation instead.
We begin this section with a summary of the Laplace approximation of a posterior distribution. Then, we present the Laplace approximation of the marginal posterior precision matrix of $\mathbf \Omega$ under the four priors under study.

\subsection{Laplace Approximation of the Posterior Precision Matrix}

For a log concave posterior distribution $p(\theta | Y)$ for a random variable $Y$ and parameter of interest $\theta$, 
%
the Laplace approximation of the posterior precision is the negative of the Hessian of the log posterior $-\nabla^2_{\theta}\log p(\hat{\theta}|Y)$ which can be partitioned as the sum of the Hessian of the log prior and the Hessian of the log likelihood 
$\nabla^2_{\theta}\log p(\hat{\theta}|Y) =\nabla^2_{\theta}\log p(\hat{\theta})+\nabla^2_{\theta}\log p(Y|\theta)$ \revision{near the maximum \textit{a posteriori} (MAP) estimator. However, we often do not have a closed form expression for the MAP and we simply know than they are close to the true parameter. Thus, we make the additional approximation of using the true parameter to approximate the MAP. We note that while the MAP can depend on the experimental design $\mathbf{X}$, this approximation should not have a considerable impact on the results given that the MAP should be close to the true parameter when the prior has enough support around this true value.}

For the Gaussian chain graph model  \eqref{eqn:general_model}, the Hessian of the log likelihood has the following form (derivation can be found in Appendix \ref{appendix-fi}):
\begin{equation*}
    \begin{aligned}
    \frac{\partial^2 \ell}{\partial \vect(\mathbf{B})\partial \vect(\mathbf{B})^T}&=-\mathbf{\Omega}^{-1}\otimes \mathbf{X}^T\mathbf{X}\\
    \frac{\partial^2 \ell}{\partial \vech(\mathbf{\Omega})\partial \vech(\mathbf{\Omega})^T}&=-\mathbf{D}_k^T\left(\frac{n}{2} \mathbf{\Omega}^{-1}\otimes \mathbf{\Omega}^{-1}+\mathbf{\Omega}^{-1}\otimes \mathbf{\Omega}^{-1}\mathbf{B}^T\mathbf{X}^T\mathbf{X}\mathbf{B}\mathbf{\Omega}^{-1}\right)\mathbf{D}_k\\
    \frac{\partial^2 \ell}{\partial \vect(\mathbf{B})\partial \vech(\mathbf{\Omega})^T}&=\mathbf{D}_k^T\left(\mathbf{\Omega}^{-1}\otimes (\mathbf{\Omega}^{-1}\mathbf{B}^T\mathbf{X}^T\mathbf{X}) \right).
    \end{aligned}
\end{equation*}
where $\mathbf D_k$ is the duplication matrix \citep{minka2000old,magnus2019matrix}, a permutation matrix such that $\mathbf D_k\vect{\mathbf\Omega}= \vech(\mathbf\Omega)$ where $\vech(\mathbf\Omega)$ denotes the vectorization of the unique parameters in $\mathbf\Omega$ (upper triangular part in our case) given that $\mathbf \Omega$ is symmetric, so there are fewer free parameters.

Because the Gaussian chain graph model is an exponential family, the Hessian of the log likelihood is also the negative of the Fisher information matrix:
\begin{equation}
\label{eqn:FI}
    I(\mathbf \Omega, \mathbf B)=\left[
    \begin{matrix}
    \mathbf{D}_k^T\left(\frac{n}{2} \mathbf{\Omega}^{-1}\otimes \mathbf{\Omega}^{-1}+\mathbf{\Omega}^{-1}\otimes \mathbf{\Omega}^{-1}\mathbf{B}^T\mathbf{X}^T\mathbf{X}\mathbf{B}\mathbf{\Omega}^{-1}\right)\mathbf{D}_k & 
    -\mathbf{D}_k^T\left(\mathbf{\Omega}^{-1}\otimes (\mathbf{\Omega}^{-1}\mathbf{B}^T\mathbf{X}^T\mathbf{X}) \right)\\
    -(\mathbf{D}_k^T\left(\mathbf{\Omega}^{-1}\otimes (\mathbf{\Omega}^{-1}\mathbf{B}^T\mathbf{X}^T\mathbf{X}) \right))^T & \mathbf\Omega^{-1} \otimes  \mathbf{X}^T \mathbf{X}
    \end{matrix}
    \right].
\end{equation}

Having the Hessian of the log likelihood (the negative Fisher information in \eqref{eqn:FI}), we only need the Hessian of the log prior for each of the priors under study to obtain the negative Hessian of the log posterior, and thus, the Laplace approximation of the posterior precision matrix of $\mathbf B$ and $\mathbf \Omega$ which can be written in matrix form:
\begin{equation}
\left[
\begin{matrix}
\mathbf A & \mathbf G\\
\mathbf C & \mathbf D\\
\end{matrix}
\right].
\label{matrixform}
\end{equation}
Given that this matrix corresponds to the Laplace-approximated \textit{posterior precision}, its inverse corresponds to the Laplace-approximated \textit{posterior covariance} matrix of $\mathbf B$ and $\mathbf \Omega$:
\begin{equation*}
\left[
\begin{matrix}
\mathbf A & \mathbf G\\
\mathbf C & \mathbf D\\
\end{matrix}
\right]^{-1}
=
\left[
\begin{matrix}
(\mathbf A-\mathbf G\mathbf D^{-1}\mathbf C)^{-1} & -\mathbf A^{-1}\mathbf G(\mathbf D-\mathbf C\mathbf A^{-1}\mathbf G)^{-1}\\
-\mathbf D^{-1}\mathbf C(\mathbf A-\mathbf G\mathbf D^{-1}\mathbf C)^{-1} & (\mathbf D-\mathbf C\mathbf A^{-1}\mathbf G)^{-1}\\
\end{matrix}
\right].
\end{equation*}
The block matrix $(\mathbf A-\mathbf G\mathbf D^{-1}\mathbf C)^{-1}$ is denoted the Schur complement of $\mathbf A$ \citep{prasolov1994problems} and it corresponds to the Laplace approximation of the \textit{marginal posterior covariance} matrix of $\mathbf \Omega$. Its inverse $(\mathbf A-\mathbf G\mathbf D^{-1}\mathbf C)$ is then the Laplace approximation of the \textit{marginal posterior precision} matrix of $\mathbf \Omega$, our optimality criterion to address the question of optimal Bayesian experimental design for each of the priors. As mentioned before, we focus on the marginal precision matrix of $\mathbf \Omega$ because we want to account for $\mathbf B$, but only as a nuisance parameter.

Our goal for the remaining of this section is to obtain the matrix $(\mathbf A-\mathbf G\mathbf D^{-1}\mathbf C)$ for each of the four priors under study with $\mathbf A, \mathbf G, \mathbf C, \mathbf D$ coming from the negative of the Hessian of the log posterior.
This matrix $(\mathbf A-\mathbf G\mathbf D^{-1}\mathbf C)$ is our optimality criterion in the Bayesian experimental design and will allow us to find the optimal design matrix $\mathbf X$ for each prior.

Finally, we note that the Laplace approximation is only valid for posterior densities that are log concave. This is trivially true for the case of the flat prior because the likelihood model is log concave. We prove log concavity of the other three priors under study in the Appendix.

\subsection{Flat Prior}

As mentioned, the flat prior is trivially log concave because the likelihood model (Gaussian chain graph model) is log concave. Thus, we can obtain the Laplace approximation of the posterior precision matrix of $\mathbf B$ and $\mathbf \Omega$ as the negative Hessian of the log likelihood (the Fisher information in  \eqref{eqn:FI}) which can be written as the matrix in  \eqref{matrixform} with
\begin{align*}
 \mathbf A &= \mathbf{D}_k^T\left(\frac{n}{2} \mathbf{\Omega}^{-1}\otimes \mathbf{\Omega}^{-1}+\mathbf{\Omega}^{-1}\otimes \mathbf{\Omega}^{-1}\mathbf{B}^T\mathbf{X}^T\mathbf{X}\mathbf{B}\mathbf{\Omega}^{-1}\right)\mathbf{D}_k, \\
 \mathbf G &=\mathbf C^T=-\mathbf{D}_k^T\left(\mathbf{\Omega}^{-1}\otimes (\mathbf{\Omega}^{-1}\mathbf{B}^T\mathbf{X}^T\mathbf{X}) \right),\\
 \mathbf D &=\mathbf\Omega^{-1} \otimes  \mathbf{X}^T \mathbf{X}.
\end{align*}

Then, the Laplace approximation of the marginal posterior precision matrix of $\mathbf \Omega$ is given by the inverse of the Schur complement of $\mathbf A$: $(\mathbf A-\mathbf G\mathbf D^{-1}\mathbf C)$, typically denoted as $\mathbf \Omega | I(\mathbf \Omega, \mathbf B)$:

\begin{align*}
\mathbf \Omega | I(\mathbf \Omega, \mathbf B)=&\mathbf{D}_k^T\left(\frac{n}{2} \mathbf{\Omega}^{-1}\otimes \mathbf{\Omega}^{-1}+\mathbf{\Omega}^{-1}\otimes \mathbf{\Omega}^{-1}\mathbf{B}^T\mathbf{X}^T\mathbf{X}\mathbf{B}\mathbf{\Omega}^{-1}\right)\mathbf{D}_k\\
&-\mathbf{D}_k^T\left(\mathbf{\Omega}^{-1}\otimes (\mathbf{\Omega}^{-1}\mathbf{B}^T\mathbf{X}^T\mathbf{X}) \right)\left[\mathbf{\Omega}\otimes(\mathbf{X}^T \mathbf{X})^{-1}\right]\left[(\mathbf{D}_k^T\left(\mathbf{\Omega}^{-1}\otimes \mathbf{\Omega}^{-1}\mathbf{B}^T\mathbf{X}^T\mathbf{X})\right) \right]^T\\
=&\frac{n}{2}\mathbf{D}_k^T\left( \mathbf{\Omega}^{-1}\otimes \mathbf{\Omega}^{-1}\right)\mathbf{D}_k.
\end{align*}

The next step in Bayesian experimental design would be to find a design matrix $\mathbf X$ that maximizes our optimality criterion $\mathbf \Omega | I(\mathbf \Omega, \mathbf B)$. We note, however, that $\mathbf \Omega | I(\mathbf \Omega, \mathbf B)$ is not a function of $\mathbf X$, and thus, optimal experimental design cannot be performed with this optimality criterion.
We also note that under the setting of $\mathbf B$ completely known, the Fisher information matrix becomes the upper left block in  \eqref{eqn:FI} (matrix $\mathbf A$) which indeed contains $\mathbf X$, and thus, experimental design is possible for the known $\mathbf B$ case. This remark hints at the possibility that prior knowledge from $\mathbf B$ could help infer $\mathbf \Omega$ as will be confirmed in the next sections.

\subsection{Standard Conjugate Prior: Normal-Wishart}

The standard conjugate prior for a Gaussian chain graph model is the Normal-Wishart family: 
\begin{equation}
    \begin{aligned}
    \mathbf \Omega|\lambda, \mathbf \Phi &\sim W_k(\lambda, \mathbf \Phi^{-1})\\
    \vect (\mathbf B)|\mathbf \Omega,\mathbf B_0,\mathbf \Lambda &\sim N(\vect(\mathbf B_0\mathbf \Omega),\mathbf{\Omega}\otimes \mathbf \Lambda)
    \end{aligned}
\end{equation}
where $\mathbf \Phi\in \mathbb{R}^{k\times k}$ is positive definite, $\lambda$ is a scalar, $\mathbf{B}_0 \in \mathbb{R}^{p \times k}$, and $\mathbf \Lambda\in \mathbb{R}^{p\times p}$ represents the uncertainty on $\mathbf{B}$.
Then, the posterior distribution is given by 
\begin{equation*}
    \begin{aligned}
    \mathbf \Omega| \mathbf Y,\mathbf X, \lambda, \mathbf \Phi &\sim W_k(\lambda+n,\hat{\mathbf \Phi}^{-1})\\
    \vect (\mathbf B)|\mathbf \Omega,\mathbf Y,\mathbf X, \mathbf B_0, \mathbf \Lambda &\sim N(\vect((\mathbf \Lambda^{-1}+\mathbf X^T\mathbf X)^{-1}(\mathbf{B}_0^T \mathbf \Lambda^{-1}+\mathbf Y^T \mathbf X) \mathbf \Omega),\mathbf{\Omega}\otimes(\mathbf \Lambda^{-1}+\mathbf X^T\mathbf X)^{-1})
    \end{aligned}
\end{equation*}
where 
$\hat{\mathbf \Phi}=\mathbf \Phi+\mathbf{Y}^T\mathbf{Y}+\mathbf{B}_0^T\mathbf{\Lambda}^{-1}\mathbf{B}_0-(\mathbf{B}_0^T\mathbf{\Lambda}^{-1}+\mathbf{Y}^T\mathbf{X})(\mathbf{X}^T\mathbf{X}+\mathbf{\Lambda}^{-1})^{-1}(\mathbf{B}_0^T\mathbf{\Lambda}^{-1}+\mathbf{Y}^T\mathbf{X})^T$. We show that this matrix is positive definite in Appendix \ref{sec:posdef}.

To use the Laplace approximation for the posterior precision, the posterior distribution needs to be log concave. The Normal-Wishart posterior is log concave under $\frac{1}{2}(\lambda-k-p-1)\ge k/2$ (Appendix \ref{logNW}), so we can use the Laplace approximation to approximate the posterior precision as the negative Hessian of the log posterior which, in turn, is the sum of the Hessian of the log likelihood  \eqref{eqn:FI} and the Hessian of the log prior which is given by:
\begin{equation}
\begin{aligned}
\frac{\partial^2}{\partial\vect{(\mathbf B)}\partial\vect{(\mathbf B)}^T} \log p(\mathbf{\Omega,B}) &=-\mathbf \Omega^{-1}\otimes \mathbf \Lambda^{-1}\\
\frac{\partial^2}{\partial\vech{(\mathbf \Omega)} \partial\vech{(\mathbf \Omega)}^T} \log p(\mathbf{\Omega,B})&=-\mathbf D_k^T\left(\mathbf \Omega^{-1}\otimes \left(\frac{1}{2}(\lambda-k-p-1) \mathbf \Omega^{-1} + \mathbf \Omega^{-1}(\mathbf B^T \mathbf \Lambda^{-1}\mathbf B)\mathbf \Omega^{-1} \right) \right)\mathbf D_k\\
\frac{\partial^2}{\partial\vech{(\mathbf \Omega)}\partial\vect{(\mathbf B)}^T} \log p(\mathbf{\Omega,B})&=\mathbf D_k^T\left(\mathbf \Omega^{-1}\otimes \left( \mathbf \Omega^{-1}\mathbf B^T \mathbf \Lambda^{-1} \right)\right).\\
\end{aligned}
\label{eqn:hess_wish}
\end{equation}

The Hessian of the log posterior can then be written as the matrix in  \eqref{matrixform} with
\begin{align*}
    \mathbf A
    =& \mathbf D_k^T\left(\mathbf \Omega^{-1}\otimes \left(\alpha \mathbf \Omega^{-1} + \mathbf \Omega^{-1}(\mathbf B^T \mathbf \Lambda^{-1}\mathbf B)\mathbf \Omega^{-1} \right) \right)\mathbf D_k \\
    &+ \mathbf{D}_k^T\left(\frac{n}{2} \mathbf{\Omega}^{-1}\otimes \mathbf{\Omega}^{-1}+\mathbf{\Omega}^{-1}\otimes \mathbf{\Omega}^{-1}\mathbf{B}^T\mathbf{X}^T\mathbf{X}\mathbf{B}\mathbf{\Omega}^{-1}\right)\mathbf{D}_k \\
    =& \mathbf D_k^T\left[\mathbf \Omega\otimes \left( (\frac{n}{2}+\alpha )\mathbf \Omega^{-1} + \mathbf \Omega^{-1}(\mathbf B^T\mathbf X^T\mathbf{XB}+\mathbf B^T\Lambda^{-1}B)\mathbf \Omega^{-1}  \right)\right]\mathbf D_k,\\
    \mathbf G
    =& \mathbf C^T
    = -\mathbf D_k^T\left(\mathbf \Omega^{-1}\otimes \left( \mathbf \Omega^{-1}\mathbf B^T \mathbf \Lambda^{-1} \right)\right)-\mathbf{D}_k^T\left(\mathbf{\Omega}^{-1}\otimes (\mathbf{\Omega}^{-1}\mathbf{B}^T\mathbf{X}^T\mathbf{X}) \right)\\
    =& -\mathbf D_k^T\left(\mathbf \Omega^{-1}\otimes \left( \mathbf \Omega^{-1}\mathbf B^T (\mathbf{X}^T \mathbf{X}+\mathbf \Lambda^{-1}) \right)\right),\\
    \mathbf D
    =& \mathbf \Omega^{-1}\otimes {(\mathbf{X}^T \mathbf{X} + \mathbf \Lambda^{-1})}.
\end{align*}

Then, the Laplace approximation of the marginal posterior precision matrix of $\mathbf \Omega$ is given by the inverse of the Schur complement of $\mathbf A$: $(\mathbf A-\mathbf G\mathbf D^{-1}\mathbf C)$, typically denoted as $\mathbf \Omega | I(\mathbf \Omega, \mathbf B)$:
\begin{equation}
\label{eqn:Wishart_postprecision}
\begin{aligned}
\mathbf{\Omega}|I(\mathbf \Omega,\mathbf{B})=&\mathbf D_k^T\left[\mathbf \Omega^{-1}\otimes \left( \left(\frac{n}{2}+\alpha \right)\mathbf \Omega^{-1} + \mathbf \Omega^{-1}(\mathbf B^T\mathbf X^T\mathbf{XB}+\mathbf B^T \mathbf \Lambda^{-1} \mathbf B)\mathbf \Omega^{-1}  \right)\right]\mathbf D_k\\
&-\mathbf{D}_k^T\left(\mathbf{\Omega}^{-1}\otimes (\mathbf{\Omega}^{-1}\mathbf{B}^T(\mathbf{X}^T\mathbf{X}+\mathbf \Lambda^{-1})) \right) \\
& \left[\mathbf{\Omega}\otimes(\mathbf{X}^T \mathbf{X}+\mathbf \Lambda^{-1})^{-1}\right]\left[(\mathbf{D}_k^T\left(\mathbf{\Omega}^{-1}\otimes \mathbf{\Omega}^{-1}\mathbf{B}^T(\mathbf{X}^T\mathbf{X}+\mathbf \Lambda^{-1}))\right) \right]^T\\
=&\left(\frac{n}{2}+\alpha \right)\mathbf D_k^T\left[\mathbf \Omega^{-1}\otimes  \mathbf \Omega^{-1}\right]\mathbf D_k
\end{aligned}
\end{equation}
which is again not a function of design $\mathbf X$ (see Appendix \ref{sec:approx-conj} for details on the algebraic simplification). This implies again that we cannot find an optimal experimental design for the conjugate prior with the optimality criterion of the Laplace approximation of the marginal posterior precision of $\mathbf \Omega$.

In addition, if we use the Normal-Wishart conjugate prior, our prior knowledge is actually on the marginal regression coefficient $\tilde{\mathbf{B}} = \mathbf B \mathbf\Omega^{-1}$, not on $\mathbf B$.
That is, the Normal-Wishart prior does not identify parameters $\mathbf\Omega$ and $\mathbf B$ separately, rather it only has information about linear combinations of them. This is also evident by observing that if we take $\mathbf{\Lambda}\to 0$ (the uncertainty on $\mathbf{B}$), $\mathbf B$ is still not fully known. Instead, we would only know $\mathbf B_0\mathbf \Omega^{-1}=\tilde{\mathbf{B}}_0$ where $\mathbf{\Omega}$ is still random. Thus, when the uncertainty of $\mathbf B$ goes to zero, the conjugate Normal-Wishart prior does not reduce to the known $\mathbf B$ case which, in addition to not allowing optimal experimental design to infer $\mathbf \Omega$, make the conjugate prior a suboptimal prior for the Gaussian chain graph model when our focus is the estimation of $\mathbf \Omega$.

\subsection{Normal-Matrix Generalized Inverse Gaussian Prior}

The drawbacks on the Normal-Wishart conjugate prior for the estimation of $\mathbf \Omega$ motivate us to search for other prior alternatives.
We consider the
Matrix Generalized Inverse Gaussian distribution (MGIG) \citep{barndorff1982exponential,fazayeli2016matrix} for $\mathbf \Omega$ to define the Normal-MGIG prior
which is not conjugate for $\mathbf B$, but yields a MGIG posterior for $\mathbf\Omega$ that can be sampled via importance sampling \citep{fazayeli2016matrix}. 

The Normal-MGIG prior is given by
\begin{equation}
\begin{aligned}
\mathbf{\Omega}|\lambda, \mathbf \Psi, \mathbf \Phi & \sim MGIG(\lambda,\mathbf \Psi,\mathbf \Phi)\\
\vect(\mathbf{B})|\mathbf \Omega, \mathbf B_0, \mathbf \Lambda  &\sim N(\vect(\mathbf{B}_0),\mathbf{\Omega}\otimes \mathbf \Lambda)
\end{aligned}
\end{equation}
where $\mathbf \Psi$, $\mathbf \Phi\in \mathbb R^{k\times k}$ are positive definite while $\lambda$ is a scalar. $\mathbf{B}_0\in \mathbb R^{p\times k}$ is the mean of $\mathbf B$ and $\mathbf \Lambda\in \mathbb R^{p\times p}$ is the uncertainty on $\mathbf{B}$.
Then, the posterior distribution is proportional to 
\begin{equation}
\label{eqn:prior_good}
\begin{aligned}
p(\mathbf{\Omega},\mathbf{B}|\mathbf{Y},\mathbf{X},\theta)\propto&|\mathbf{\Omega}|^{\frac{n}{2}}\exp\left( \tr(\mathbf{Y}^T\mathbf{X}\mathbf{B})-\frac{1}{2} \tr(\mathbf{Y}^T\mathbf{Y}\mathbf{\Omega})- \frac{1}{2} \tr(\mathbf{B}^T\mathbf{X}^T\mathbf{X}\mathbf{B}\mathbf{\Omega}^{-1})\right)\\
&\times |\mathbf{\Omega}|^{-\frac{p}{2}}\exp\left( -\frac{1}{2}\tr([\mathbf{B}-\mathbf{B}_0]^T\mathbf{\Lambda}^{-1}[\mathbf{B}-\mathbf{B}_0]\mathbf{\Omega}^{-1}) ) \right)\\
&\times |\mathbf{\Omega}|^{\lambda-\frac{k+1}{2}}\exp(-\frac{1}{2} \tr[\mathbf \Psi\mathbf{\Omega}^{-1}]-\frac{1}{2}\tr[\mathbf \Phi\mathbf{\Omega}])\\
\propto& |\mathbf{\Omega}|^{-\frac{p}{2}}\exp\left( -\frac{1}{2} \tr(-2(\mathbf{\Omega}^{-1}\mathbf{B}_0^T\mathbf{\Lambda}^{-1}+\mathbf{Y}^T\mathbf{X})\mathbf{B}+\mathbf{B}^T(\mathbf{\Lambda}^{-1}+\mathbf{X}^T\mathbf{X})\mathbf{B}\mathbf{\Omega}^{-1}) ) \right)\\
&\times \exp\left(-\frac{1}{2} \tr\left( (\mathbf{\Omega}^{-1}\mathbf{B}_0^T\mathbf{\Lambda}^{-1}+\mathbf{Y}^T\mathbf{X})(\mathbf{X}^T\mathbf{X}+\mathbf{\Lambda}^{-1})^{-1}(\mathbf{\Omega}^{-1}\mathbf{B}_0^T\mathbf{\Lambda}^{-1}+\mathbf{Y}^T\mathbf{X})^T\mathbf{\Omega}\right)\right)\\
&\times |\mathbf{\Omega}|^{\lambda+\frac{n}{2}-\frac{k+1}{2}}\exp(-\frac{1}{2} \tr((\mathbf \Phi+\mathbf{Y}^T\mathbf{Y}-\mathbf{Y}^T\mathbf{X}(\mathbf{X}^T\mathbf{X}+\mathbf{\Lambda}^{-1})^{-1}\mathbf{X}^T\mathbf{Y})\mathbf{\Omega}))\\
&\times\exp(-\frac{1}{2} \tr((\mathbf \Psi+\mathbf{B}_0^T\mathbf{\Lambda}^{-1}\mathbf{B}_0-\mathbf{B}_0^T\mathbf{\Lambda}^{-1}(\mathbf{X}^T\mathbf{X}+\mathbf{\Lambda}^{-1})^{-1}\mathbf{\Lambda}^{-1}\mathbf{B}_0)\mathbf{\Omega}^{-1}))
\end{aligned}
\end{equation}
where we use $\theta$ to denote all hyper-parameters in the prior and thus, we get
\begin{equation*}
    \begin{aligned}
    \mathbf{\Omega}|\mathbf{Y},\mathbf{X},\theta&\sim MGIG(\lambda+\frac{n}{2},\hat{\mathbf \Psi},\hat{\mathbf \Phi}) \\
    \vect(\mathbf{B})|\mathbf{\Omega},\mathbf{Y},\mathbf{X},\theta&\sim N\left( \mathbf{\Omega}\otimes (\mathbf{X}^T\mathbf{X}+\mathbf{\Lambda}^{-1})^{-1}\vect(\mathbf{X}^T\mathbf{Y}+\mathbf{\Lambda}^{-1}\mathbf{B}_0\mathbf{\Omega}^{-1}), \mathbf{\Omega}\otimes(\mathbf{X}^T\mathbf{X}+\mathbf{\Lambda}^{-1})^{-1} \right)
    \end{aligned}
\end{equation*}
where $\hat{\mathbf \Psi}=\mathbf \Psi+\mathbf{B}_0^T\mathbf{\Lambda}^{-1}\mathbf{B}_0-\mathbf{B}_0^T\mathbf{\Lambda}^{-1}(\mathbf{X}^T\mathbf{X}+\mathbf{\Lambda}^{-1})^{-1}\mathbf{\Lambda}^{-1}\mathbf{B}_0$ and $\hat{\mathbf \Phi}=\mathbf \Phi+\mathbf{Y}^T\mathbf{Y}-\mathbf{Y}^T\mathbf{X}(\mathbf{X}^T\mathbf{X}+\mathbf{\Lambda}^{-1})^{-1}\mathbf{X}^T\mathbf{Y}$. 
We show that these matrices are positive definite in Appendix \ref{sec:posdef}.

To the best of our knowledge, the Normal-MGIG prior has not been used for the Gaussian chain graph model, and thus, we prove here some of its properties in the Appendix \ref{propNMGIG}: we show that the MGIG prior is conjugate for the case of known $\mathbf B$ (Proposition \ref{conjB}), that it is log concave under certain conditions (Proposition \ref{logconcave}), that it is unimodal for the case of unknown $\mathbf B$ (Proposition \ref{unimod}), and that its limiting case is indeed the case of known $\mathbf B$ (Remark \ref{NMGIG-limit}).

The Normal-MGIG posterior is log concave under $\lambda-\frac{k+p+1}{2}\ge \frac{p}{2}$ (Proposition \ref{logconcave} in the Appendix), so we can use the Laplace approximation to approximate the posterior precision as the negative Hessian of the log posterior which, in turn, is the sum of the Hessian of the log likelihood \eqref{eqn:FI} and the Hessian of the log prior which has the following form:
\begin{equation}
    \begin{aligned}
    \frac{\partial^2}{\partial\vect(\mathbf B)\partial\vect(\mathbf B)^T}\log p(\mathbf \Omega, \mathbf B)=&-\mathbf \Omega^{-1}\otimes \mathbf \Lambda^{-1}\\
    \frac{\partial^2}{\partial\vech(\mathbf \Omega)\partial\vech(\mathbf \Omega)^T}\log p(\mathbf \Omega, \mathbf B) =& \mathbf D_k^T\left(\left(\lambda-\frac{k+p+1}{2}\right) \mathbf \Omega^{-1}\otimes \mathbf \Omega^{-1}\right)\mathbf D_k\\
     &+\mathbf D_k^T\left(\mathbf \Omega^{-1}\otimes \left(\mathbf \Omega^{-1}[(\mathbf B-\mathbf B_0)^T\mathbf \Lambda^{-1}(\mathbf B-\mathbf B_0)+\mathbf \Psi]\mathbf \Omega^{-1}\right)\right)\mathbf D_k\\
    \frac{\partial^2}{\partial\vect(\mathbf B)\partial\vech(\mathbf \Omega)^T}\log p(\mathbf \Omega, \mathbf B)=&-\mathbf D_k^T\left(\mathbf \Omega^{-1}\otimes (\mathbf \Omega^{-1}(\mathbf B-\mathbf B_0)^T\mathbf \Lambda^{-1}) \right).
    \end{aligned}
    \label{eqn:prior_good_Hessian}
\end{equation}
The Hessian of the log posterior can then be written as the matrix in  \eqref{matrixform} with
\begin{align*}
    \mathbf A=& \mathbf D_k^T\left(\alpha \mathbf \Omega^{-1}\otimes \mathbf \Omega^{-1}\right)\mathbf D_k
     +\mathbf D_k^T\left(\mathbf \Omega^{-1}\otimes \left(\mathbf \Omega^{-1}[(\mathbf B-\mathbf B_0)^T\mathbf \Lambda^{-1}(\mathbf B-\mathbf B_0)+\mathbf \Psi]\mathbf \Omega^{-1}\right)\right)\mathbf D_k \\
    &+ \mathbf{D}_k^T\left(\frac{n}{2} \mathbf{\Omega}^{-1}\otimes \mathbf{\Omega}^{-1}+\mathbf{\Omega}^{-1}\otimes \mathbf{\Omega}^{-1}\mathbf{B}^T\mathbf{X}^T\mathbf{X}\mathbf{B}\mathbf{\Omega}^{-1}\right)\mathbf{D}_k \\
    =&\mathbf D_k^T\left[\mathbf \Omega^{-1}\otimes \left( (\frac{n}{2}+\alpha )\mathbf \Omega^{-1} + \mathbf \Omega^{-1}(\mathbf B^T\mathbf X^T\mathbf{XB}+(\mathbf B-\mathbf B_0)^T \mathbf \Lambda^{-1}(\mathbf B-\mathbf B_0)+\mathbf \Psi)\mathbf \Omega^{-1}  \right)\right]\mathbf{D}_k, \\
    \mathbf G=&\mathbf C^T
    = -\mathbf D_k^T\left(\mathbf \Omega^{-1}\otimes (\mathbf \Omega^{-1}(\mathbf B-\mathbf B_0)^T\mathbf \Lambda^{-1}) \right) -\mathbf{D}_k^T\left(\mathbf{\Omega}^{-1}\otimes (\mathbf{\Omega}^{-1}\mathbf{B}^T\mathbf{X}^T\mathbf{X}) \right) \\
    =& -\mathbf{D}_k^T\left(\mathbf{\Omega}^{-1}\otimes (\mathbf{\Omega}^{-1}(\mathbf{B}^T\mathbf{X}^T\mathbf{X}+(\mathbf B-\mathbf B_0)^T\mathbf \Lambda^{-1})) \right),\\
    \mathbf D
    =& \mathbf \Omega^{-1}\otimes {(\mathbf{X}^T \mathbf{X} + \mathbf \Lambda^{-1} )}.
\end{align*}

Then, the Laplace approximation of the marginal posterior precision matrix of $\mathbf \Omega$ is given by the inverse of the Schur complement of $\mathbf A$: $(\mathbf A-\mathbf G\mathbf D^{-1}\mathbf C)$, typically denoted as $\mathbf \Omega | I(\mathbf \Omega, \mathbf B)$:
\begin{equation}
\begin{aligned}
\mathbf{\Omega}|I(\mathbf \Omega,\mathbf{B})=&\mathbf D_k^T\left[\mathbf \Omega^{-1}\otimes \left( (\frac{n}{2}+\alpha )\mathbf \Omega^{-1} + \mathbf \Omega^{-1}(\mathbf B^T\mathbf X^T\mathbf{XB}+(\mathbf B-\mathbf B_0)^T \mathbf \Lambda^{-1}(\mathbf B-\mathbf B_0)+\mathbf \Psi)\mathbf \Omega^{-1}  \right)\right]\mathbf{D}_k\\
&-\mathbf{D}_k^T\left(\mathbf{\Omega}^{-1}\otimes (\mathbf{\Omega}^{-1}(\mathbf{B}^T\mathbf{X}^T\mathbf{X}+(\mathbf B-\mathbf B_0)^T\mathbf \Lambda^{-1})) \right)\left[\mathbf{\Omega}\otimes(\mathbf{X}^T \mathbf{X}+\mathbf \Lambda^{-1})^{-1}\right]\\
&\left[(\mathbf{D}_k^T\left(\mathbf{\Omega}^{-1}\otimes \mathbf{\Omega}^{-1}(\mathbf{B}^T\mathbf{X}^T\mathbf{X}+(\mathbf B-\mathbf B_0)^T\mathbf \Lambda^{-1}))\right) \right]^T\\
=&\mathbf D_k^T\left[\mathbf \Omega^{-1}\otimes\left( (\frac{n}{2}+\alpha)\mathbf \Omega^{-1}+\mathbf \Omega^{-1} \mathbf B_0^T(\mathbf \Lambda^{-1} - \mathbf \Lambda^{-1}(\mathbf X^T\mathbf X+\mathbf \Lambda^{-1})^{-1}\mathbf \Lambda^{-1})\mathbf B_0 \mathbf \Omega^{-1} \right) \right]\mathbf D_k\\
&+\mathbf D_k^T\left[\mathbf{\Omega}^{-1}\otimes(\mathbf \Omega^{-1}\mathbf \Psi\mathbf \Omega^{-1})\right]\mathbf D_k
\end{aligned}
\label{eqn:MGIG_postprecision}
\end{equation}
which is a function of design $\mathbf X$ (see Appendix \ref{appNMGIG} for details on the algebraic simplification). 
Note that the only term that involves the design matrix $\mathbf{X}$ is $\mathbf \Omega^{-1}\otimes\mathbf \Omega^{-1}\mathbf B_0^T(\mathbf \Lambda^{-1} - \mathbf \Lambda^{-1}(\mathbf X^T\mathbf X+\mathbf \Lambda^{-1})^{-1}\mathbf \Lambda^{-1})\mathbf B_0\mathbf \Omega^{-1}$
which is $\mathbf 0$ when $\mathbf X=\mathbf 0$ (i.e. no experiment), so we can understand this term as the information gain due to experiment. We can then consider the optimal design based on $( \mathbf \Lambda^{-1}(\mathbf X^T\mathbf X+\mathbf \Lambda^{-1})^{-1}\mathbf \Lambda^{-1})$ which is the only term that we have control over. For example, for a  D-optimal design, we want to maximize
$|\mathbf \Lambda^{-1}(\mathbf X^T \mathbf X+\mathbf \Lambda^{-1})^{-1}\mathbf \Lambda^{-1}|=\frac{1}{|\mathbf \Lambda|^2|\mathbf X^T\mathbf X+\mathbf \Lambda^{-1}|}$
which can be achieved by maximizing $|\mathbf X^T\mathbf X+\mathbf \Lambda^{-1}|$ which coincides with the usual Bayesian D-optimal for marginal regression coefficient \citep{chaloner1995design}. 

\subsubsection{Information bound under the Normal-MGIG prior}
\label{info-bound-NMGIG}

We now know that the experimental design has an effect on the estimation of $\mathbf \Omega$ under a Normal-MGIG prior by influencing its approximate posterior precision. However, it turns out that there is a bound on the information we can gain from the experiment ($\mathbf X$).

Recall that we denote the term $\mathbf \Omega^{-1}\otimes\mathbf \Omega^{-1}\mathbf B_0^T(\mathbf \Lambda^{-1} - \mathbf \Lambda^{-1}(\mathbf X^T\mathbf X+\mathbf \Lambda^{-1})^{-1}\mathbf \Lambda^{-1})\mathbf B_0\mathbf \Omega^{-1}$ the information gain from experiments.
First, we observe that the inequality $(\mathbf \Lambda^{-1} - \mathbf \Lambda^{-1}(\mathbf X^T\mathbf X+\mathbf \Lambda^{-1})^{-1}\mathbf \Lambda^{-1})\le \mathbf \Lambda^{-1}$
is true because the difference ($\mathbf \Lambda^{-1}(\mathbf X^T\mathbf X+\mathbf \Lambda^{-1})^{-1}\mathbf \Lambda^{-1}$) is a positive semi-definite matrix (as it takes a quadratic form with a positive definite matrix). 

By multiplying the inequality by $\mathbf \Omega^{-1} \mathbf B_0^T$ and $\mathbf{B}_0 \mathbf \Omega^{-1}$ left and right respectively, we get
\begin{align*}
    \label{eqn:information_gain}
    \mathbf \Omega^{-1} \mathbf B_0^T(\mathbf \Lambda^{-1} - \mathbf \Lambda^{-1}(\mathbf X^T\mathbf X+\mathbf \Lambda^{-1})^{-1}\mathbf \Lambda^{-1})\mathbf B_0 \mathbf \Omega^{-1} &\le \mathbf \Omega^{-1} \mathbf{B}_0^T \mathbf \Lambda^{-1}\mathbf{B}_0 \mathbf \Omega^{-1}
\end{align*}
where the term on the left is precisely the information that we can gain from non-zero experiments ($\mathbf X\ne \mathbf 0$) and this term is bounded by $\mathbf \Omega^{-1} \mathbf{B}_0^T \mathbf \Lambda^{-1}\mathbf{B}_0 \mathbf \Omega^{-1}$.
This means that when we try to find the optimal experimental design $\mathbf X$ that maximizes the Laplace approximation of the marginal posterior precision matrix of $\mathbf \Omega$, the only term that depends on $\mathbf X$ is bounded, and thus, there is a limit to how much can be gained by an optimal experimental design.

Next, we observe that this bound is sharp as the equality is achieved when  $\mathbf X^T \mathbf X\to\infty$. In addition, this inequality provides the intuition that the information gain due to the experiment is bounded by the product of the marginal effect of the experiment ($\mathbf B_0 \mathbf \Omega^{-1}$) and the prior certainty on the experiment's conditional effect ($\mathbf \Lambda^{-1}$). Thus, if our prior on the effect of the experiment is no effect on any nodes ($\mathbf B_0=\mathbf 0$), we will gain no information from the experiment.
We can re-write the bound in terms of the marginal regression coefficient $\tilde{\mathbf{B}}_0 = \mathbf B_0 \mathbf \Omega^{-1}$: $\mathbf \Omega^{-1} \mathbf B_0^T(\mathbf \Lambda^{-1} - \mathbf \Lambda^{-1}(\mathbf X^T\mathbf X+\mathbf \Lambda^{-1})^{-1}\mathbf \Lambda^{-1})\mathbf B_0 \mathbf \Omega^{-1}\le \tilde{\mathbf{B}}_0^T \mathbf \Lambda^{-1}\tilde{\mathbf{B}}_0$.

In conclusion, given the information bound, the experimental design $\mathbf X$ is not as important as the prior knowledge on the experiment's effect. To increase the information bound, there are two directions: either having a large prior marginal effect ($\tilde{\mathbf{B}}_0$) or a small uncertainty on the conditional regression effects ($\mathbf \Lambda$). That is, the most helpful experiments are the ones with large marginal effects on nodes with well-known conditional effects. More on practical advice for domain scientists in the Discussion (Section \ref{sec:discussion}).

\subsection{General Independent Prior}

We now investigate whether the information bound is specific to the Normal-MGIG case or whether it exists in the general case when we have a prior distribution on $\mathbf{B}$ independent of $\mathbf \Omega$ assuming that the prior distribution is log concave.

Let $\log p(\mathbf \Omega, \mathbf B)=f(\mathbf B)+g(\mathbf \Omega)$ be the log prior density where  $f(\mathbf{B})$ has a Hessian given by $-\mathbf \Lambda^{-1}\in \mathbb{R}^{kp\times kp}$ 
and $g(\mathbf{\Omega})$ has a Hessian (with respect to unique parameters of $\mathbf{\Omega}$) given by $-\mathbf \Psi\in \mathbb{R}^{\frac{k(k+1)}{2}\times \frac{k(k+1)}{2}}$. Then, the negative Hessian of the log posterior is the sum of the negative Hessian of the log prior and the negative Hessian from log likelihood \eqref{eqn:FI}:
\begin{equation*}
\label{eqn:Info_indep}
    \left[
    \begin{matrix}
    \mathbf{D}_k^T\left(\frac{n}{2} \mathbf{\Omega}^{-1}\otimes \mathbf{\Omega}^{-1}+\mathbf{\Omega}^{-1}\otimes \mathbf{\Omega}^{-1}\mathbf{B}^T\mathbf{X}^T\mathbf{X}\mathbf{B}\mathbf{\Omega}^{-1}\right)\mathbf{D}_k +\mathbf \Psi& 
    -\mathbf{D}_k^T\left(\mathbf{\Omega}^{-1}\otimes (\mathbf{\Omega}^{-1}\mathbf{B}^T\mathbf{X}^T\mathbf{X}) \right)\\
    -(\mathbf{D}_k^T\left(\mathbf{\Omega}^{-1}\otimes (\mathbf{\Omega}^{-1}\mathbf{B}^T\mathbf{X}^T\mathbf{X}) \right))^T & \mathbf\Omega^{-1} \otimes  \mathbf{X}^T \mathbf{X}+\mathbf \Lambda^{-1}
    \end{matrix}
    \right]
\end{equation*}
which we can write as the matrix in \eqref{matrixform} with
\begin{align*}
    \mathbf A &= \mathbf{D}_k^T\left(\frac{n}{2} \mathbf{\Omega}^{-1}\otimes \mathbf{\Omega}^{-1}+\mathbf{\Omega}^{-1}\otimes \mathbf{\Omega}^{-1}\mathbf{B}^T\mathbf{X}^T\mathbf{X}\mathbf{B}\mathbf{\Omega}^{-1}\right)\mathbf{D}_k+\mathbf \Psi, \\
    \mathbf G &= \mathbf C^T = -\mathbf{D}_k^T\left(\mathbf{\Omega}^{-1}\otimes (\mathbf{\Omega}^{-1}\mathbf{B}^T\mathbf{X}^T\mathbf{X}) \right), \\
    \mathbf D &= \mathbf\Omega^{-1} \otimes  \mathbf{X}^T \mathbf{X}+\mathbf \Lambda^{-1}.
\end{align*}

Then, the Laplace approximation of the marginal posterior precision matrix of $\mathbf \Omega$ is given by the inverse of the Schur complement of $\mathbf A$: $(\mathbf A-\mathbf G\mathbf D^{-1}\mathbf C)$, typically denoted as $\mathbf \Omega | I(\mathbf \Omega, \mathbf B)$:
\begin{equation*}
    \begin{aligned}
    \mathbf{\Omega}|I(\mathbf \Omega,\mathbf{B}) =& \mathbf{D}_k^T\left(\frac{n}{2} \mathbf{\Omega}^{-1}\otimes \mathbf{\Omega}^{-1}+\mathbf{\Omega}^{-1}\otimes \mathbf{\Omega}^{-1}\mathbf{B}^T\mathbf{X}^T\mathbf{X}\mathbf{B}\mathbf{\Omega}^{-1}\right)\mathbf{D}_k+\mathbf \Psi\\
    &-\mathbf{D}_k^T\left(\mathbf{\Omega}^{-1}\otimes (\mathbf{\Omega}^{-1}\mathbf{B}^T\mathbf{X}^T\mathbf{X}) \right)\left(\mathbf\Omega^{-1} \otimes  \mathbf{X}^T \mathbf{X}+\mathbf \Lambda^{-1}\right)^{-1}\left(\mathbf{\Omega}^{-1}\otimes (\mathbf{X}^T\mathbf{X}\mathbf{B}\mathbf{\Omega}^{-1}) \right)\mathbf{D}_k.
    \end{aligned}
\end{equation*}

To simplify the notation, we take $\mathbf{E}=\left(\mathbf{\Omega}^{-1}\otimes (\mathbf{\Omega}^{-1}\mathbf{B}^T\mathbf{X}^T\mathbf{X}) \right)$ and $\mathbf F=\mathbf\Omega^{-1} \otimes  \mathbf{X}^T \mathbf{X}$. 
It is simple to check that $\mathbf{EF}^{-1}=\mathbf{I}_{k}\otimes \mathbf \Omega^{-1}\mathbf B^T, \mathbf{F}^{-1}\mathbf{E}^T=\mathbf{I}_{k}\otimes \mathbf {B\Omega}^{-1}$ and $\mathbf{EF}^{-1}\mathbf{E}^{T} = \mathbf{\Omega}^{-1}\otimes \mathbf{\Omega}^{-1}\mathbf{B}^T\mathbf{X}^T\mathbf{X}\mathbf{B}\mathbf{\Omega}^{-1}$.
The Laplace approximation of the marginal posterior precision matrix of $\mathbf \Omega$ ($\mathbf{\Omega}|I(\mathbf \Omega,\mathbf{B})$) then becomes 
\begin{equation*}
    \begin{aligned}
    \mathbf{\Omega}|I(\mathbf \Omega,\mathbf{B}) =& \mathbf{D}_k^T\left(\frac{n}{2} \mathbf{\Omega}^{-1}\otimes \mathbf{\Omega}^{-1}+\mathbf{\Omega}^{-1}\otimes \mathbf{\Omega}^{-1}\mathbf{B}^T\mathbf{X}^T\mathbf{X}\mathbf{B}\mathbf{\Omega}^{-1}\right)\mathbf{D}_k+\mathbf \Psi\\
    &-\mathbf{D}_k^T\left(\mathbf{\Omega}^{-1}\otimes (\mathbf{\Omega}^{-1}\mathbf{B}^T\mathbf{X}^T\mathbf{X}) \right)\left(\mathbf\Omega^{-1} \otimes  \mathbf{X}^T \mathbf{X}+\mathbf \Lambda^{-1}\right)^{-1}\left(\mathbf{\Omega}^{-1}\otimes (\mathbf{X}^T\mathbf{X}\mathbf{B}\mathbf{\Omega}^{-1}) \right)\mathbf{D}_k\\
    =&\mathbf{D}_k^T\left(\frac{n}{2} \mathbf{\Omega}^{-1}\otimes \mathbf{\Omega}^{-1}+\mathbf{EF}^{-1}\mathbf E^T\right)\mathbf{D}_k+\mathbf \Psi -\mathbf{D}_k^T\mathbf{E} \left(\mathbf F+\mathbf \Lambda^{-1}\right)^{-1}\mathbf E ^T \mathbf{D}_k\\
    =&\mathbf{D}_k^T\left(\frac{n}{2} \mathbf{\Omega}^{-1}\otimes \mathbf{\Omega}^{-1}+\mathbf E(\mathbf F^{-1}-(\mathbf F+\mathbf \Lambda^{-1})^{-1})\mathbf E^T\right)\mathbf{D}_k+\mathbf \Psi.
    \end{aligned}
\end{equation*}

Note that the only term that involves $\mathbf X$ is $\mathbf E(\mathbf F^{-1}-(\mathbf F+\mathbf \Lambda^{-1})^{-1})\mathbf E^T$ which we can re-write by taking the Cholesky decomposition of $\mathbf \Lambda^{-1}=\mathbf{LL}^T$ and by \eqref{eqn:inverse_for_positive}
(with $\mathbf A=\mathbf{F}$, $\mathbf{P}=\mathbf I_{kp}$, $\mathbf{U}=\mathbf V^T=\mathbf{L}$) as 
\begin{equation}
\label{eqn:bound_indep}
\begin{aligned}
\mathbf E(\mathbf F^{-1}-(\mathbf F+\mathbf \Lambda^{-1})^{-1})\mathbf E^T&=\mathbf E(\mathbf F^{-1}-(\mathbf F+\mathbf{LL}^T)^{-1})\mathbf E^T\\
&=\mathbf E\mathbf F^{-1}\mathbf L(\mathbf I_{kp}+\mathbf L^T\mathbf F^{-1}\mathbf L)^{-1}\mathbf L^T\mathbf F^{-1}\mathbf E^T.
\end{aligned}
\end{equation}
Then, we have the following information bound (derivation in Appendix \ref{appGeneral}) 

\begin{equation*}
\begin{aligned}
\mathbf E(\mathbf F^{-1}-(\mathbf F+\mathbf \Lambda^{-1})^{-1})\mathbf E^T&=\mathbf E\mathbf F^{-1}\mathbf L(\mathbf I_{kp}+\mathbf L^T\mathbf F^{-1}\mathbf L)^{-1}\mathbf L^T(\mathbf E\mathbf F^{-1})^{T}\\
&\le \mathbf E\mathbf F^{-1}\mathbf \Lambda^{-1} (\mathbf E\mathbf F^{-1})^{T}\\
&=\left(\mathbf{I}_{k}\otimes \mathbf (\mathbf B\mathbf\Omega^{-1})^T\right)\mathbf \Lambda^{-1} \left(\mathbf{I}_{k}\otimes \mathbf (\mathbf B\mathbf\Omega^{-1})^T\right)^T.
\end{aligned}
\end{equation*}

This bound is similar to the Normal-MGIG case (Section \ref{info-bound-NMGIG}) in that it depends on the marginal correlation coefficients $\mathbf{B\Omega}^{-1}$ and is also not a function of $\mathbf X$. We observe that this bound is also sharp as one can achieve equality when $\mathbf{X}^T\mathbf{X}$ tends to infinite. In addition, this bound could guide the choice of experiments when the goal is the estimation of the precision matrix. Again, one should chose experiments that have large marginal effects 
and high certainty in the prior on the conditional effects. More on practical advice for domain scientists in the Discussion (Section \ref{sec:discussion}).

\section{Simulation Study}
\label{sec:simulation}

\revision{
In order for an experimental design to help infer network structure, such experiment should at least provide better results than not doing any experiment (a null experiment corresponding to $\mathbf{X}=0$). In addition, we evaluate a standard type of experiments denoted ``specific'' in which there is only effect to one of the nodes.}
For each experimental design (null, random, specific), we calculate the KL divergence between the prior and the posterior which represents the information gain for each experiment. For evaluation of the point estimation, we compare the performance of the different priors with the Stein's loss \citep{dey1985estimation} between
the maximum a posteriori (MAP) and the true value of $\mathbf \Omega$.

We simulate data under the 6 covariance structures in \citet{glasso} with $k=50$ responses and $p=50$ specific predictors ($\mathbf B=\mathbf I_{50}$). Each simulation is repeated 100 times.

\begin{itemize}
    \item Model 1: AR(1) model with $\sigma_{ij}=0.7^{|i-j|}$
    \item Model 2: AR(2) model with $\omega_{ii}=1$, $\omega_{i-1,i}=\omega_{i,i-1}=0.5$, $\omega_{i-2,i}=\omega_{i,i-2}=0.25$ for $i=1,\dots,k$
    \item Model 3: Block model with $\sigma_{ii}= 1$  for $i=1,\dots,k$, $\sigma_{ij}= 0.5$ for $1\le i\ne j\le k/2$, $\sigma_{ij}=0.5$ for $k/2 + 1\le i\ne j\le 10$ and $\sigma_{ij=0}$ otherwise.
    \item Model 4: Star model with every node connected to the first node, with $\omega_{ii}=1$, $\omega_{1,i}=\omega_{i,1}= 0.1$ for $i=1,\dots,k$, and $\omega_{ij}= 0$ otherwise.
    \item Model 5: Circle model with $\omega_{ii}= 2$, $\omega_{i-1,i}=\omega_{i,i-1}= 1$ for $i=1,\dots,k$, and $\omega_{1,j}=\omega_{j,1}= 0.9$ for $j=1,\dots,k$.
    \item Model 6: Full model with $\omega_{ii}= 2$ and $\omega_{ij}= 1$ for $i\ne j \in \{1,\dots,k\}$.
\end{itemize}

We take sample sizes ranging from 200 to 2200 increased by 50. 

For each covariance structure, we test two priors, each with two levels of uncertainty, \revision{and with and without bias in the prior of $\mathbf{B}$}. Namely, 
\begin{enumerate}
    \item Normal-Wishart prior with $\lambda=2k+2$, 
    $\mathbf \Phi=10^{-3} \mathbf I_k$ and two uncertainty levels of $\mathbf{B}$: i) $\mathbf \Lambda=10^{-3} \mathbf I_p$ (certain case) and ii) $\mathbf \Lambda=10^3 \mathbf I_p$ (uncertain case). 
    \item Normal-MGIG prior with $\lambda=k+1$, $\mathbf \Psi=\mathbf \Phi=10^{-3} \mathbf I_k$ and two uncertainty levels of $\mathbf{B}$: i) $\mathbf \Lambda=10^{-3}\mathbf I_{p}$ (certain case) and ii) $\mathbf \Lambda=10^3 \mathbf I_p$ (uncertain case).
\end{enumerate}
\revision{The biased case for the prior of $\mathbf{B}$ is given by the biased prior mean of $\mathbf B_0 = \mathbf B + \epsilon$ for $\epsilon \sim N(0,1)$. Note that we also test a case with a smaller bias and present these results in the Appendix.}
\revision{In addition, we test for midpoint uncertainty where $\mathbf \Lambda = 10 \mathbf I_{p}$ and $\mathbf \Lambda = 0.1 \mathbf I_{p}$ in the Appendix.} 

\subsection{Information Gain by Experiments}
\label{other_measures}

We evaluated how much the posterior distribution changes compared to the prior under different experimental settings. Intuitively, if the prior and the posterior are similar, then no much information is gained by the data or the experiment. \revision{However, simply evaluating the difference between prior and posterior is not enough since sample size surely has an effect. Thus, to better evaluate the effect of the design rather than the effect of the sample size, we need to choose a baseline experiment. We take a null design (design with $\mathbf{X}=0$) as such baseline because one would expect it to perform poorly. 
That is, we consider an experiment to provide useful information when it is better than a null experiment in the same settings, and thus, calculate the difference between the given design (random or specific) and the null design as our measure of performance shown in figures below. 
}

For each simulation at a given sample size, we calculate 1) the KL divergence between the prior and the posterior for a given experiment (random or specific), and 2) the KL divergence between the prior and posterior under the null experiment. The information gain due to experiment is then evaluated by taking difference of log KL divergence between the design (random or specific) and the null design.  \revision{Figure \ref{fig:star_kl} shows the results when simulating data under the covariance structure of Model 4 (star model). Points above 0 indicate more information gained by conducting a non-null experiment. We expect random experiments to do better in this case because the \textit{marginal} effect $\mathbf{B\Omega^{-1}}$ can be large. The results corresponding to all other covariance structures (with similar conclusions) can be found in the Appendix~\ref{app:other_graphs}.}

\begin{figure}[htp]
    \centering
    \includegraphics[width = 0.8\linewidth]{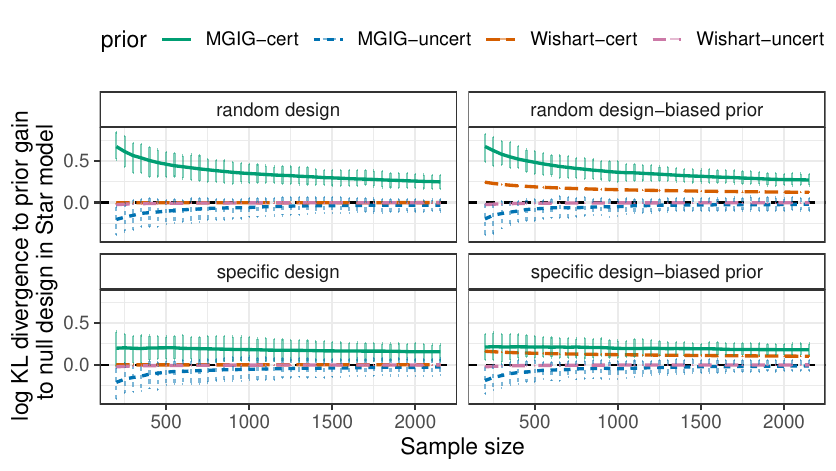}
    \caption{Difference in log KL divergence between prior and posterior comparing random experiment ($\mathbf X \ne \mathbf 0$) and specific experiment (diagonal $\mathbf X$) vs null experiment ($\mathbf X=\mathbf 0$) under a star model with 50 responses and 50 predictors, with and without bias on the prior of $\mathbf{B}$. Lines are averages over 100 repeats while error bars are 0.975 and 0.025 quantiles. The only case in which the experiment gains information compared to the null experiment is when we have certain and unbiased prior knowledge on $\mathbf B$ (Normal-MGIG certain case in green). \revision{With a biased prior, MGIG still has better information gain while certain Wishart also leads to better information gains compared to a null design.}}
    \label{fig:star_kl}
\end{figure}

Except for the case of Normal-MGIG prior with $\mathbf \Lambda=10^{-3} \mathbf I_p$ (certain case), the difference in information gain will eventually reach 0 for all other prior cases as the sample size increases, meaning that there will no longer be any information gain from doing an experiment (random nor specific) compared to no experiment at all.
While the Normal-MGIG prior with $\mathbf \Lambda=10^{-3} \mathbf I_p$ (certain case) stays at a distance from 0, this distance does not depend much on the sample size. \revision{With a biased prior, MGIG still shows better information gain while certain Wishart also leads to better information gains compared to a null design. More results on other types of biases can be found in the Appendix~\ref{app:less_biased}.}

\subsection{Performance on Point Estimation of $\mathbf \Omega$}

While KL divergence evaluates the information gain of the experiment, information gain does not imply better point estimates. To compare the performance of point estimation of $\mathbf \Omega$, we use the difference in Stein's loss of the experiment (random or specific) and the null design. The results are shown in Figure \ref{fig:simu}. Points below 0 indicate more accurate point estimates under an experimental design rather than under a null experiment.

We have a similar observation as in the information gain section. That is, except for the case of Normal-MGIG prior with $\mathbf \Lambda=10^{-3} \mathbf I_p$ (certain case), the Stein's loss ratio will eventually reach 0 for all other prior cases regardless of the experimental design (random or specific). While the Normal-MGIG prior with $\mathbf \Lambda=10^{-3} \mathbf I_p$ (certain case) stays at a distance from 0, this distance does not depend much on the sample size. 
\revision{While biased prior such as certain MGIG and certain Wishart can undermine our ability to get good point estimates from experiments. smaller biases do not have such a strong negative effect (see Appendix~\ref{app:less_biased}).}

\begin{figure}[htp]
    \centering
    \includegraphics[width = 0.8\linewidth]{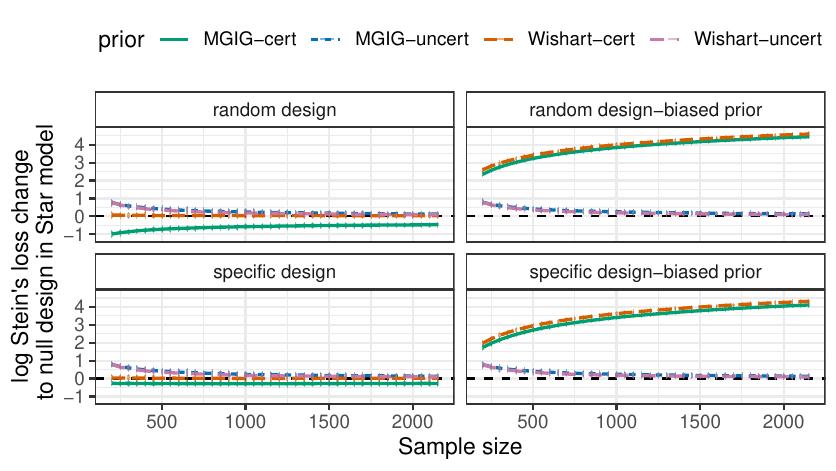}
    \caption{Difference in log Stein's loss of random experiment ($\mathbf X \ne \mathbf 0$) and specific experiment (diagonal $\mathbf X$) vs null experiment ($\mathbf X=\mathbf 0$) under Star models with 50 responses and 50 predictors, with and without biases on the prior of $\mathbf{B}$. Lines are averages over 100 repeats while error bars are 0.975 and 0.025 quantiles. The only case in which the random design mostly has lower Stein's loss is when we have certain and unbiased prior knowledge on $\mathbf B$ (Normal-MGIG certain case in green). All other unbiased prior cases eventually reach the zero line (no difference in MAP performance of $\mathbf \Omega$ compared to null experiment). \revision{With a biased prior, certain MGIG and certain Wishart lead to less accurate point estimates.}}
    \label{fig:simu}
\end{figure}

\section{Human Gut Microbiome Data}
\label{microbiomedata}

We revisit a similar comparison to the one in the motivating toy example (Section \ref{toy}) of the posterior distribution of partial correlations among responses under different priors and experiments. 
We use data from \citet{claesson2012gut} which collected fecal microbiota composition from 178 elderly subjects, together with subjects' residence type (in the community, day-hospital, rehabilitation or in long-term residential care) and diet (data at \citet{gut_dataset}) with the goal of understanding the interactions between microbes and environment via partial correlations.

We use the MG-RAST server \citep{meyer2008metagenomics} for profiling with an e-value of 5, 60\% identity, alignment length of 15 bp, and minimal abundance of 10 reads. Unclassified hits are not included in the analysis. Genus with more than 0.5\% relative abundance in more than 50 samples is selected as the focal genus and all other genus serve as the reference group. This yield 13 responses and 11 predictors (i.e. $p=11,k=14$)
We then fit a Gaussian chain graph model to the data.
Since we cannot design an experiment on this already collected data, to compare with a hypothetical null experiment, we draw a simulated sample from a Gaussian chain graph model whose regression coefficients and precision matrix are the MLE from the original data and a null experiment by definition has $\mathbf{X} = \mathbf{0}$.

Here, we focus on the partial correlation between \textit{Bacteroides}, one of the largest genera in gut and \textit{Clostridium}, a group known to be pathological. Figure \ref{fig:gut} shows the posterior distribution of this partial correlation under the two different priors (Normal-Wishart and Normal-MGIG) with three different uncertainty levels on $\mathbf B$: $\mathbf{\Lambda}=10^{-1}\mathbf{I}_p$, $10^0 \mathbf{I}_p$, $10^{-1}\mathbf{I}_p$ (shown as columns: 0.1, 1, 10) with $\mathbf{\Psi}= 0.01\mathbf{I}_p$ and $\mathbf{\Phi}=0.01\mathbf{I}_p$. 
Just as in the toy example (and in agreement with the expectation of our theory in Section \ref{priors}), only the Normal-MGIG prior results in lower uncertainty when an experiment is performed.

\begin{figure}
    \centering
    \includegraphics[width = \linewidth]{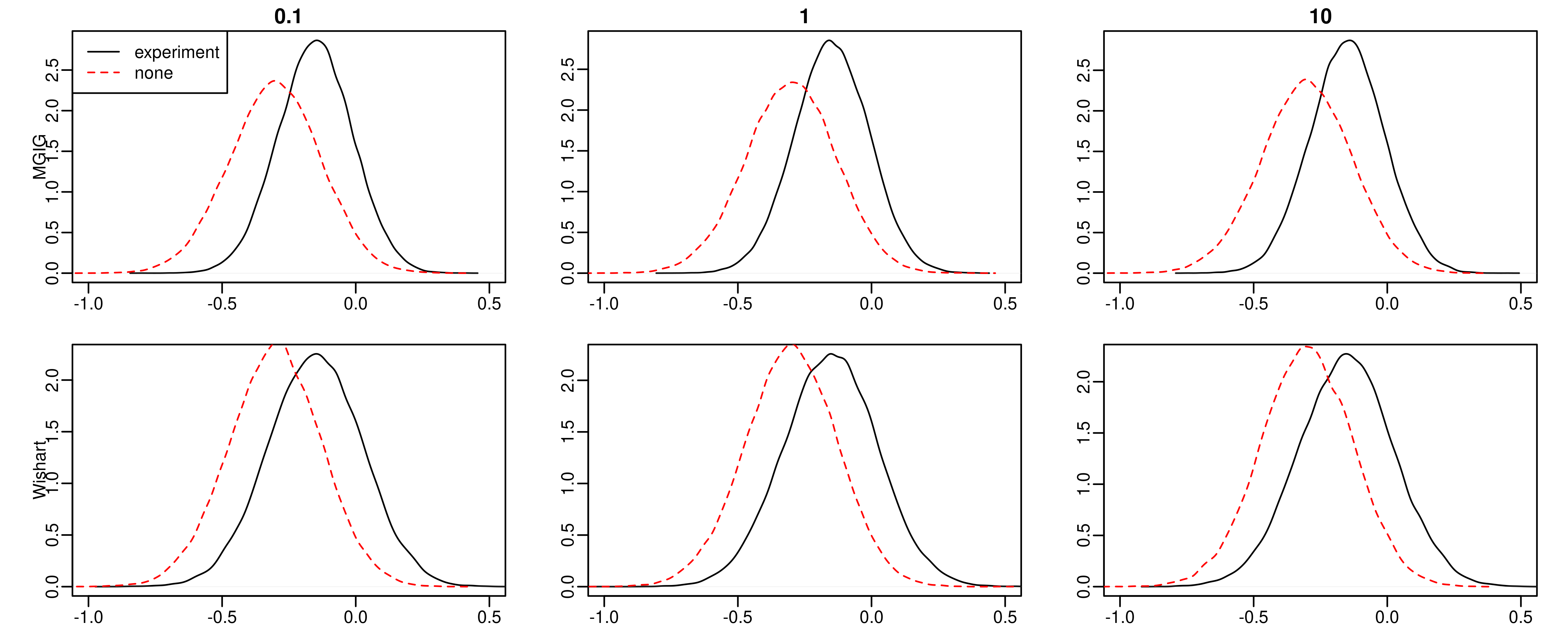}
    \caption{Posterior distribution for the partial correlation between \textit{Bacteroides}, one of the largest genera in gut and \textit{Clostridium}, a group known to be pathological. Rows correspond to the two priors: Normal-MGIG (top) and Normal-Wishart (bottom). Columns correspond to the certainty level on the prior of $\mathbf B$ ($\mathbf{\Lambda}=10^{-1}\mathbf{I}_p$, $10^0 \mathbf{I}_p$, $10^{-1}\mathbf{I}_p$; shown as 0.1, 1, 10 respectively).  We observe that only under the MGIG prior, the experiment reduces uncertainty in the posterior. }
    \label{fig:gut}
\end{figure}

\section{Discussion}
\label{sec:discussion}

Chain graph models are relevant in genomic, microbiome and ecological applications because they encode conditional dependence among responses and predictors and focus on the estimation of the precision matrix, an important parameter to understand interactions within microbial and ecological communities.
Here, we evaluated the effect of prior knowledge on conducting experiments to better estimate the precision matrix in a chain graph model. Using the Laplace approximation of the marginal posterior precision matrix of $\mathbf \Omega$ as the optimality criterion on experimental design settings, we proved theoretically that without prior knowledge that identifies $\mathbf B$ and $\mathbf \Omega$ separately (instead of $\mathbf{B \Omega}^{-1}$ combined), experiments provide no gain in knowledge for the estimation of $\mathbf \Omega$. That is, the Laplace approximation of the marginal posterior precision matrix of $\mathbf \Omega$ is not a function of $\mathbf X$.
We also showed a bound on the information gain under the Normal-MGIG prior which generalizes to the case of any independent priors. Our findings are highly relevant for domain scientists who aim to design optimal experimental designs to infer the precision matrix.

We further verified our theoretical conclusions using numerical simulations where we showed that without certain prior knowledge on $\mathbf{B}$, experiments provide nearly no information gain and there is not an increase performance on the estimation of $\mathbf \Omega$ either.
Furthermore, it is not enough for an experiment to be specific, the prior knowledge about this specificity is also needed (more examples below in Practical advice for domain scientists). 

\paragraph{Connections to multicollinearity.}
Chain graph models have a dependence property that is similar to multicollinearity in classical regression. Take the conditional distribution of the $q$th response node in sample $\mathbf Y_i \in \mathbb{R}^k$ with the design $\mathbf X_i \in \mathbb{R}^p$:
\begin{equation}
\label{eqn:cond_dist_chain}
	\begin{aligned}
	\left[Y_{qi}|\mathbf X_i= \mathbf x_i,\mathbf Y_{-q,i}= \mathbf y_{-q,i}\right]= \frac{1}{\omega_{qq}}\sum_{j=1}^p\beta_{jq}x_{ji}-\frac{1}{\omega_{qq}}\sum_{l\ne q} \omega_{ql} y_{li}+\epsilon_{qi}
	\end{aligned}
\end{equation}
where $\epsilon_{qi}\sim \mathcal N(0,1/\omega_{qq})$, $\beta_{jq}$ is the $(j,q)$ entry of the $\mathbf B$ matrix, $\omega_{qq}$ and $\omega_{ql}$ are the $(q,q)$ and $(q,l)$ entries in the $\mathbf \Omega$ matrix, and $\mathbf Y_{-q,i}$ corresponds to the vector of responses for sample $i$ without the $q$th response. Multicollinearity in this model arises given that the correlation between $\mathbf Y_l$ and $\mathbf X_j$ is 0 only if when the $(j,l)$ entry of $\mathbf{B\Omega}^{-1}$ is 0, which requires $\mathbf B_{j\cdot} \mathbf \Omega^{-1}_{\cdot l}=0$, which is difficult to hold for all $l,j$. Thus, in practice, we are most likely to have some intrinsic multicollinearity in chain graph models.
In univarite settings, in principle, we could design experiments to avoid problems caused by multicollinearity. When such experiments are hard to conduct, an alternative approach is to have informative priors on some of the parameters. For instance, if two predictors are collinear, the sum of the two respective regression coefficients can be easily identified, but not the individual ones. However, one individual coefficient can be identified if we have informative prior on the other. 
Take this intuition to chain graph where we have multicollinearty between $\omega$'s and $\beta$'s. In this case, prior knowledge on the regression coefficients $\mathbf{B}$ might actually help the estimation of $\mathbf{\Omega}$ under certain experimental conditions. 
Thus, our work to explore the interplay between experimental design and prior knowledge in chain graph models is also justified by the classical regression setting that has routinely used prior knowledge to infer parameters on cases when multicollinearity arises.

\paragraph{Future directions.} 
As mentioned, the difficulty of designing experiments for the estimation of the precision matrix on a chain graph model is similar to the multicollinearity problem in univariate regression. In both cases, a prior can help identify one set of parameters to better infer another set of parameters.
A natural question is whether this is also true in a general Gibbs measure with two-body interaction. For instance, 
the auto-logistic model (Ising model) can been used to infer networks and existing work has discussed the experimental design of this model when the effect of the treatment is completely known \citep{jiang2019active}. 
One interesting question is whether 
in order to design experiments effectively to infer the network among responses with this model, we need prior knowledge on the effect of the treatment. One difficulty when answering this question is the intractable normalizing constant or partition function in a general Gibbs measure. Some approximations of the partition function proposed by \citet{Wainwright2006} might be helpful to connect it with what we already know for the Gaussian case.

\paragraph{Practical advice for domain scientists.}
As we presented in both theoretical and simulation studies, for experiments to aid in the estimation of precision matrix under a chain graph model, the experiment should have 
large marginal effects, prior knowledge on the conditional effect of predictors on responses and high certainty on those conditional effects. For instance, if an experimenter wants to understand a microbial community, she could try different candidate experiments on the community to identify the treatment that alter the community the most (i.e. that has large marginal effects). Then, the experimenter should culture several of the species to evaluate the effect of those candidate treatments (i.e. gain prior knowledge of $\mathbf B$). By focusing on a some single species, the experimenter will (ideally) have high certainty on the conditional effects of some of the experiment.

Similarly, an experiment where the experimenter knocks out one gene and evaluates the reaction of another gene in order to infer the interaction between two genes 
is useful 
because it can affect two target genes (marginal effect) while we know (by assumption) it is specific to one of the genes (good prior knowledge of its conditional effect). While there is a keen interest in experiments that are specific (e.g. gene knockout), our theory shows that specificity (that is, the row in $\mathbf B$ has only one non-zero entry) is not necessary for the experiment to be useful in the inference of the precision matrix ($\mathbf \Omega$). However, specificity is helpful given that it is easier to obtain prior knowledge on specificity (we are certain about some entries in $\mathbf{B}$ being zero) than trying to obtain prior knowledge on multiple non-zero entries of $\mathbf{B}$.

For any given experiment, the experimenter has control over the design ($\mathbf X$) and the prior of the regression coefficients ($\mathbf B$). Our findings show that without prior knowledge of $\mathbf B$ on some single predictors, experiments produce zero information gain on the estimation of $\mathbf \Omega$. Our work draws attention to the importance of thorough analysis of priors and experimental design for domain scientists who aim to infer biological network structures from controlled experimental data.

\section*{Declarations}
\subsection*{Funding and/or Conflicts of interests/Competing interests}
This material is based upon work support by the National Institute of Food and Agriculture, United States Department of Agriculture, Hatch project 1023699.
This work was also supported by the Department of Energy [DE-SC0021016 to CSL].
There are no conflict of interests or competing interests to declare.

\bibliography{references.bib}

\newpage

\appendix

\renewcommand{\theequation}{A\arabic{equation}}
\renewcommand{\thesection}{A\arabic{section}}  
\renewcommand{\thefigure}{A\arabic{figure}}  
\renewcommand{\thesubsection}{A\arabic{subsection}} 
\renewcommand{\thesubsubsection}{A\arabic{subsubsection}} 
\setcounter{equation}{0}
\setcounter{figure}{0}
\setcounter{section}{0}
\setcounter{subsection}{0}
\setcounter{subsubsection}{0}

\section{Hessian of the likelihood function: Fisher information matrix}
\label{appendix-fi}

Consider the log likelihood of the Gaussian chain graph model $\mathbf Y \sim N(\mathbf{\Sigma}\mathbf{B^T X^T},\mathbf\Sigma)$:
\[
\ell=\frac{n}{2}\log(2\pi|\mathbf \Omega|)+\tr(\mathbf Y^T\mathbf X\mathbf B)-\frac{1}{2}\tr(\mathbf Y^T\mathbf Y \mathbf \Omega )-\frac{1}{2}\tr(\mathbf X^T\mathbf X \mathbf B \mathbf \Omega^{-1} \mathbf B^T ).
\]

Since $\mathbf \Omega$ is symmetric, there are fewer free parameters and we have a constraint. Following \citet{minka2000old,magnus2019matrix}, we use the duplication
matrix $\mathbf D_k$, a permutation matrix such that $\mathbf D_k\vect{\mathbf\Omega}=\vech(\mathbf\Omega)$ where $\vech(\mathbf\Omega)$ denote the vectorization of unique parameters in $\mathbf\Omega$ (upper triangular part in our case).

The first term has a Hessian of $-\frac{n}{2}\mathbf{D}_k\mathbf{D}_k^T(\mathbf \Omega^{-1}\otimes \mathbf\Omega^{-1})\mathbf{D}_k\mathbf{D}_k^T$ (Equation 121 in \citet{minka2000old}).

Let $l=\tr(\mathbf Y^T\mathbf X\mathbf B)-\frac{1}{2}\tr(\mathbf Y^T\mathbf Y \mathbf \Omega )-\frac{1}{2}\tr(\mathbf X^T\mathbf X \mathbf B \mathbf \Omega^{-1} \mathbf B^T )$. Following \citet{minka2000old}, the differential is
$dl=\tr(\mathbf{Y}^T\mathbf{X}d\mathbf{B})-\frac{1}{2}tr(\mathbf{Y}^T\mathbf{Y}d\mathbf{\Omega})+\frac{1}{2}\tr(\mathbf{\Omega}^{-1}\mathbf{B}^T\mathbf{X}^T\mathbf{X}\mathbf{B}\mathbf{\Omega}^{-1}d\mathbf{\Omega})-\tr(\mathbf{\Omega}^{-1}\mathbf{B}^T\mathbf{X}^T\mathbf{X}d\mathbf{B})$
and the second order differential is 
$d^2 l=-\tr(d\mathbf{\Omega}\mathbf{\Omega}^{-1}\mathbf{B}^T\mathbf{X}^T\mathbf{X}\mathbf{B}\mathbf{\Omega}^{-1}d\mathbf{\Omega}\mathbf{\Omega}^{-1}) +\tr(\mathbf{\Omega}^{-1}\mathbf{B}^T\mathbf{X}^T\mathbf{X}d\mathbf{B}\mathbf{\Omega}^{-1}d\mathbf{\Omega}) -\tr(\mathbf{X}^T\mathbf{X}d\mathbf{B}\mathbf{\Omega}^{-1}d\mathbf{B}^T)$.

Thus, the full Hessian is given by 
\begin{equation*}
    \begin{aligned}
    \frac{\partial^2 \ell}{\partial \vect(\mathbf{B})\partial \vect(\mathbf{B})^T}&=-\mathbf{\Omega}^{-1}\otimes \mathbf{X}^T\mathbf{X}\\
    \frac{\partial^2 \ell}{\partial \vech(\mathbf{\Omega})\partial \vech(\mathbf{\Omega})^T}&=-\mathbf{D}_k^T\left(\frac{n}{2} \mathbf{\Omega}^{-1}\otimes \mathbf{\Omega}^{-1}+\mathbf{\Omega}^{-1}\otimes \mathbf{\Omega}^{-1}\mathbf{B}^T\mathbf{X}^T\mathbf{X}\mathbf{B}\mathbf{\Omega}^{-1}\right)\mathbf{D}_k\\
    \frac{\partial^2 \ell}{\partial \vect(\mathbf{B})\partial \vech(\mathbf{\Omega})^T}&=\mathbf{D}_k^T\left(\mathbf{\Omega}^{-1}\otimes (\mathbf{\Omega}^{-1}\mathbf{B}^T\mathbf{X}^T\mathbf{X}) \right).
    \end{aligned}
\end{equation*}

\section{Log concavity of Normal-Wishart conjugate prior}
\label{logNW}

The Hessian of the log Normal-Wishart prior is given by  \eqref{eqn:hess_wish}.
We observe similarities between the Hessian of the log prior and the negative Fisher information matrix  \eqref{eqn:FI} which is negative definite.
Namely, the first and third partial derivatives in  \eqref{eqn:hess_wish} coincide with the lower right diagonal block and the off-diagonal block respectively of the negative Fisher information matrix \eqref{eqn:FI} of a sampling model in which we set the design matrix to be $\mathbf X^T=\chol(\mathbf \Lambda^{-1})$ with sample size $k$.

We also observe that the second partial derivative in  \eqref{eqn:hess_wish} has the extra term of  $(\frac{1}{2}(\lambda-k-p-1)- k/2)\mathbf \Omega^{-1}\otimes\mathbf \Omega^{-1}$ when compared to the upper left diagonal block of the negative Fisher information matrix  \eqref{eqn:FI} of a sampling model in which we set the design matrix to be $\mathbf X^T=\chol(\mathbf \Lambda^{-1})$ with sample size $k$. Since $\mathbf \Omega$ positive definite, then $(\frac{1}{2}(\lambda-k-p-1)- k/2)\mathbf \Omega^{-1}\otimes\mathbf \Omega^{-1}$ is positive semi-definite when $\frac{1}{2}(\lambda-k-p-1)\ge k/2$, and thus, the Hessian would be negative definite.

\section{Log concavity, conjugacy and unimodality of Normal-MGIG prior}
\label{propNMGIG}

To the best of our knowledge, the Normal-MGIG prior has not been used for the Gaussian chain graph model, and thus, we prove here some of its properties.
We show that the MGIG prior is conjugate for the case of known $\mathbf B$ (Proposition \ref{conjB}), that it is log concave under certain conditions (Proposition \ref{logconcave}) which is needed for the Laplace approximation, {that it is unimodal for the case of unknown $\mathbf B$ (Proposition \ref{unimod})}, and that its limiting case is indeed the case of known $\mathbf B$ (Remark \ref{NMGIG-limit}).


\begin{myProposition}
\label{conjB}
Under the setting of known $\mathbf B$, the Matrix Generalized Inverse Gaussian (MGIG) distribution is a conjugate prior for $\mathbf \Omega$ with density:
\begin{equation*}
    p(\mathbf{\Omega}|\lambda,\mathbf \Psi,\mathbf \Phi)\propto |\mathbf{\Omega}|^{\lambda-\frac{1}{2}(k+1)}\exp(-\frac{1}{2}\tr(\mathbf \Psi \mathbf{\Omega}^{-1})-\frac{1}{2}\tr(\mathbf \Phi\mathbf{\Omega}))1_{\mathbf{\Omega}+}
\end{equation*}

where $\mathbf{\Psi,\Phi}\in \mathbb{R}^{k\times k}$ are positive definite and $\lambda$ is a scalar. 
\end{myProposition}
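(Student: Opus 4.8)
The plan is to establish conjugacy directly: treat the likelihood as a function of $\mathbf{\Omega}$ alone (legitimate here because $\mathbf{B}$ is known), multiply it by the stated MGIG prior, and verify that the product is again of MGIG form with hyperparameters that stay in the admissible range. First I would take the log likelihood from Appendix \ref{appendix-fi}, drop the normalizing constant and the additive term $\tr(\mathbf{Y}^T\mathbf{X}\mathbf{B})$ (both free of $\mathbf{\Omega}$ once $\mathbf{B}$ is fixed), and exponentiate. This gives a likelihood proportional to
\[
|\mathbf{\Omega}|^{n/2}\exp\left(-\tfrac12\tr(\mathbf{Y}^T\mathbf{Y}\,\mathbf{\Omega}) - \tfrac12\tr(\mathbf{B}^T\mathbf{X}^T\mathbf{X}\mathbf{B}\,\mathbf{\Omega}^{-1})\right),
\]
where I use the cyclic invariance of the trace to rewrite $\tr(\mathbf{X}^T\mathbf{X}\mathbf{B}\mathbf{\Omega}^{-1}\mathbf{B}^T)=\tr(\mathbf{B}^T\mathbf{X}^T\mathbf{X}\mathbf{B}\,\mathbf{\Omega}^{-1})$.

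Next I would multiply this by the prior kernel $|\mathbf{\Omega}|^{\lambda-\frac12(k+1)}\exp(-\tfrac12\tr(\mathbf{\Psi}\mathbf{\Omega}^{-1})-\tfrac12\tr(\mathbf{\Phi}\mathbf{\Omega}))\mathbb{1}_{\mathbf{\Omega}+}$ and collect like terms. The powers of $|\mathbf{\Omega}|$ add to the exponent $\lambda+\tfrac n2-\tfrac12(k+1)$; the two $\tr(\cdot\,\mathbf{\Omega})$ terms merge into $\tr[(\mathbf{\Phi}+\mathbf{Y}^T\mathbf{Y})\mathbf{\Omega}]$; and the two $\tr(\cdot\,\mathbf{\Omega}^{-1})$ terms merge into $\tr[(\mathbf{\Psi}+\mathbf{B}^T\mathbf{X}^T\mathbf{X}\mathbf{B})\mathbf{\Omega}^{-1}]$, with the indicator $\mathbb{1}_{\mathbf{\Omega}+}$ carried through unchanged. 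This is exactly the MGIG kernel with updated parameters $\hat\lambda=\lambda+\tfrac n2$, $\hat{\mathbf{\Psi}}=\mathbf{\Psi}+\mathbf{B}^T\mathbf{X}^T\mathbf{X}\mathbf{B}$ and $\hat{\mathbf{\Phi}}=\mathbf{\Phi}+\mathbf{Y}^T\mathbf{Y}$, matching the posterior reported in Section \ref{toy}.

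Finally I would confirm that the posterior is a genuine MGIG distribution by checking admissibility of the updated hyperparameters: since $\mathbf{B}^T\mathbf{X}^T\mathbf{X}\mathbf{B}$ and $\mathbf{Y}^T\mathbf{Y}$ are positive semidefinite, adding them to the positive definite $\mathbf{\Psi}$ and $\mathbf{\Phi}$ leaves both $\hat{\mathbf{\Psi}}$ and $\hat{\mathbf{\Phi}}$ positive definite, so the normalizing constant is finite and the kernel integrates to one. The computation is routine; the only step needing care is the trace manipulation that recasts the likelihood's quadratic-in-$\mathbf{B}$ term as a clean $\tr(\cdot\,\mathbf{\Omega}^{-1})$ contribution, ensuring it combines with the prior's $\mathbf{\Psi}$-term rather than its $\mathbf{\Phi}$-term. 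Everything else is bookkeeping of the three matching ingredients (the determinant power, the $\mathbf{\Omega}$-linear trace, and the $\mathbf{\Omega}^{-1}$-linear trace) that characterize the MGIG family.
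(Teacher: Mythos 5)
Your proposal is correct and follows essentially the same route as the paper's proof: write the likelihood as a function of $\mathbf{\Omega}$ alone (the paper abbreviates $\mu = \mathbf{XB}$, so its $\mu^T\mu$ is your $\mathbf{B}^T\mathbf{X}^T\mathbf{X}\mathbf{B}$), multiply by the MGIG kernel, and collect the determinant power, the $\tr(\cdot\,\mathbf{\Omega})$ term, and the $\tr(\cdot\,\mathbf{\Omega}^{-1})$ term to identify the MGIG posterior with parameters $\lambda+\tfrac{n}{2}$, $\mathbf{\Psi}+\mathbf{B}^T\mathbf{X}^T\mathbf{X}\mathbf{B}$, $\mathbf{\Phi}+\mathbf{Y}^T\mathbf{Y}$. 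Your closing check that the updated hyperparameters remain positive definite is a small addition the paper omits here (it addresses such positivity separately in its Appendix on positive definiteness), but it does not change the argument.
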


\begin{proof}
Denote $\mu=\mathbf{XB}$ and let $\theta$ represent the hyper-parameters. Consider the posterior distribution:
\begin{equation}
    \begin{aligned}
p(\mathbf{\Omega}|\mathbf{Y}, \mu,\theta)\propto& |\mathbf{\Omega}|^{n/2}\exp(\tr(\mathbf{Y}^T\mu)-\frac{1}{2}\tr(\mathbf{Y}^T\mathbf{Y}\mathbf{\Omega})-\frac{1}{2}\tr(\mu^T\mu\mathbf{\Omega}^{-1}))\\
&\times |\mathbf{\Omega}|^{\lambda-\frac{k+1}{2}}\exp(-\frac{1}{2}\tr(\mathbf \Psi \mathbf{\Omega}^{-1})-\frac{1}{2}\tr(\mathbf \Phi\mathbf{\Omega}))1_{\mathbf{\Omega}+}\\
\propto& |\mathbf{\Omega}|^{\lambda+\frac{n}{2}-\frac{k+1}{2}}\exp(-\frac{1}{2}\tr[(\mathbf \Psi+\mu^T\mu)\mathbf{\Omega}^{-1}]-\frac{1}{2}\tr[(\mathbf \Phi+\mathbf{Y}^T\mathbf{Y})\mathbf{\Omega}])
\end{aligned}
\label{eqn:MGIG_posterior}
\end{equation}

which is a MGIG distribution with parameters $\lambda+\frac{n}{2}$, $\mathbf \Psi+\mu^T\mu$ and $\mathbf \Phi+\mathbf{Y}^T\mathbf{Y}$.
\end{proof}




\begin{myProposition}[Log-Concavity]
\label{logconcave}
The (Normal-)MGIG prior is log concave under both settings: known and unknown $\mathbf B$ if $\lambda-\frac{2k+1}{2}\ge 0$.
\end{myProposition}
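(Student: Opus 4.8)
The plan is to establish log concavity by showing that the logarithm of the un-normalized prior density is concave, as a function of the free parameters $(\vect(\mathbf B),\vech(\mathbf \Omega))$, on its domain $\{(\mathbf B,\mathbf \Omega):\mathbf \Omega\succ 0\}$, which is a convex set; log concavity is then immediate. I would concentrate on the harder unknown-$\mathbf B$ (joint Normal-MGIG) case, since the known-$\mathbf B$ MGIG prior is handled by the same decomposition with the $\mathbf B$-coupling term absent and a strictly larger $\log|\mathbf \Omega|$ coefficient, so it follows a fortiori. Up to an additive constant, the joint log-density reads
\[
\Big(\lambda-\tfrac{k+p+1}{2}\Big)\log|\mathbf \Omega|-\tfrac12\tr(\mathbf \Psi\mathbf \Omega^{-1})-\tfrac12\tr(\mathbf \Phi\mathbf \Omega)-\tfrac12\tr\!\big((\mathbf B-\mathbf B_0)^T\mathbf \Lambda^{-1}(\mathbf B-\mathbf B_0)\mathbf \Omega^{-1}\big),
\]
where the $-\tfrac p2\log|\mathbf \Omega|$ arises from the normalizing factor $|\mathbf \Omega\otimes\mathbf \Lambda|^{-1/2}$ of the conditional Normal prior on $\vect(\mathbf B)$.

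I would then argue concavity term by term. The map $\mathbf \Omega\mapsto\log|\mathbf \Omega|$ is concave on the positive-definite cone, so its multiple is concave exactly when its coefficient is nonnegative; this is where the hypothesis on $\lambda$ enters. In the joint case the requirement is $\lambda\ge\tfrac{k+p+1}{2}$, whereas the known-$\mathbf B$ case (which lacks the $-\tfrac p2\log|\mathbf \Omega|$ factor) needs only the milder $\lambda\ge\tfrac{k+1}{2}$; the stated threshold $\lambda\ge\tfrac{2k+1}{2}$ is a convenient sufficient condition covering both, coinciding with the joint requirement in the square case $p=k$ used in the simulations. Next, $-\tfrac12\tr(\mathbf \Psi\mathbf \Omega^{-1})$ is concave because $\mathbf \Omega\mapsto\tr(\mathbf \Psi\mathbf \Omega^{-1})$ is convex for $\mathbf \Psi\succeq 0$, and $-\tfrac12\tr(\mathbf \Phi\mathbf \Omega)$ is linear; the only remaining piece is the coupling term.

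The hard part will be the \emph{joint} concavity of $g(\mathbf B,\mathbf \Omega)=-\tfrac12\tr\!\big((\mathbf B-\mathbf B_0)^T\mathbf \Lambda^{-1}(\mathbf B-\mathbf B_0)\mathbf \Omega^{-1}\big)$, because $\mathbf B$ and $\mathbf \Omega$ enter simultaneously and separate concavity in each argument does not suffice. My plan is to recognize it as a matrix-fractional function: writing $\mathbf \Lambda^{-1}=\mathbf L\mathbf L^{T}$ and $\mathbf M=\mathbf L^{T}(\mathbf B-\mathbf B_0)$ (affine in $\mathbf B$), one has $\tr\!\big((\mathbf B-\mathbf B_0)^T\mathbf \Lambda^{-1}(\mathbf B-\mathbf B_0)\mathbf \Omega^{-1}\big)=\sum_i \mathbf m_i^T\mathbf \Omega^{-1}\mathbf m_i$, where $\mathbf m_i^T$ are the rows of $\mathbf M$. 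Each summand $(\mathbf m,\mathbf \Omega)\mapsto\mathbf m^T\mathbf \Omega^{-1}\mathbf m$ is jointly convex, since its epigraph is the linear-matrix-inequality $\left[\begin{smallmatrix}\mathbf \Omega & \mathbf m\\ \mathbf m^T & t\end{smallmatrix}\right]\succeq 0$ (Schur complement), a convex set; composing with the affine map $\mathbf B\mapsto\mathbf m_i$ and summing preserves convexity, so $-g$ is jointly convex and $g$ jointly concave. Summing the concave pieces over the convex domain gives concavity of the joint log-density, hence log concavity of the prior in both settings.

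As an equivalent and more self-contained route matching the style of Appendix \ref{logNW}, I would instead compute the Hessian blocks of the log-prior via the duplication matrix $\mathbf D_k$ (cf. \eqref{eqn:prior_good_Hessian}) and show the negative Hessian is positive semidefinite. Since the $\vect(\mathbf B)$-block $\mathbf \Omega^{-1}\otimes\mathbf \Lambda^{-1}$ is positive definite, the Schur complement in that block controls definiteness, and a short Kronecker computation shows the $(\mathbf B-\mathbf B_0)^T\mathbf \Lambda^{-1}(\mathbf B-\mathbf B_0)$ contributions cancel, leaving $\big(\lambda-\tfrac{k+p+1}{2}\big)\mathbf D_k^T(\mathbf \Omega^{-1}\otimes\mathbf \Omega^{-1})\mathbf D_k+\mathbf D_k^T(\mathbf \Omega^{-1}\otimes\mathbf \Omega^{-1}\mathbf \Psi\mathbf \Omega^{-1})\mathbf D_k$, which is positive semidefinite precisely when $\lambda-\tfrac{k+p+1}{2}\ge 0$; letting the eigenvalues of $\mathbf \Omega$ grow (so the $\mathbf \Psi$-term decays faster) shows this log-det condition is also necessary. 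The duplication-matrix bookkeeping and verifying the cancellation are the only delicate computations, so I would keep the convexity argument as the primary proof and relegate the Hessian computation to a confirmatory remark.
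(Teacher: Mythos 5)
Your proposal is correct, but it takes a genuinely different route from the paper's. The paper proves Proposition \ref{logconcave} by pattern-matching the Hessian blocks of the log prior against the negative Fisher information \eqref{eqn:FI} of a fictitious chain-graph sample (design $\mathbf X^T=\chol(\mathbf \Psi)$ or $\mathbf X^T=\chol(\mathbf \Lambda^{-1})$, pseudo-sample size $k$ or $p$, coefficients $\mathbf B-\mathbf B_0$), and then checking that the leftover piece of the $\mathbf\Omega$--$\mathbf\Omega$ block is positive semi-definite; this yields the conditions $\lambda\ge\tfrac{2k+1}{2}$ (known $\mathbf B$) and $\lambda-\tfrac{k+p+1}{2}\ge\tfrac{p}{2}$ (unknown $\mathbf B$). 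You instead argue term by term, with the key lemma being joint convexity of the matrix-fractional function $(\mathbf m,\mathbf\Omega)\mapsto \mathbf m^T\mathbf\Omega^{-1}\mathbf m$ (via its Schur-complement epigraph), which handles the $\mathbf B$--$\mathbf\Omega$ coupling directly without any reference to the likelihood. Your confirmatory Hessian route is also right: with the $\vect(\mathbf B)$-block equal to $\mathbf\Omega^{-1}\otimes\mathbf\Lambda^{-1}\succ 0$, the Schur complement computation shows $\mathbf G\mathbf D^{-1}\mathbf G^T$ exactly reproduces the $(\mathbf B-\mathbf B_0)^T\mathbf\Lambda^{-1}(\mathbf B-\mathbf B_0)$ contribution, leaving $\bigl(\lambda-\tfrac{k+p+1}{2}\bigr)\mathbf D_k^T(\mathbf\Omega^{-1}\otimes\mathbf\Omega^{-1})\mathbf D_k+\mathbf D_k^T(\mathbf\Omega^{-1}\otimes\mathbf\Omega^{-1}\mathbf\Psi\mathbf\Omega^{-1})\mathbf D_k$, and your large-$\mathbf\Omega$ argument correctly shows the resulting condition is sharp. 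What your approach buys is strictly weaker sufficient conditions ($\lambda\ge\tfrac{k+1}{2}$ known, $\lambda\ge\tfrac{k+p+1}{2}$ unknown) than the paper's, whose comparison is lossy because it donates $\tfrac{k}{2}$ (resp.\ $\tfrac{p}{2}$) of the $\log|\mathbf\Omega|$ curvature to the fictitious likelihood; what the paper's approach buys is reuse of the already-computed Fisher information with essentially no new analysis. Your route also exposes a real inconsistency in the paper: the stated threshold $\lambda\ge\tfrac{2k+1}{2}$ matches the paper's own proof only in the known-$\mathbf B$ case, while for unknown $\mathbf B$ the paper's proof requires $\lambda\ge\tfrac{k+2p+1}{2}$, which exceeds the stated threshold whenever $p>k/2$ (in particular for $p=k$ as in the simulations); your weaker condition is implied by the stated threshold whenever $p\le k$, so your argument comes closer to justifying the proposition as written, though for $p>k$ neither argument establishes the statement with its $p$-free threshold.
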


\begin{proof}
We observe similarities between the Hessian of the log prior and the negative Fisher information matrix \eqref{eqn:FI} which is negative definite.

For the case of known $\mathbf B$, we observe that the log density of the MGIG prior has a similar form to the log density of the model Normal distribution
\begin{align}
\log p(\mathbf Y | \mathbf{X,\Omega,B}) = \frac{n}{2} \log |\mathbf{\Omega}|-\frac{1}{2} \tr(\mathbf{Y}^T\mathbf{Y}\mathbf{\Omega})- \frac{1}{2}\tr(\mathbf{B}^T\mathbf{X}^T\mathbf{X}\mathbf{B}\mathbf{\Omega}^{-1}) + \tr(\mathbf{Y}^T\mathbf{X}\mathbf{B}) + C.
\label{eqn:normal}
\end{align}

That is, if we set $\mathbf B=\mathbf I_k$ and $\mathbf X^T=\chol(\mathbf \Psi)$ then we have $\mathbf \Psi=\mathbf B^T\mathbf X^T \mathbf{XB}$ so that the log density of MGIG can be re-written as 
\begin{align}
\log p(\mathbf \Omega | \lambda, \mathbf{\Psi,\Phi})=\frac{k}{2}\log(|\mathbf \Omega|)-\frac{1}{2}\tr(\mathbf{\Phi\Omega})-\frac{1}{2}\tr(\mathbf{B}^T\mathbf X^T\mathbf{XB\Omega}^{-1})+\left(\lambda-\frac{2k+1}{2}\right)\log(|\mathbf \Omega|)+C.
\label{eqn:hessian-mgig}
\end{align}

Note that the first and third terms in  \eqref{eqn:hessian-mgig} coincide with the first and third terms in the Normal model  \eqref{eqn:normal}, and thus, the Hessian of these terms will coincide with
the negative Fisher information matrix of the Normal model \eqref{eqn:FI}. Note that we ignore the second term in  \eqref{eqn:hessian-mgig} because it is of first order and thus, it will not appear in the second derivative. Thus, the log MGIG prior would be concave as long as the last term in  \eqref{eqn:hessian-mgig} is concave, and this happens when $\lambda-\frac{2k+1}{2}\ge 0$.


For the case of unknown $\mathbf B$, the Hessian of the log Normal-MGIG prior has the following form  \eqref{eqn:prior_good_Hessian}.
We observe that the first partial derivative in  \eqref{eqn:prior_good_Hessian} coincides with the lower right diagonal block in the negative Fisher information of the Normal Chain graph model \eqref{eqn:FI}
with a sampling model in which we set the design matrix to be $\mathbf X^T=\chol(\mathbf \Lambda^{-1})$ with sample size $p$.
Next, we observe that the third partial derivative in  \eqref{eqn:prior_good_Hessian} coincides with the off-diagonal block in the negative Fisher information of the Normal Chain graph model  \eqref{eqn:FI} with the same sampling model already described ($\mathbf X^T=\chol(\mathbf \Lambda^{-1})$) and a regression coefficient matrix given by $\mathbf B-\mathbf B_0$.
For the missing block (second partial derivative in  \eqref{eqn:prior_good_Hessian}), there is an extra term of $ \left( \lambda - p +\frac{k+1}{2}\right) \mathbf \Omega^{-1} \otimes \mathbf \Omega^{-1} + \mathbf \Omega^{-1} \otimes \mathbf \Omega^{-1} \mathbf \Psi \mathbf \Omega$.
%
Since $\mathbf \Omega$ is positive definite, then we simply need to show that the terms $\mathbf \Omega^{-1}\mathbf \Psi \mathbf \Omega^{-1}+(\lambda-p-\frac{k+1}{2})\mathbf \Omega^{-1}$ are positive definite. The term $\mathbf \Omega^{-1}\mathbf \Psi \mathbf \Omega^{-1}$ is positive definite because it is a quadratic form of a positive definite matrix $\mathbf \Psi$. For the second term, when $\lambda-\frac{k+p+1}{2}\ge \frac{p}{2}$, $\mathbf{\Omega}^{-1}$ has a non-negative coefficient so that $(\lambda-p-\frac{k+1}{2})\mathbf \Omega^{-1}$ is also positive semi-definite. Then, $\mathbf \Omega^{-1}\otimes((\lambda-p-\frac{k+1}{2})\mathbf \Omega^{-1} + \mathbf \Omega^{-1}\mathbf \Psi \mathbf \Omega^{-1})$ is itself positive definite when $\lambda-\frac{k+p+1}{2}\ge \frac{p}{2}$, and thus the prior is indeed log concave. 
\end{proof}

For the case of known $\mathbf{B}$, it is already known that the MGIG prior is unimodal \citep{fazayeli2016matrix}. Next, in Proposition \ref{unimod}, we show that the Normal-MGIG prior is unimodal for the case of unknown $\mathbf B$.

\begin{myProposition}[Unimodality]
\label{unimod}
For unknown $\mathbf B$, the Normal-MGIG prior and corresponding posterior are unimodal.
\end{myProposition}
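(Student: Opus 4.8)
The plan is to establish unimodality by \emph{profiling out} the nuisance parameter $\mathbf B$ and thereby reducing the problem to the already-known unimodality of the MGIG distribution, rather than appealing to log concavity. This is the right route because log concavity holds only under the parameter restriction of Proposition~\ref{logconcave} ($\lambda-\frac{k+p+1}{2}\ge\frac{p}{2}$), whereas the present claim is meant to hold for unknown $\mathbf B$ without that restriction.

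First I would fix $\mathbf\Omega$ and view the joint density as a function of $\mathbf B$ alone. Both the prior and, via \eqref{eqn:prior_good}, the posterior factor as $p(\mathbf B\mid\mathbf\Omega)\,p(\mathbf\Omega)$ where $p(\mathbf B\mid\mathbf\Omega)$ is matrix-Gaussian in $\mathbf B$, hence strictly log concave with a unique conditional maximizer $\hat{\mathbf B}(\mathbf\Omega)$ (equal to $\mathbf B_0$ for the prior, and to the stated conditional mean for the posterior), and $p(\mathbf\Omega)$ is the $MGIG$ marginal. Substituting $\hat{\mathbf B}(\mathbf\Omega)$ yields a profile density in $\mathbf\Omega$. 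The key computation is that this profile is, up to a multiplicative constant, a single $MGIG$ kernel: the Gaussian peak height contributes a factor $|\mathbf\Omega|^{-p/2}$, which combines with the $MGIG$ shape exponent to give
\begin{equation*}
p_{\mathrm{prof}}(\mathbf\Omega)\ \propto\ |\mathbf\Omega|^{\lambda-\frac{p}{2}-\frac{k+1}{2}}\exp\!\big(-\tfrac12\tr[\mathbf\Psi\mathbf\Omega^{-1}]-\tfrac12\tr[\mathbf\Phi\mathbf\Omega]\big)
\end{equation*}
for the prior, i.e. $MGIG(\lambda-\tfrac{p}{2},\mathbf\Psi,\mathbf\Phi)$, and analogously $MGIG(\lambda+\tfrac{n}{2}-\tfrac{p}{2},\hat{\mathbf\Psi},\hat{\mathbf\Phi})$ for the posterior using the parameters in \eqref{eqn:MGIG_posterior}. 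In either case the profile is a single $MGIG$, which is unimodal by \citet{fazayeli2016matrix}, so it admits a unique maximizer $\hat{\mathbf\Omega}$.

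The final step transfers this to joint unimodality. Any stationary point $(\mathbf\Omega^*,\mathbf B^*)$ of the joint log density must satisfy $\nabla_{\mathbf B}=\mathbf 0$, which by strict log concavity in $\mathbf B$ forces $\mathbf B^*=\hat{\mathbf B}(\mathbf\Omega^*)$; an envelope/chain-rule argument then shows that $\nabla_{\mathbf\Omega}$ of the joint, evaluated along $\mathbf B=\hat{\mathbf B}(\mathbf\Omega)$, coincides with the gradient of the profile, so $\mathbf\Omega^*$ is stationary for the profile. Since the $MGIG$ profile has the unique stationary point $\hat{\mathbf\Omega}$, the joint density has the unique interior critical point $(\hat{\mathbf\Omega},\hat{\mathbf B}(\hat{\mathbf\Omega}))$; combined with the density decaying to zero on the boundary of the positive-definite cone and at infinity, this point is the global maximum, establishing unimodality for both prior and posterior simultaneously.

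I expect the main obstacle to be the envelope step equating the stationary points of the joint density with those of the profile: one must check that $\hat{\mathbf B}(\mathbf\Omega)$ is smooth in $\mathbf\Omega$ and that differentiating through the substitution introduces no spurious critical points. Two secondary checks are needed: that the shifted shape parameters ($\lambda-\tfrac{p}{2}$ and $\lambda+\tfrac{n}{2}-\tfrac{p}{2}$) still lie in the range for which the cited $MGIG$ unimodality applies, and that the boundary decay holds even when $\lambda$ is small enough that $|\mathbf\Omega|$ enters with a negative power, so that the unique critical point is genuinely the global mode.
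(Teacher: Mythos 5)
Your proposal follows essentially the same route as the paper: the paper likewise maximizes over $\mathbf B$ conditionally on $\mathbf \Omega$ (obtaining $\hat{\mathbf B}=\mathbf B_0$ for the prior and the stated conditional mean for the posterior), substitutes into the $\mathbf \Omega$-gradient, and reduces the stationarity condition to an algebraic Riccati equation — which is precisely the stationarity condition of your profile $MGIG(\lambda-\tfrac{p}{2},\mathbf\Psi,\mathbf\Phi)$ (resp. $MGIG(\lambda+\tfrac{n}{2}-\tfrac{p}{2},\hat{\mathbf\Psi},\hat{\mathbf\Phi})$) — and then invokes the CARE uniqueness argument of \citet{fazayeli2016matrix}, with positive definiteness of the coefficient matrices supplied by Proposition \ref{lemma:positive}. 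Your profile/envelope packaging and the paper's direct gradient computation are the same argument in different clothing, so the proposal is correct and matches the paper's proof.
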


\begin{proof}
Unimodality of prior follows by taking derivative of log density of the prior and set it to 0 to show that it has a unique solution. 
The solution of $\frac{\partial }{\partial\mathbf B}p(\mathbf{\Omega,B})=0$ is $\hat{\mathbf{B}}=\mathbf B_0 $. Then, we take the partial derivative with respect to $\mathbf \Omega$ of the log prior and plug in the solution $\hat{\mathbf{B}}$:
\begin{equation*}
    \begin{aligned}
        \frac{\partial}{\partial \mathbf \Omega} \log p(\mathbf \Omega, \hat{\mathbf{B}})&=\beta \mathbf \Omega^{-1}-\frac{1}{2}\mathbf \Phi + \frac{1}{2}\mathbf \Omega^{-1}\left[\mathbf \Psi+(\hat{\mathbf B}-\mathbf B_0)^T\mathbf\Lambda^{-1}(\hat{\mathbf B}-\mathbf B_0)\right]\mathbf \Omega^{-1}
    \end{aligned}
\end{equation*}
where $\beta = \lambda-\frac{k+1}{2}-\frac{p}{2}$.


By setting the derivative to 0 and multiplying by $\mathbf \Omega$ on both sides (left and right), 
we get the equation:
$-2\beta \mathbf \Omega + \mathbf{\Omega}\mathbf \Phi \mathbf \Omega-\mathbf \Psi =0$
which is a special case of continuous-time algebraic Riccati equation (CARE) \citep{boyd1991linear,anderson2007optimal}. Since matrices $\mathbf \Phi$ and $\mathbf{\Psi}$ are positive definite, this equation has the exact form in the proof of unimodality of MGIG distribution from \citet{fazayeli2016matrix}, and thus, we can conclude that the prior has unique solution and is unimodal.

Unimodality of posterior follows the same steps: we take partial the derivative of the posterior \eqref{eqn:prior_good} with respect to $\mathbf B$ and set it to 0 and let $\hat{\mathbf{B}}=(\mathbf X^T \mathbf X+\mathbf \Lambda^{-1})^{-1}(\mathbf X^T \mathbf Y+\mathbf \Lambda^{-1}\mathbf B_0 \mathbf \Omega^{-1}) \mathbf \Omega$ be that solution.
We then take the partial derivative of the log posterior  \eqref{eqn:prior_good} with respect to $\mathbf \Omega$ and we plug in the solution $\hat{\mathbf B}$. Let $\alpha = \lambda+\frac{n}{2}-\frac{k+1}{2}-\frac{p}{2}$, so that we get
\begin{equation*}
    \begin{aligned}
        \frac{\partial}{\partial \mathbf \Omega} \log p(\mathbf \Omega, \hat{\mathbf{B}})&=\alpha \mathbf \Omega^{-1}-\frac{1}{2}(\mathbf{Y^TY}+\mathbf{\Phi}) + \frac{1}{2}\mathbf \Omega^{-1}\left[\mathbf \Psi+\hat{\mathbf B}^T\mathbf X^T\mathbf X\hat{\mathbf B}+(\hat{\mathbf{B}}-\mathbf B_0)^T\mathbf \Lambda^{-1}(\hat{\mathbf{B}}-\mathbf B_0)\right]\mathbf\Omega^{-1}\\
        &=\alpha \mathbf \Omega^{-1} -\frac{1}{2}(\mathbf Y^T\mathbf Y+\mathbf \Phi) + \frac{1}{2}\mathbf \Omega^{-1}\left[\mathbf \Psi + \mathbf B_0^T(\mathbf \Lambda^{-1}-\mathbf \Lambda^{-1}(\mathbf X^T\mathbf X+\mathbf\Lambda^{-1})^{-1}\mathbf\Lambda^{-1})\mathbf B_0\right]\mathbf\Omega^{-1}
    \end{aligned}
\end{equation*}

Again, by setting the derivative to 0 and multiplying by $\mathbf \Omega$ on both sides (left and right), we get the equation:
$-2\alpha \mathbf \Omega+ \mathbf \Omega(\mathbf Y^T \mathbf Y+\mathbf \Phi)\mathbf \Omega-\left[\mathbf \Psi + \mathbf B_0^T(\mathbf \Lambda^{-1}-\mathbf \Lambda^{-1}(\mathbf X^T\mathbf X+\mathbf \Lambda^{-1})^{-1}\mathbf \Lambda^{-1})\mathbf B_0\right]=0$
which is again a special case of continuous time algebraic Riccati equation (CARE) \citep{boyd1991linear,anderson2007optimal}. Since $\mathbf Y^T\mathbf Y+\mathbf\Phi$ and $[\mathbf\Psi + \mathbf B_0^T(\mathbf\Lambda^{-1}-\mathbf\Lambda^{-1}(\mathbf X^T\mathbf X+\mathbf\Lambda^{-1})^{-1}\mathbf\Lambda^{-1})\mathbf B_0]$ are positive definite (see Proposition \ref{lemma:positive} in the Appendix), then following the proof of \citet{fazayeli2016matrix}, we get that the posterior is also unimodal. 
\end{proof}

Lastly, unlike the Normal-Wishart conjugate prior, the Normal-MGIG does reach its limiting case of known $\mathbf B$ when the uncertainty of $\mathbf B$ goes to zero as highlighted in Remark \ref{NMGIG-limit}.

\begin{myRemark}
\label{NMGIG-limit}
Since $\mathbf \Lambda$ represents the uncertainty on $\mathbf B$, when we take the limit of $\mathbf \Lambda\to \mathbf 0$, $\mathbf B$ is fully known (as $\mathbf{B}_0$) in the Normal-MGIG prior and it reduces to the known $\mathbf B$ case.
\end{myRemark}

\section{Positive definiteness of $\hat{\mathbf \Phi}$ in the Normal-Wishart prior and of $\hat{\mathbf \Psi}$ in the Normal-MGIG prior}
\label{sec:posdef}

\begin{myProposition}
\label{lemma:positive}
Let $\hat{\mathbf \Psi}=\mathbf \Psi+\mathbf{B}_0^T\mathbf{\Lambda}^{-1}\mathbf{B}_0-\mathbf{B}_0^T\mathbf{\Lambda}^{-1}(\mathbf{X}^T\mathbf{X}+\mathbf{\Lambda}^{-1})^{-1}\mathbf{\Lambda}^{-1}\mathbf{B}_0$ and $\hat{\mathbf \Phi}=\mathbf \Phi+\mathbf{Y}^T\mathbf{Y}-\mathbf{Y}^T\mathbf{X}(\mathbf{X}^T\mathbf{X}+\mathbf{\Lambda}^{-1})^{-1}\mathbf{X}^T\mathbf{Y}$. Let $\mathbf \Lambda$, $\mathbf \Psi$ and $\mathbf \Phi$ be positive definite. If $\mathbf X= \mathbf 0$ or if $\mathbf X^T \mathbf X$ is invertible, then both $\hat{\mathbf \Psi}$ and $\hat{\mathbf \Phi}$ are positive definite. 
\end{myProposition}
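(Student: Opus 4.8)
The plan is to exhibit each matrix as its positive definite base term ($\mathbf \Psi$ or $\mathbf \Phi$) plus a symmetric correction that I will show is positive semi-definite, so that the sum is strictly positive definite. First I would group the terms as
\[
\hat{\mathbf \Psi}=\mathbf \Psi+\mathbf B_0^T\mathbf S\,\mathbf B_0,\qquad \mathbf S:=\mathbf \Lambda^{-1}-\mathbf \Lambda^{-1}(\mathbf X^T\mathbf X+\mathbf \Lambda^{-1})^{-1}\mathbf \Lambda^{-1},
\]
\[
\hat{\mathbf \Phi}=\mathbf \Phi+\mathbf Y^T\mathbf T\,\mathbf Y,\qquad \mathbf T:=\mathbf I_n-\mathbf X(\mathbf X^T\mathbf X+\mathbf \Lambda^{-1})^{-1}\mathbf X^T.
\]
Because $\mathbf B_0^T\mathbf S\mathbf B_0$ and $\mathbf Y^T\mathbf T\mathbf Y$ are congruence transforms of $\mathbf S$ and $\mathbf T$, it is enough to prove $\mathbf S\succeq\mathbf 0$ and $\mathbf T\succeq\mathbf 0$; the conclusion then follows from $\hat{\mathbf \Psi}\succeq\mathbf \Psi\succ\mathbf 0$ and $\hat{\mathbf \Phi}\succeq\mathbf \Phi\succ\mathbf 0$.

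For the degenerate design $\mathbf X=\mathbf 0$ I would simply substitute: $\mathbf S$ collapses to $\mathbf \Lambda^{-1}-\mathbf \Lambda^{-1}\mathbf \Lambda\mathbf \Lambda^{-1}=\mathbf 0$ and $\mathbf T=\mathbf I_n$, so that $\hat{\mathbf \Psi}=\mathbf \Psi$ and $\hat{\mathbf \Phi}=\mathbf \Phi+\mathbf Y^T\mathbf Y$, both evidently positive definite. The substantive case is $\mathbf X^T\mathbf X$ invertible, where I would apply the matrix identity \eqref{eqn:inverse_for_positive} already used for the general independent prior. Taking $\mathbf A=\mathbf \Lambda$, $\mathbf U=\mathbf V=\mathbf I$, $\mathbf P=(\mathbf X^T\mathbf X)^{-1}$ rewrites the correction of $\hat{\mathbf \Psi}$ as $\mathbf S=(\mathbf \Lambda+(\mathbf X^T\mathbf X)^{-1})^{-1}$, while taking $\mathbf A=\mathbf I_n$, $\mathbf U=\mathbf X$, $\mathbf V=\mathbf X^T$, $\mathbf P=\mathbf \Lambda$ rewrites the correction of $\hat{\mathbf \Phi}$ as $\mathbf T=(\mathbf I_n+\mathbf X\mathbf \Lambda\mathbf X^T)^{-1}$. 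Each right-hand side is the inverse of a positive definite matrix (a positive definite term plus a positive semi-definite one), hence $\mathbf S\succ\mathbf 0$ and $\mathbf T\succ\mathbf 0$, which is more than the required semi-definiteness.

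The main obstacle is recognizing that the two Schur-complement-type corrections are exactly the Woodbury duals $(\mathbf \Lambda+(\mathbf X^T\mathbf X)^{-1})^{-1}$ and $(\mathbf I_n+\mathbf X\mathbf \Lambda\mathbf X^T)^{-1}$; this is precisely what forces the hypothesis that $\mathbf X^T\mathbf X$ be invertible, since the $\hat{\mathbf \Psi}$ rewrite requires $(\mathbf X^T\mathbf X)^{-1}$ to exist. I would also note that $\mathbf S\succeq\mathbf 0$ can be obtained identity-free by the monotonicity chain $\mathbf X^T\mathbf X+\mathbf \Lambda^{-1}\succeq\mathbf \Lambda^{-1}\Rightarrow(\mathbf X^T\mathbf X+\mathbf \Lambda^{-1})^{-1}\preceq\mathbf \Lambda$ followed by a congruence by $\mathbf \Lambda^{-1}$, and analogously for $\mathbf T$, which even covers arbitrary $\mathbf X$; but under the stated hypotheses the Woodbury route is cleanest. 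The only bookkeeping care needed is to keep the congruence directions straight and to remember that the corrections may be merely semi-definite (for instance when $\mathbf B_0=\mathbf 0$), so the strict positive definiteness must be inherited from $\mathbf \Psi$ and $\mathbf \Phi$ rather than from the corrections themselves.
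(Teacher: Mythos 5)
Your proof is correct, and it uses the key identity \eqref{eqn:inverse_for_positive} in the opposite direction from the paper, which is a genuinely different (and arguably cleaner) route. The paper \emph{expands}: for $\hat{\mathbf \Psi}$ it shows $\mathbf \Lambda-(\mathbf X^T\mathbf X+\mathbf \Lambda^{-1})^{-1}=(\mathbf X\mathbf \Lambda)^T(\mathbf I_n+\mathbf X\mathbf \Lambda\mathbf X^T)^{-1}\mathbf X\mathbf \Lambda$, a quadratic form in a positive definite matrix; for $\hat{\mathbf \Phi}$ it Cholesky-factors $\mathbf \Lambda^{-1}=\mathbf L\mathbf L^T$, splits $\mathbf I_n-\mathbf X(\mathbf X^T\mathbf X+\mathbf \Lambda^{-1})^{-1}\mathbf X^T$ into $(\mathbf I_n-\mathbf Q)$ plus a quadratic form, with $\mathbf Q=\mathbf X(\mathbf X^T\mathbf X)^{-1}\mathbf X^T$, and finishes with the orthogonal-projection inequality $\|\mathbf Q a\|\le\|a\|$. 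You instead \emph{contract}: you identify each correction outright as a Woodbury dual, $\mathbf S=\bigl(\mathbf \Lambda+(\mathbf X^T\mathbf X)^{-1}\bigr)^{-1}$ and $\mathbf T=\bigl(\mathbf I_n+\mathbf X\mathbf \Lambda\mathbf X^T\bigr)^{-1}$ (both identities check out), so each is the inverse of a positive definite matrix and you are done, with no projection-matrix digression and a single uniform argument for both matrices. Two further points in your favor: your $\mathbf T$ identity holds for arbitrary $\mathbf X$, so the invertibility hypothesis is only consumed by the $\hat{\mathbf \Psi}$ side (interestingly, the paper's situation is dual: its $\hat{\mathbf \Psi}$ computation needs no invertibility, while its $\hat{\mathbf \Phi}$ argument does); and you are careful to let strict definiteness come from $\mathbf \Psi$ and $\mathbf \Phi$ while the corrections are only guaranteed semi-definite, whereas the paper's write-up loosely calls several merely semi-definite quantities ``positive definite'' (e.g.\ $\mathbf I_n-\mathbf Q$, or the $\hat{\mathbf \Psi}$ correction when $\mathbf X$ lacks full column rank). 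Your parenthetical claim that the monotonicity chain handles $\mathbf T$ ``analogously'' is the one spot that is glossed (the congruence by $\mathbf X$ gives $\mathbf X(\mathbf X^T\mathbf X+\mathbf \Lambda^{-1})^{-1}\mathbf X^T\preceq\mathbf X\mathbf \Lambda\mathbf X^T$, not $\preceq\mathbf I_n$, so a slightly different argument is needed there), but since that is an aside and your main Woodbury argument is complete, it does not affect correctness.
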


\begin{proof}
For $\mathbf X=\mathbf 0$, the matrices are reduced to $\hat{\mathbf \Psi}=\mathbf \Psi$ and $\hat{\mathbf \Phi}=\mathbf \Phi+\mathbf{Y}^T\mathbf{Y}$, and thus are trivially positive definite.

For $\mathbf X^T \mathbf X$ invertible we can write the last two terms of $\hat{\mathbf{\Psi}}$ as $\mathbf{B}^T_0\mathbf{\Lambda^{-1}}(\mathbf{\Lambda}-(\mathbf{X}^T\mathbf{X}+\mathbf{\Lambda}^{-1})^{-1})\mathbf{\Lambda}^{-1}\mathbf{B}_0$ thus it suffices to show $\mathbf{\Lambda}-(\mathbf{X}^T\mathbf{X}+\mathbf{\Lambda}^{-1})^{-1}$ is positive definite. We can also write the last two terms of $\hat{\mathbf{\Phi}}$ as $\mathbf{Y}^T(\mathbf{I}_n-\mathbf{X}(\mathbf{X}^T\mathbf{X}+\mathbf{\Lambda}^{-1})^{-1}\mathbf{X}^T)\mathbf{Y}$ thus it suffices to show $\mathbf{I}_n-\mathbf{X}(\mathbf{X}^T\mathbf{X}+\mathbf{\Lambda}^{-1})^{-1}\mathbf{X}^T$ is positive definite. We use the fact that $\mathbf{\Lambda}^{-1}$ is symmetric. 

Assuming proper invertibility, we have 
\begin{equation}
(\mathbf{UPV} + \mathbf A)^{-1}=\mathbf A^{-1}-\mathbf A^{-1}\mathbf{UP}(\mathbf I + \mathbf{VA}^{-1}\mathbf{UP})^{-1}\mathbf{VA}^{-1}
\label{eqn:inverse_for_positive}
\end{equation}

(Equation 24 in \citet{henderson1981deriving}) which allows us to decompose $\mathbf{\Lambda}-(\mathbf{X}^T\mathbf{X}+\mathbf{\Lambda}^{-1})^{-1}$ by taking $\mathbf A=\mathbf \Lambda^{-1}$, $\mathbf U= \mathbf X^T$, $\mathbf V=\mathbf X$ and $\mathbf P=\mathbf I_n$
\begin{align*}
\mathbf{\Lambda}-(\mathbf{X}^T\mathbf{X}+\mathbf{\Lambda}^{-1})^{-1}&=\mathbf \Lambda-\mathbf \Lambda+\mathbf \Lambda \mathbf X^T(\mathbf I_n+\mathbf X \mathbf \Lambda \mathbf X^T)^{-1} \mathbf X \mathbf \Lambda=\mathbf \Lambda \mathbf X^T(\mathbf I_n+\mathbf X \mathbf \Lambda \mathbf X^T)^{-1} \mathbf X \mathbf \Lambda\\
&= (\mathbf X\mathbf \Lambda)^T(\mathbf I_n+\mathbf X \mathbf \Lambda \mathbf X^T)^{-1} \mathbf X \mathbf \Lambda
\end{align*}
which is positive definite because it is a quadratic form of a positive definite matrix $\mathbf I_n+\mathbf X \mathbf \Lambda \mathbf X^T$.

For $\mathbf I_n-\mathbf{X}(\mathbf{X}^T\mathbf{X}+\mathbf{\Lambda}^{-1})^{-1}\mathbf{X}^T$, we take the Cholesky decomposition of $\mathbf \Lambda^{-1}=\mathbf{LL}^T$ and we take $\mathbf A = \mathbf X ^T \mathbf X$, $\mathbf U=\mathbf L$, $\mathbf P= \mathbf I_p$ and $\mathbf V=\mathbf L^T$.
\begin{align*}
(\mathbf X ^T \mathbf X + \mathbf \Lambda ^{-1})^{-1} &= (\mathbf X^T \mathbf X)^{-1} - (\mathbf X^T \mathbf X)^{-1} \mathbf L (\mathbf I_p + \mathbf L^T (\mathbf X ^T \mathbf X)^{-1} \mathbf L)^{-1} \mathbf L^T(\mathbf X ^T \mathbf X)^{-1} \\
\mathbf I_p-\mathbf{X}(\mathbf{X}^T\mathbf{X}+\mathbf{\Lambda}^{-1})^{-1}\mathbf{X}^T&= \mathbf I_p-\mathbf{X}(\mathbf{X}^T\mathbf{X})^{-1}\mathbf{X}^T+\mathbf{X}[(\mathbf{X}^T\mathbf{X})^{-1} \mathbf L(\mathbf I_p+ \mathbf L^T(\mathbf{X}^T\mathbf{X})^{-1} \mathbf L)^{-1} \mathbf L^T(\mathbf{X}^T\mathbf{X})^{-1}]\mathbf{X}^T.
\end{align*}

The last term is positive definite because it is a quadratic form of a positive definite matrix $\mathbf I_p+ \mathbf L^T(\mathbf{X}^T\mathbf{X})^{-1} \mathbf L$. It remains to show that $\mathbf I_p-\mathbf{X}(\mathbf{X}^T\mathbf{X})^{-1}\mathbf{X}^T$ is positive definite. We observe that $\mathbf{X}(\mathbf{X}^T\mathbf{X})^{-1}\mathbf{X}^T$ is a projection matrix and denote it as $\mathbf Q$. We then have $\mathbf Q^T\mathbf Q=\mathbf Q$, and thus $\mathbf I_p-\mathbf Q=\mathbf I_p-\mathbf Q^T \mathbf Q$. For any vector $a$, the quadratic form is $a^T(\mathbf I_p-\mathbf Q)a=a^Ta-(\mathbf Q a)^T(\mathbf Q a)=||a||^2-||\mathbf Q a||^2$. Since $\mathbf Q$ is a projection to a subspace, we have that $||\mathbf Qa||\le ||a||$, thus we have that $\mathbf I_p-\mathbf Q$ is positive definite. 
\end{proof}

\section{Laplace approximation of the marginal posterior precision under the conjugate prior}
\label{sec:approx-conj}

This section provides the algebraic details of the simplification in  \eqref{eqn:Wishart_postprecision}. We simplify the second term in  \eqref{eqn:Wishart_postprecision} and it can be seen that it is almost the same as the first term, with a difference of $\left(\frac{n}{2}+\alpha \right)\mathbf D_k^T\left[\mathbf \Omega^{-1}\otimes  \mathbf \Omega^{-1}\right]\mathbf D_k$
\begin{equation*}
    \begin{aligned}
    &\mathbf{D}_k^T\left(\mathbf{\Omega}^{-1}\otimes (\mathbf{\Omega}^{-1}\mathbf{B}^T(\mathbf{X}^T\mathbf{X}+\mathbf \Lambda^{-1})) \right)\left[\mathbf{\Omega}\otimes(\mathbf{X}^T \mathbf{X}+\mathbf \Lambda^{-1})^{-1}\right]\left[(\mathbf{D}_k^T\left(\mathbf{\Omega}^{-1}\otimes \mathbf{\Omega}^{-1}\mathbf{B}^T(\mathbf{X}^T\mathbf{X}+\mathbf \Lambda^{-1}))\right) \right]^T\\
    =&\mathbf{D}_k^T\left(\mathbf{\Omega}^{-1}\otimes (\mathbf{\Omega}^{-1}\mathbf{B}^T(\mathbf{X}^T\mathbf{X}+\mathbf \Lambda^{-1})) \right)\left[\mathbf{\Omega}\otimes(\mathbf{X}^T \mathbf{X}+\mathbf \Lambda^{-1})^{-1}\right]\left[(\left(\mathbf{\Omega}^{-1}\otimes (\mathbf{X}^T\mathbf{X}+\mathbf \Lambda^{-1})\mathbf{B}\mathbf{\Omega}^{-1})\right) \right]^T\mathbf{D}_k\\
    =&\mathbf{D}_k^T\left(\mathbf{\Omega}^{-1}\otimes (\mathbf{\Omega}^{-1}\mathbf{B}^T(\mathbf{X}^T\mathbf{X}+\mathbf \Lambda^{-1})) \right)\left[\mathbf{I}\otimes\mathbf{B}\mathbf{\Omega}^{-1}\right]\mathbf{D}_k\\
    =&\mathbf{D}_k^T\left(\mathbf{\Omega}^{-1}\otimes (\mathbf{\Omega}^{-1}\mathbf{B}^T(\mathbf{X}^T\mathbf{X}+\mathbf \Lambda^{-1})\mathbf{B}\mathbf{\Omega}^{-1}) \right)\mathbf D_k\\
    =&\mathbf{D}_k^T\left(\mathbf{\Omega}^{-1}\otimes (\mathbf{\Omega}^{-1}\mathbf{B}^T\mathbf{X}^T\mathbf{X}\mathbf{B}\mathbf{\Omega}^{-1}+\mathbf{\Omega}^{-1}\mathbf{B}^T\mathbf \Lambda^{-1}\mathbf{B}\mathbf{\Omega}^{-1}) \right)\mathbf D_k\\
    \end{aligned}
\end{equation*}

\section{Laplace approximation of the marginal posterior precision under the Normal-MGIG prior}
\label{appNMGIG}

This section provides the algebraic details of the simplification in  \eqref{eqn:MGIG_postprecision}. We focus on the second term to convert to a form that is similar to the first term: 
\begin{equation}
\label{eqn:alge_MGIG}
\begin{aligned}
&\mathbf{D}_k^T\left(\mathbf{\Omega}^{-1}\otimes (\mathbf{\Omega}^{-1}(\mathbf{B}^T\mathbf{X}^T\mathbf{X}+(\mathbf B-\mathbf B_0)^T\mathbf \Lambda^{-1})) \right)\left[\mathbf{\Omega}\otimes(\mathbf{X}^T \mathbf{X}+\mathbf \Lambda^{-1})^{-1}\right]\\
&\left[\mathbf{D}_k^T\left(\mathbf{\Omega}^{-1}\otimes (\mathbf{\Omega}^{-1}(\mathbf{B}^T\mathbf{X}^T\mathbf{X}+(\mathbf B-\mathbf B_0)^T\mathbf \Lambda^{-1}))\right) \right]^T
\end{aligned}
\end{equation}

The transpose on the term below can be simplified as follows:
\begin{equation*}
\begin{aligned}
&\left[\mathbf{D}_k^T\left(\mathbf{\Omega}^{-1}\otimes (\mathbf{\Omega}^{-1}(\mathbf{B}^T\mathbf{X}^T\mathbf{X}+(\mathbf B-\mathbf B_0)^T\mathbf \Lambda^{-1}))\right) \right]^T = \left(\mathbf{\Omega}^{-1}\otimes (\mathbf{\Omega}^{-1}(\mathbf{B}^T\mathbf{X}^T\mathbf{X}+(\mathbf B-\mathbf B_0)^T\mathbf \Lambda^{-1}))\right)^T \mathbf D_k\\
&= \left(\mathbf{\Omega}^{-1}\otimes (\mathbf{\Omega}^{-1}(\mathbf{B}^T\mathbf{X}^T\mathbf{X}+(\mathbf B-\mathbf B_0)^T\mathbf \Lambda^{-1}))^T \right) \mathbf D_k = \left(\mathbf{\Omega}^{-1}\otimes (\mathbf{B}^T\mathbf{X}^T\mathbf{X}+(\mathbf B-\mathbf B_0)^T\mathbf \Lambda^{-1})^T \mathbf{\Omega}^{-1} \right) \mathbf D_k \\
&= \left(\mathbf{\Omega}^{-1}\otimes (\mathbf{X}^T\mathbf{X}\mathbf{B}+\mathbf \Lambda^{-1}(\mathbf B-\mathbf B_0)) \mathbf{\Omega}^{-1} \right) \mathbf D_k.
\end{aligned}
\end{equation*}
which combined with the other terms in  \eqref{eqn:alge_MGIG} becomes
\begin{equation}
\label{eqn:alge_MGIG3}
\begin{aligned}
&\mathbf{D}_k^T\left(\mathbf{\Omega}^{-1}\otimes (\mathbf{\Omega}^{-1}(\mathbf{B}^T\mathbf{X}^T\mathbf{X}+(\mathbf B-\mathbf B_0)^T\mathbf \Lambda^{-1})) \right)\left[\mathbf{\Omega}\otimes(\mathbf{X}^T \mathbf{X}+\mathbf \Lambda^{-1})^{-1}\right]\\
&\left(\mathbf{\Omega}^{-1}\otimes (\mathbf{X}^T\mathbf{X}\mathbf{B}+\mathbf \Lambda^{-1}(\mathbf B-\mathbf B_0)) \mathbf{\Omega}^{-1} \right) \mathbf D_k
\end{aligned}
\end{equation}

By the mixed-product property of Kronecker product $(\mathbf{M}_1 \otimes \mathbf{M}_2)(\mathbf{M}_3 \otimes \mathbf{M}_4) = (\mathbf{M}_1\mathbf{M}_3) \otimes (\mathbf{M}_2 \mathbf{M}_4)$, we multiply the two last terms in  \eqref{eqn:alge_MGIG3} and obtain the first line in  \eqref{eqn:alge_MGIG4}. We then repeat the same mixed-product property to go from the first equation to the second equation and re-group terms: 
\begin{equation}
\label{eqn:alge_MGIG4}
\begin{aligned}
&\mathbf D_k^T[\left(\mathbf{\Omega}^{-1}\otimes (\mathbf{\Omega}^{-1}(\mathbf{B}^T\mathbf{X}^T\mathbf{X}+(\mathbf B-\mathbf B_0)^T\mathbf \Lambda^{-1})) \right)
\left[ \mathbf I\otimes (\mathbf{X}^T \mathbf{X}+\mathbf \Lambda^{-1})^{-1}(\mathbf X^T\mathbf X\mathbf B +\mathbf \Lambda^{-1}(\mathbf B-\mathbf B_0))\mathbf \Omega^{-1} \right]]\mathbf D_k\\
=&\mathbf D_k^T[\mathbf{\Omega}^{-1}\otimes \left((\mathbf{\Omega}^{-1}(\mathbf{B}^T\mathbf{X}^T\mathbf{X}+(\mathbf B-\mathbf B_0)^T\mathbf \Lambda^{-1}))((\mathbf{X}^T \mathbf{X}+\mathbf \Lambda^{-1})^{-1}(\mathbf X^T\mathbf X\mathbf B +\mathbf \Lambda^{-1}(\mathbf B-\mathbf B_0))\mathbf \Omega^{-1} )\right)]\mathbf D_k \\
=&\mathbf D_k^T[\mathbf{\Omega}^{-1}\otimes \left((\mathbf{\Omega}^{-1}(\mathbf{B}^T(\mathbf{X}^T\mathbf{X}+\mathbf \Lambda^{-1})-\mathbf B_0^T\mathbf \Lambda^{-1}))((\mathbf{X}^T \mathbf{X}+\mathbf \Lambda^{-1})^{-1}((\mathbf X^T\mathbf X+\mathbf \Lambda^{-1})\mathbf B -\mathbf \Lambda^{-1}\mathbf B_0)\mathbf \Omega^{-1} ) \right)]\mathbf D_k \\
=&\mathbf D_k^T[\mathbf{\Omega}^{-1}\otimes \left(\mathbf{\Omega}^{-1}(\mathbf{B}^T(\mathbf{X}^T\mathbf{X}+\mathbf \Lambda^{-1})-\mathbf B_0^T\mathbf \Lambda^{-1})(\mathbf{X}^T \mathbf{X}+\mathbf \Lambda^{-1})^{-1}((\mathbf X^T\mathbf X+\mathbf \Lambda^{-1})\mathbf B -\mathbf \Lambda^{-1}\mathbf B_0)\mathbf \Omega^{-1}  \right)]\mathbf D_k 
\end{aligned}
\end{equation}

The second factor in the Kronecker product in the middle is a quadratic form on $((\mathbf X^T\mathbf X+\mathbf \Lambda^{-1})\mathbf B -\mathbf \Lambda^{-1}\mathbf B_0)$, so we can expand it and re-group terms:
\begin{equation*}
\begin{aligned}
\mathbf D_k^T&[\mathbf{\Omega}^{-1}\otimes \left( \mathbf{\Omega}^{-1}\left[\mathbf B^T(\mathbf X^T\mathbf X+\mathbf \Lambda^{-1})\mathbf B-\mathbf B^T\mathbf \Lambda^{-1}\mathbf B_0-\mathbf B_0^T\mathbf \Lambda^{-1}\mathbf B+\mathbf B_0^T\mathbf \Lambda^{-1}(\mathbf X^T\mathbf X+\mathbf \Lambda^{-1})^{-1}\mathbf \Lambda^{-1}\mathbf B_0\right]\mathbf \Omega^{-1} \right)] \mathbf D_k \\
=&\mathbf D_k^T[\left(\mathbf{\Omega}^{-1}\otimes \mathbf{\Omega}^{-1}\left[\mathbf B^T\mathbf X^T\mathbf X\mathbf B+(\mathbf B-\mathbf B_0)^T\mathbf \Lambda^{-1}(\mathbf B-\mathbf B_0)^T+\mathbf B_0^T\mathbf \Lambda^{-1}(\mathbf X^T\mathbf X+\mathbf \Lambda^{-1})^{-1}\mathbf \Lambda^{-1}\mathbf B_0\right]\mathbf \Omega^{-1} \right)]\mathbf D_k \\
=&\mathbf D_k^T[\left(\mathbf{\Omega}^{-1}\otimes \mathbf{\Omega}^{-1}\left[\mathbf B^T\mathbf X^T\mathbf X\mathbf B+(\mathbf B-\mathbf B_0)^T\mathbf \Lambda^{-1}(\mathbf B-\mathbf B_0)^T\right]\mathbf \Omega^{-1} \right)]\mathbf D_k\\
&+\mathbf D_k^T[\left(\mathbf{\Omega}^{-1}\otimes \mathbf{\Omega}^{-1}\left[\mathbf B_0^T\mathbf \Lambda^{-1}(\mathbf X^T\mathbf X+\mathbf \Lambda^{-1})^{-1}\mathbf \Lambda^{-1}\mathbf B_0\right]\mathbf \Omega^{-1} \right)]\mathbf D_k
\end{aligned}
\end{equation*}

We can now look at the first term in  \eqref{eqn:MGIG_postprecision} and note that the part 
$$\mathbf D_k^T[\left(\mathbf{\Omega}^{-1}\otimes \mathbf{\Omega}^{-1}[\mathbf B^T\mathbf X^T\mathbf X\mathbf B+(\mathbf B-\mathbf B_0)^T\mathbf \Lambda^{-1}(\mathbf B-\mathbf B_0)^T]\mathbf \Omega^{-1} \right)]\mathbf D_k$$
cancels out, so the remaining of  \eqref{eqn:MGIG_postprecision} becomes:
\begin{align*}
    &\mathbf D_k^T\left[\mathbf \Omega^{-1}\otimes\left( \mathbf \Omega^{-1} \mathbf B_0^T\mathbf \Lambda^{-1}\mathbf B_0 \mathbf \Omega^{-1} + \mathbf \Omega^{-1}\mathbf \Psi\mathbf \Omega^{-1}-\mathbf \Omega^{-1} \mathbf B_0^T( \mathbf \Lambda^{-1}(\mathbf X^T\mathbf X+\mathbf \Lambda^{-1})^{-1}\mathbf \Lambda^{-1})\mathbf B_0 \mathbf \Omega^{-1}  \right) \right]\mathbf D_k\\
    &+\mathbf D_k^T\left[\mathbf \Omega^{-1}\otimes (\frac{n}{2}+\alpha)\mathbf \Omega^{-1}\right]\\
    =&\mathbf D_k^T\left[\mathbf \Omega^{-1}\otimes\left( (\frac{n}{2}+\alpha)\mathbf \Omega^{-1}+\mathbf \Omega^{-1} \mathbf B_0^T(\mathbf \Lambda^{-1} - \mathbf \Lambda^{-1}(\mathbf X^T\mathbf X+\mathbf \Lambda^{-1})^{-1}\mathbf \Lambda^{-1})\mathbf B_0 \mathbf \Omega^{-1} + \mathbf \Omega^{-1}\mathbf \Psi\mathbf \Omega^{-1} \right) \right]\mathbf D_k
\end{align*}

\section{Laplace approximation of the marginal posterior precision under the general independent prior}
\label{appGeneral}


Here, we show the bound inequality in  \eqref{eqn:bound_indep}: $\mathbf{EF}^{-1} \mathbf L(\mathbf I_{kp}+\mathbf L^T\mathbf F^{-1}\mathbf L)^{-1}\mathbf L^T(\mathbf{EF}^{-1})^{T}\le \mathbf{EF}^{-1}\mathbf\Lambda^{-1} (\mathbf{EF}^{-1})^T$.
%
Consider the difference $\mathbf{EF}^{-1}\mathbf\Lambda^{-1} (\mathbf{EF}^{-1})^T-
\mathbf{EF}^{-1} \mathbf L(\mathbf I_{kp}+\mathbf L^T\mathbf F^{-1}\mathbf L)^{-1}\mathbf L^T(\mathbf{EF}^{-1})^{T}$ which is equal to 

$\mathbf{EF}^{-1}\left(\mathbf\Lambda^{-1}-\mathbf L(\mathbf I_{kp}+\mathbf L^T\mathbf F^{-1}\mathbf L)^{-1}\mathbf L^T\right) (\mathbf{EF}^{-1})^T$.
 
Given that this is a quadratic form, it suffices to show that $\mathbf\Lambda^{-1}-\mathbf L(\mathbf I_{kp}+\mathbf L^T\mathbf F^{-1}\mathbf L)^{-1}\mathbf L^T$ is positive semi-definite. Again by  \eqref{eqn:inverse_for_positive} \citep{henderson1981deriving}, since $\mathbf L^T\mathbf F^{-1}\mathbf L$ is positive definite, we take the Cholesky decomposition of this matrix as $\mathbf Q\mathbf Q^T$. Recall that $\mathbf L \mathbf L^T=\mathbf \Lambda^{-1}$ by definition, thus 
\begin{equation*}
    \begin{aligned}
    \mathbf\Lambda^{-1}-\mathbf L(\mathbf I_{kp}+\mathbf L^T\mathbf F^{-1}\mathbf L)^{-1}\mathbf L^T&=\mathbf L \mathbf L^T-\mathbf L(\mathbf I_{kp}+\mathbf L^T\mathbf F^{-1}\mathbf L)^{-1}\mathbf L^T =\mathbf L(\mathbf I^{-1}_{kp}-(\mathbf I_{kp}+\mathbf L^T\mathbf F^{-1}\mathbf L)^{-1})\mathbf L^T\\
    &=\mathbf L(\mathbf I^{-1}_{kp}-(\mathbf I_{kp}+\mathbf Q\mathbf Q^T)^{-1})\mathbf L^T =\mathbf L\mathbf Q(\mathbf I_{kp}+\mathbf Q^T\mathbf Q)^{-1}\mathbf Q^T\mathbf L^T
    \end{aligned}
\end{equation*}
which is positive (semi-)definite since it is a quadratic form of a positive definite matrix. The last equality is per  \eqref{eqn:inverse_for_positive}, taking $\mathbf{A}=\mathbf I_{kp}$ and $\mathbf{U}=\mathbf{V}^T=\mathbf Q^T$.


\section{\revision{Simulations on other covariance structures}}
\label{app:other_graphs}

\revision{In this Appendix, we provide the results on the simulation of the remaining five covariance structures not shown in the main text.}

\subsection*{KL divergence compared to null design}
\begin{figure}[H]
    \centering
    \includegraphics[width = 0.8\linewidth]{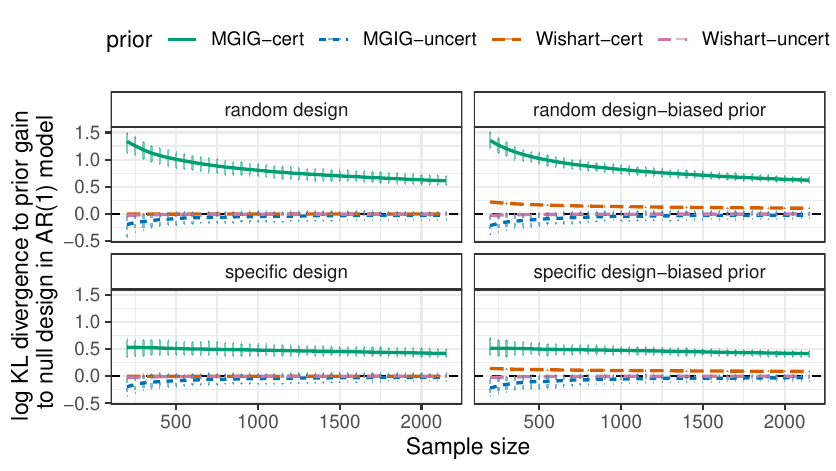}
    \caption{Difference in log KL divergence between prior and posterior comparing random experiment ($\mathbf X \ne \mathbf 0$) and specific experiment (diagonal $\mathbf X$) vs null experiment ($\mathbf X=\mathbf 0$) under AR1 models with 50 responses and 50 predictors, with and without biases on the prior of $\mathbf{B}$. Lines are averages over 100 repeats while error bars are 0.975 and 0.025 quantiles. }
    \label{fig:ar1_kl}
\end{figure}

\begin{figure}[H]
    \centering
    \includegraphics[width = 0.8\linewidth]{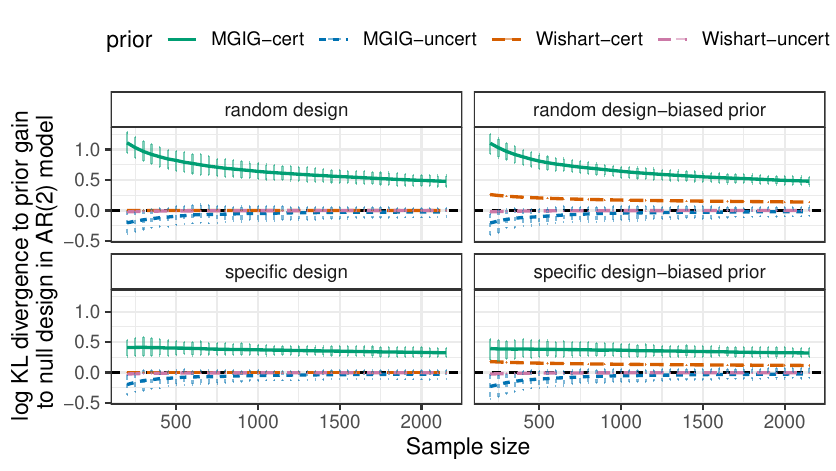}
    \caption{Difference in log KL divergence between prior and posterior comparing random experiment ($\mathbf X \ne \mathbf 0$) and specific experiment (diagonal $\mathbf X$) vs null experiment ($\mathbf X=\mathbf 0$) under AR2 models with 50 responses and 50 predictors, with and without biases on the prior of $\mathbf{B}$. Lines are averages over 100 repeats while error bars are 0.975 and 0.025 quantiles. }
    \label{fig:ar2_kl}
\end{figure}

\begin{figure}[H]
    \centering
    \includegraphics[width = 0.8\linewidth]{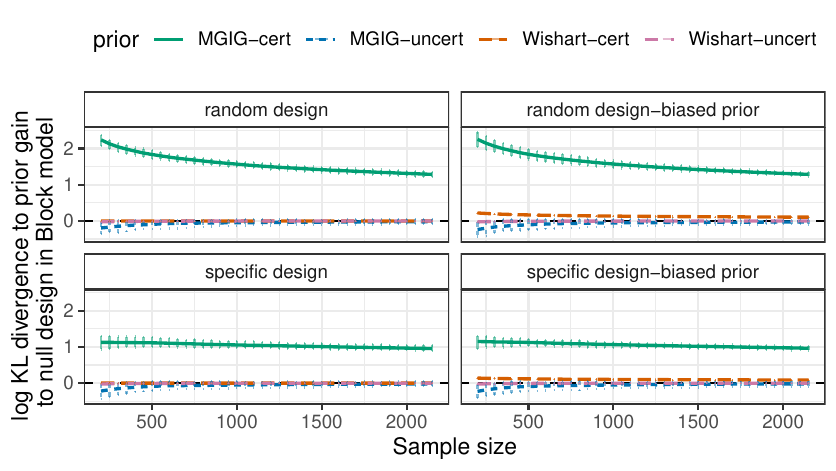}
    \caption{Difference in log KL divergence between prior and posterior comparing random experiment ($\mathbf X \ne \mathbf 0$) and specific experiment (diagonal $\mathbf X$) vs null experiment ($\mathbf X=\mathbf 0$) under block models with 50 responses and 50 predictors, with and without biases on the prior of $\mathbf{B}$. Lines are averages over 100 repeats while error bars are 0.975 and 0.025 quantiles. }
    \label{fig:block_kl}
\end{figure}

\begin{figure}[H]
    \centering
    \includegraphics[width = 0.8\linewidth]{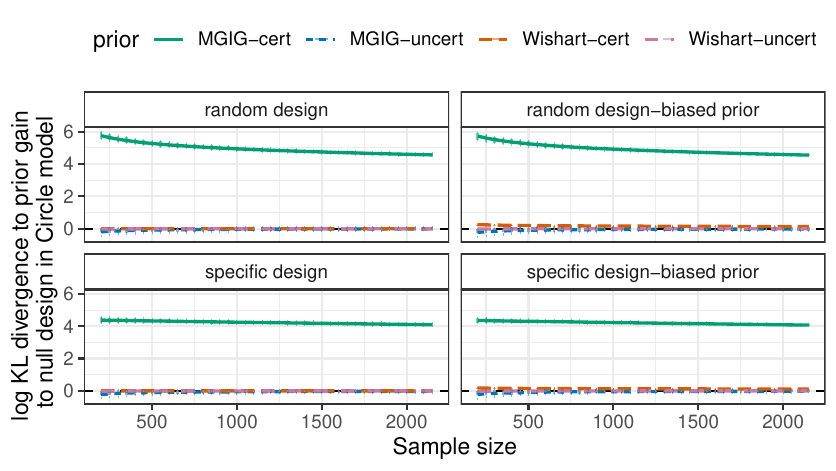}
    \caption{Difference in log KL divergence between prior and posterior comparing random experiment ($\mathbf X \ne \mathbf 0$) and specific experiment (diagonal $\mathbf X$) vs null experiment ($\mathbf X=\mathbf 0$) under circle models with 50 responses and 50 predictors, with and without biases on the prior of $\mathbf{B}$. Lines are averages over 100 repeats while error bars are 0.975 and 0.025 quantiles. }
    \label{fig:circle_kl}
\end{figure}

\begin{figure}[H]
    \centering
    \includegraphics[width = 0.8\linewidth]{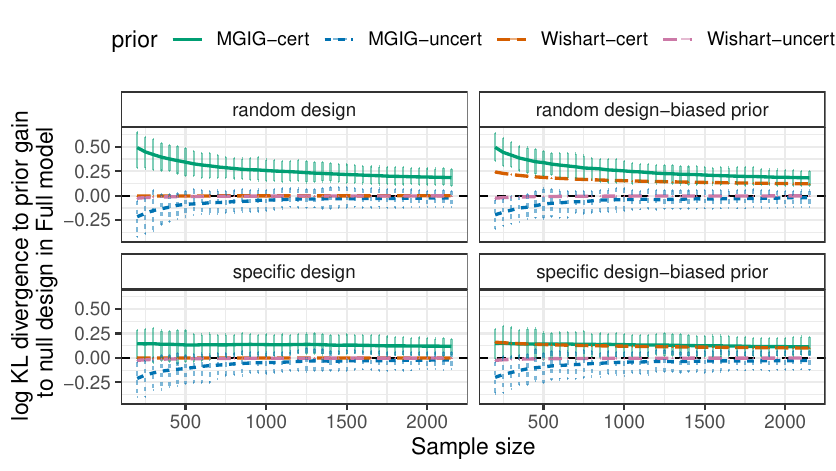}
    \caption{Difference in log KL divergence between prior and posterior comparing random experiment ($\mathbf X \ne \mathbf 0$) and specific experiment (diagonal $\mathbf X$) vs null experiment ($\mathbf X=\mathbf 0$) under Full models with 50 responses and 50 predictors, with and without biases on the prior of $\mathbf{B}$. Lines are averages over 100 repeats while error bars are 0.975 and 0.025 quantiles. }
    \label{fig:full_kl}
\end{figure}

\subsection*{Stein's loss compared to null design}
\revision{This set of experiments suggests that a good prior on $\mathbf{B}$ makes experiments more helpful to estimate $\Omega$, but with biased prior, this is not the case as positive Stein's loss change means a worse point estimate compared to null design.} 

\begin{figure}[H]
    \centering
    \includegraphics[width = 0.8\linewidth]{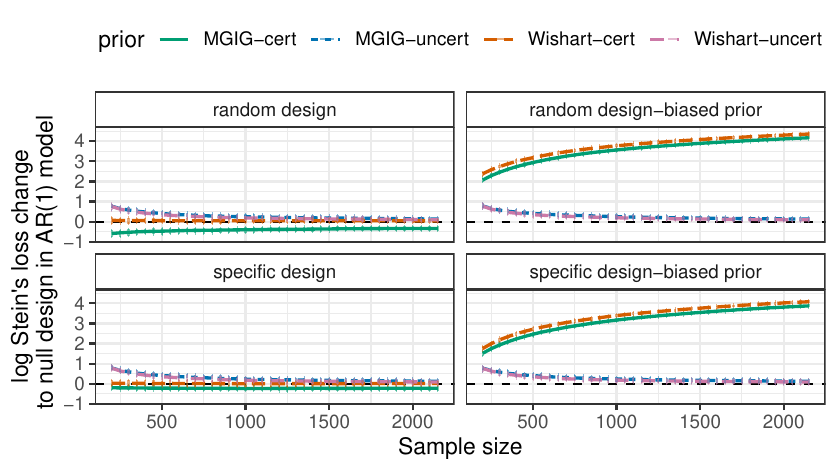}
    \caption{Difference in log Stein's loss of random experiment ($\mathbf X \ne \mathbf 0$) and specific experiment (diagonal $\mathbf X$) vs null experiment ($\mathbf X=\mathbf 0$) under AR1 models with 50 responses and 50 predictors, with and without biases on the prior of $\mathbf{B}$. Lines are averages over 100 repeats while error bars are 0.975 and 0.025 quantiles. }
    \label{fig:ar1stein}
\end{figure}

\begin{figure}[H]
    \centering
    \includegraphics[width = 0.8\linewidth]{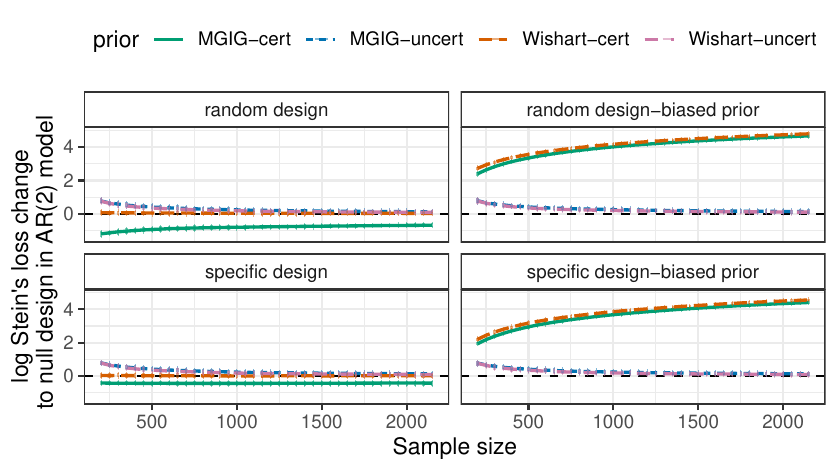}
    \caption{Difference in log Stein's loss of random experiment ($\mathbf X \ne \mathbf 0$) and specific experiment (diagonal $\mathbf X$) vs null experiment ($\mathbf X=\mathbf 0$) under AR2 models with 50 responses and 50 predictors, with and without biases on the prior of $\mathbf{B}$. Lines are averages over 100 repeats while error bars are 0.975 and 0.025 quantiles. }
    \label{fig:ar2stein}
\end{figure}

\begin{figure}[H]
    \centering
    \includegraphics[width = 0.8\linewidth]{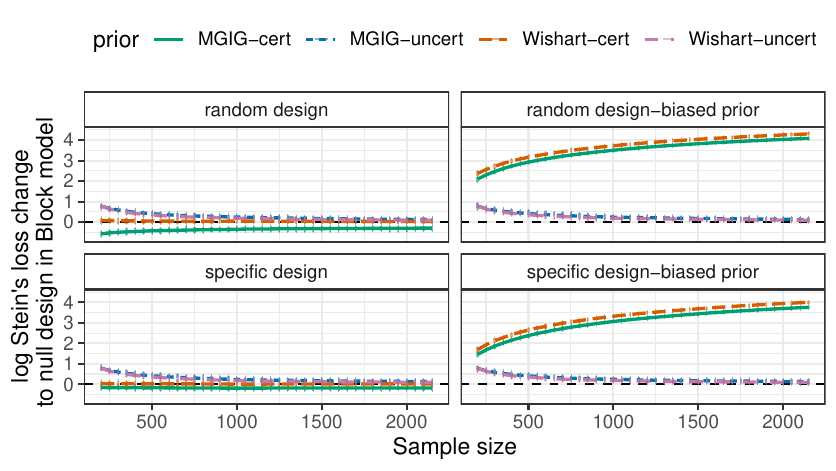}
    \caption{Difference in log Stein's loss of random experiment ($\mathbf X \ne \mathbf 0$) and specific experiment (diagonal $\mathbf X$) vs null experiment ($\mathbf X=\mathbf 0$) under Block models with 50 responses and 50 predictors, with and without biases on the prior of $\mathbf{B}$. Lines are averages over 100 repeats while error bars are 0.975 and 0.025 quantiles. }
    \label{fig:Blockstein}
\end{figure}

\begin{figure}[H]
    \centering
    \includegraphics[width = 0.8\linewidth]{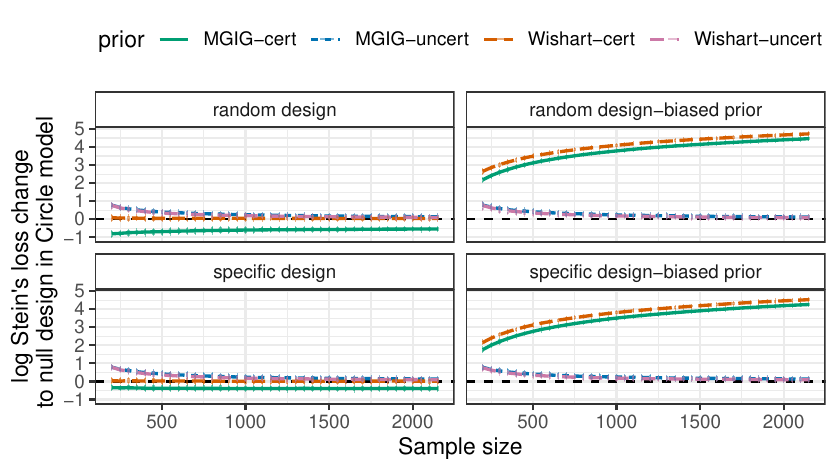}
    \caption{Difference in log Stein's loss of random experiment ($\mathbf X \ne \mathbf 0$) and specific experiment (diagonal $\mathbf X$) vs null experiment ($\mathbf X=\mathbf 0$) under Circle models with 50 responses and 50 predictors, with and without biases on the prior of $\mathbf{B}$. Lines are averages over 100 repeats while error bars are 0.975 and 0.025 quantiles. }
    \label{fig:circlestein}
\end{figure}

\begin{figure}[H]
    \centering
    \includegraphics[width = 0.8\linewidth]{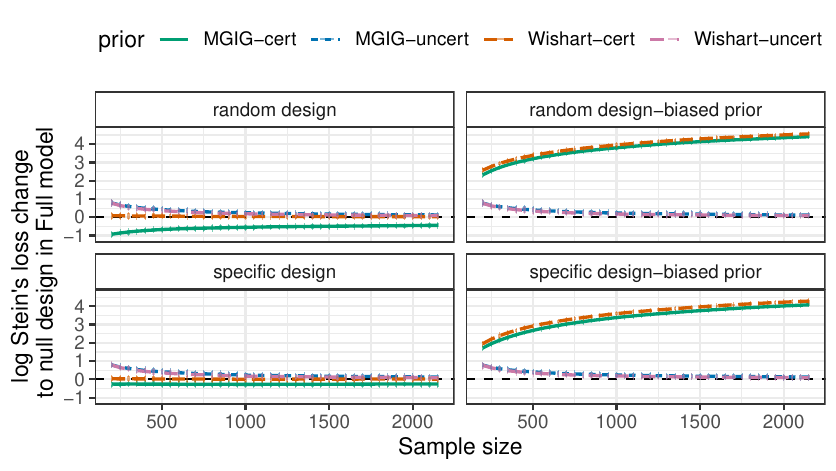}
    \caption{Difference in log Stein's loss of random experiment ($\mathbf X \ne \mathbf 0$) and specific experiment (diagonal $\mathbf X$) vs null experiment ($\mathbf X=\mathbf 0$) under Full models with 50 responses and 50 predictors, with and without biases on the prior of $\mathbf{B}$. Lines are averages over 100 repeats while error bars are 0.975 and 0.025 quantiles. }
    \label{fig:fullstein}
\end{figure}

\section{\revision{Simulations with smaller bias in the prior for $\mathbf{B}$}}
\label{app:less_biased}


\revision{Here, we explore how the degree of bias in the prior of $\mathbf B$ (the effect of experiments) can influence the inference of the network $\mathbf \Omega$. We choose the same setting as the simulations in the main text (Section \ref{sec:simulation}), except that here the prior mean of $\mathbf B$ is $\mathbf B_0= \mathbf B+\epsilon$ where $\epsilon\sim N(0,0.1)$. In general, we observe the same results as without bias in the prior and having certain prior on $\mathbf{B}$ makes experiment useful in inferring $\mathbf{\Omega}$.}

\subsection{KL divergence compared to null experiment}
\begin{figure}[H]
    \centering
    \includegraphics[width = 0.8\linewidth]{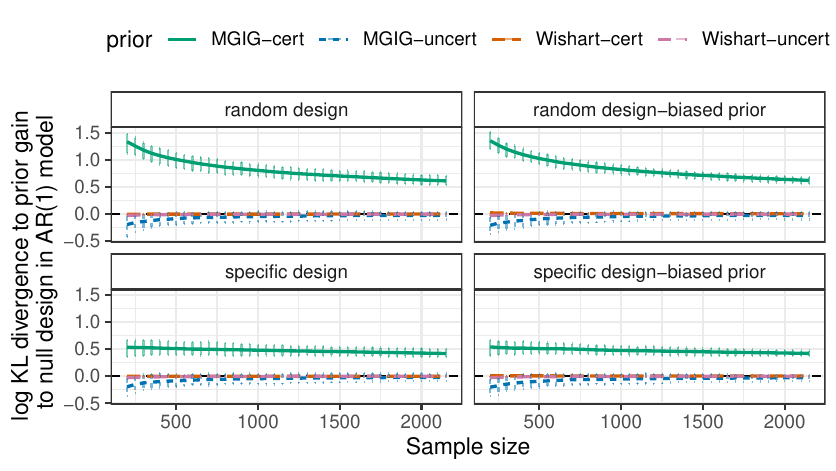}
    \caption{Difference in log KL divergence between prior and posterior comparing random experiment ($\mathbf X \ne \mathbf 0$) and specific experiment (diagonal $\mathbf X$) vs null experiment ($\mathbf X=\mathbf 0$) under AR1 models with 50 responses and 50 predictors, with and without biases on the prior of $\mathbf{B}$. Lines are averages over 100 repeats while error bars are 0.975 and 0.025 quantiles. }
    \label{fig:ar1_kl2}
\end{figure}

\begin{figure}[H]
    \centering
    \includegraphics[width = 0.8\linewidth]{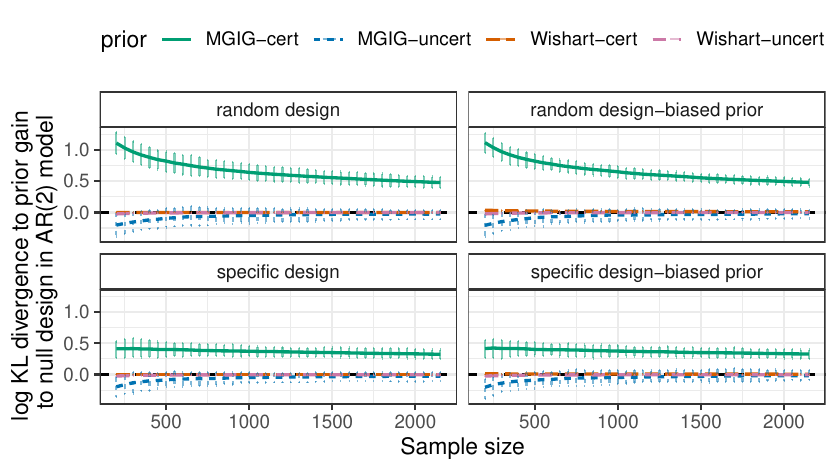}
    \caption{Difference in log KL divergence between prior and posterior comparing random experiment ($\mathbf X \ne \mathbf 0$) and specific experiment (diagonal $\mathbf X$) vs null experiment ($\mathbf X=\mathbf 0$) under AR2 models with 50 responses and 50 predictors, with and without biases on the prior of $\mathbf{B}$. Lines are averages over 100 repeats while error bars are 0.975 and 0.025 quantiles. }
    \label{fig:ar2_kl}
\end{figure}

\begin{figure}[H]
    \centering
    \includegraphics[width = 0.8\linewidth]{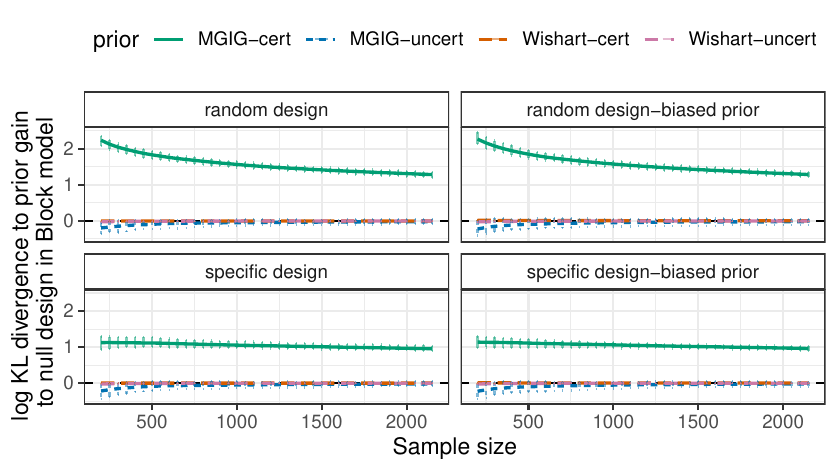}
    \caption{Difference in log KL divergence between prior and posterior comparing random experiment ($\mathbf X \ne \mathbf 0$) and specific experiment (diagonal $\mathbf X$) vs null experiment ($\mathbf X=\mathbf 0$) under block models with 50 responses and 50 predictors, with and without biases on the prior of $\mathbf{B}$. Lines are averages over 100 repeats while error bars are 0.975 and 0.025 quantiles. }
    \label{fig:block_kl}
\end{figure}

\begin{figure}[H]
    \centering
    \includegraphics[width = 0.8\linewidth]{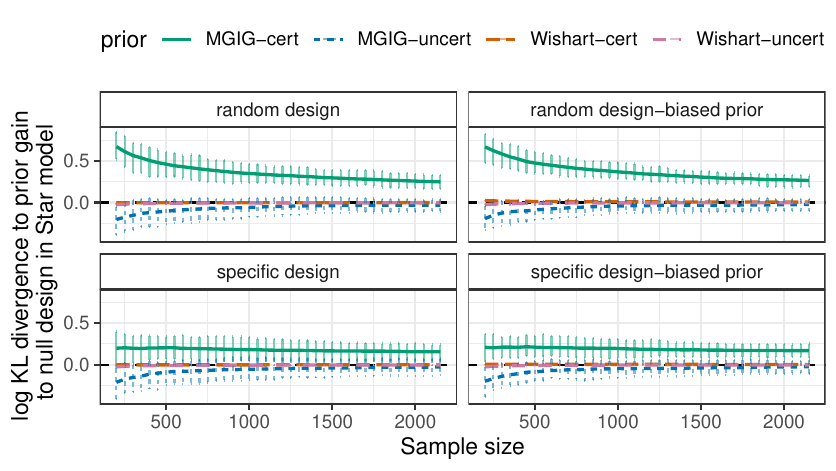}
    \caption{Difference in log KL divergence between prior and posterior comparing random experiment ($\mathbf X \ne \mathbf 0$) and specific experiment (diagonal $\mathbf X$) vs null experiment ($\mathbf X=\mathbf 0$) under Star models with 50 responses and 50 predictors, with and without biases on the prior of $\mathbf{B}$. Lines are averages over 100 repeats while error bars are 0.975 and 0.025 quantiles. }
    \label{fig:block_kl}
\end{figure}

\begin{figure}[H]
    \centering
    \includegraphics[width = 0.8\linewidth]{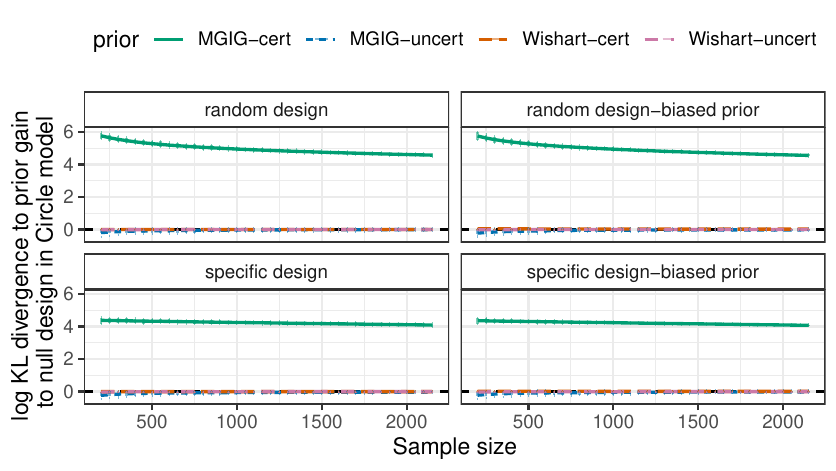}
    \caption{Difference in log KL divergence between prior and posterior comparing random experiment ($\mathbf X \ne \mathbf 0$) and specific experiment (diagonal $\mathbf X$) vs null experiment ($\mathbf X=\mathbf 0$) under circle models with 50 responses and 50 predictors, with and without biases on the prior of $\mathbf{B}$. Lines are averages over 100 repeats while error bars are 0.975 and 0.025 quantiles. }
    \label{fig:circle_kl}
\end{figure}

\begin{figure}[H]
    \centering
    \includegraphics[width = 0.8\linewidth]{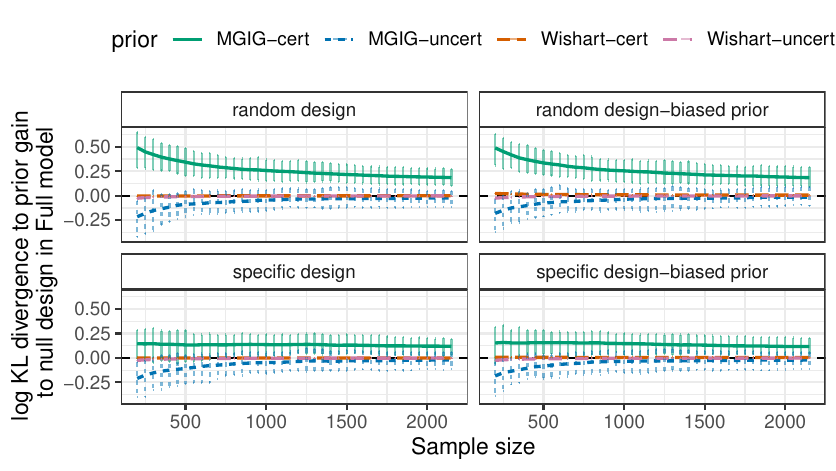}
    \caption{Difference in log KL divergence between prior and posterior comparing random experiment ($\mathbf X \ne \mathbf 0$) and specific experiment (diagonal $\mathbf X$) vs null experiment ($\mathbf X=\mathbf 0$) under Full models with 50 responses and 50 predictors, with and without biases on the prior of $\mathbf{B}$. Lines are averages over 100 repeats while error bars are 0.975 and 0.025 quantiles. }
    \label{fig:full_kl}
\end{figure}

\subsection{Stein's loss compared to null experiment}
\revision{This set of experiments suggests that good prior on $\mathbf{B}$ make experiments helpful in the point estimate of $\Omega$ in other models, but with biased prior this is not the case as positive Stein's loss change means a worse point estimate compared to null design.} 

\begin{figure}[H]
    \centering
    \includegraphics[width = 0.8\linewidth]{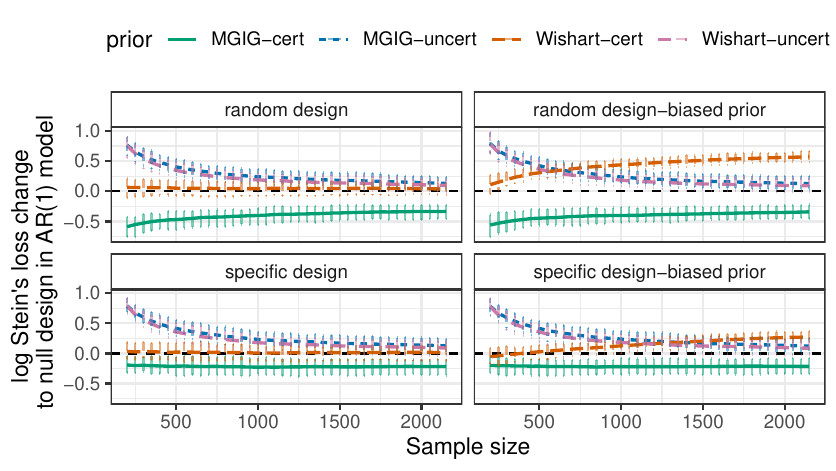}
    \caption{Difference in log Stein's loss of random experiment ($\mathbf X \ne \mathbf 0$) and specific experiment (diagonal $\mathbf X$) vs null experiment ($\mathbf X=\mathbf 0$) under AR1 models with 50 responses and 50 predictors, with and without biases on the prior of $\mathbf{B}$. Lines are averages over 100 repeats while error bars are 0.975 and 0.025 quantiles. }
    \label{fig:ar1stein}
\end{figure}

\begin{figure}[H]
    \centering
    \includegraphics[width = 0.8\linewidth]{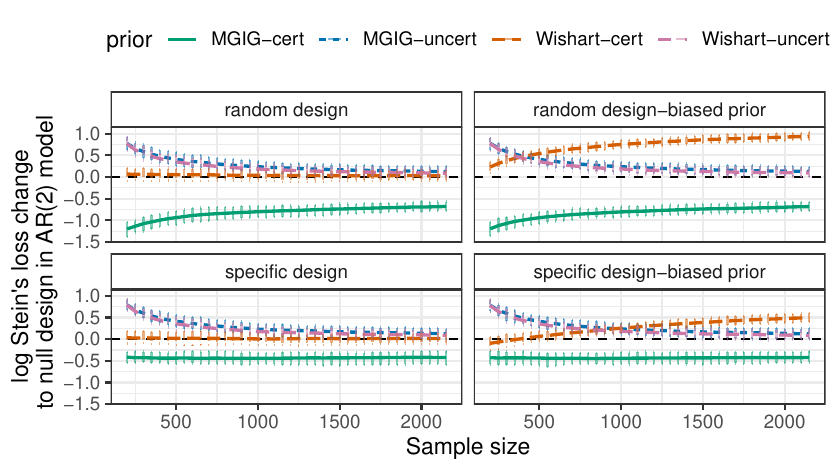}
    \caption{Difference in log Stein's loss of random experiment ($\mathbf X \ne \mathbf 0$) and specific experiment (diagonal $\mathbf X$) vs null experiment ($\mathbf X=\mathbf 0$) under AR2 models with 50 responses and 50 predictors, with and without biases on the prior of $\mathbf{B}$. Lines are averages over 100 repeats while error bars are 0.975 and 0.025 quantiles. }
    \label{fig:ar2stein}
\end{figure}

\begin{figure}[H]
    \centering
    \includegraphics[width = 0.8\linewidth]{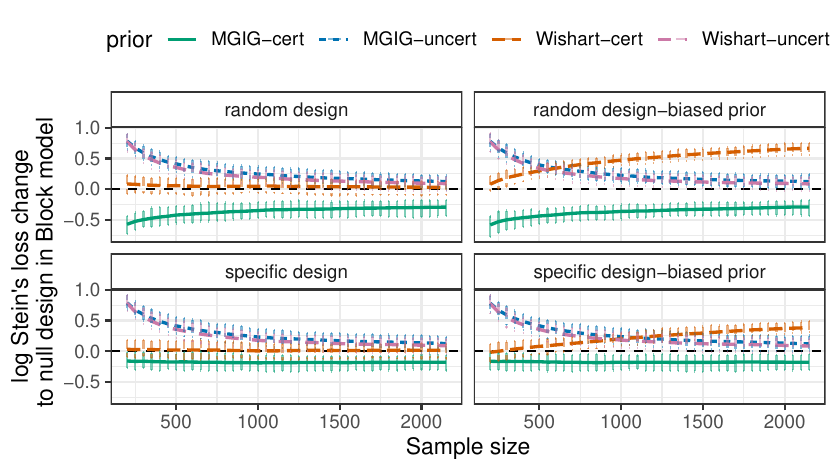}
    \caption{Difference in log Stein's loss of random experiment ($\mathbf X \ne \mathbf 0$) and specific experiment (diagonal $\mathbf X$) vs null experiment ($\mathbf X=\mathbf 0$) under Block models with 50 responses and 50 predictors, with and without biases on the prior of $\mathbf{B}$. Lines are averages over 100 repeats while error bars are 0.975 and 0.025 quantiles. }
    \label{fig:Blockstein}
\end{figure}

\begin{figure}[H]
    \centering
    \includegraphics[width = 0.8\linewidth]{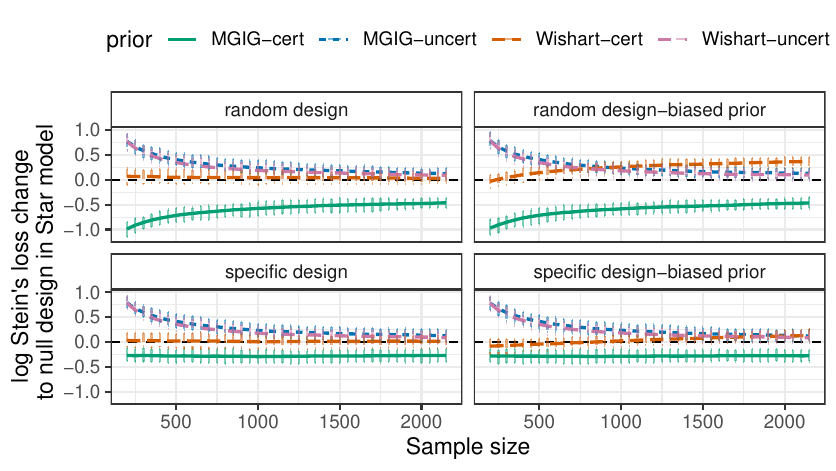}
    \caption{Difference in log Stein's loss of random experiment ($\mathbf X \ne \mathbf 0$) and specific experiment (diagonal $\mathbf X$) vs null experiment ($\mathbf X=\mathbf 0$) under Star models with 50 responses and 50 predictors, with and without biases on the prior of $\mathbf{B}$. Lines are averages over 100 repeats while error bars are 0.975 and 0.025 quantiles. }
    \label{fig:Blockstein}
\end{figure}

\begin{figure}[H]
    \centering
    \includegraphics[width = 0.8\linewidth]{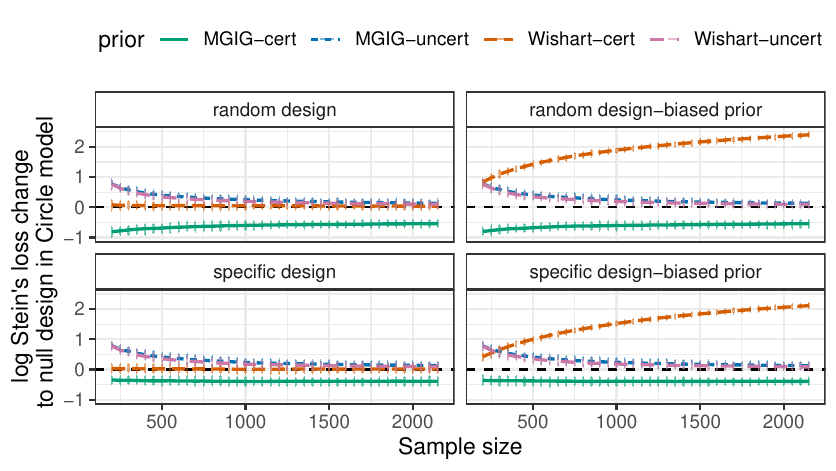}
    \caption{Difference in log Stein's loss of random experiment ($\mathbf X \ne \mathbf 0$) and specific experiment (diagonal $\mathbf X$) vs null experiment ($\mathbf X=\mathbf 0$) under Circle models with 50 responses and 50 predictors, with and without biases on the prior of $\mathbf{B}$. Lines are averages over 100 repeats while error bars are 0.975 and 0.025 quantiles. }
    \label{fig:circlestein}
\end{figure}

\begin{figure}[H]
    \centering
    \includegraphics[width = 0.8\linewidth]{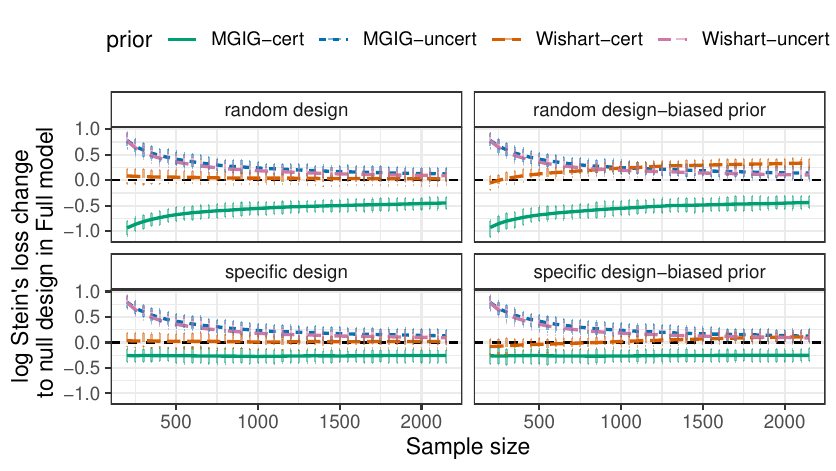}
    \caption{Difference in log Stein's loss of random experiment ($\mathbf X \ne \mathbf 0$) and specific experiment (diagonal $\mathbf X$) vs null experiment ($\mathbf X=\mathbf 0$) under Full models with 50 responses and 50 predictors, with and without biases on the prior of $\mathbf{B}$. Lines are averages over 100 repeats while error bars are 0.975 and 0.025 quantiles. }
    \label{fig:fullstein}
\end{figure}

\section{\revision{Simulations with midpoint level uncertainty on $\mathbf{B}$}}
\label{app:miduncertainty}

\revision{Here, we changed the uncertainty level to $\mathbf{\Lambda}=0.1\mathbf{I}_p, 10\mathbf{I}_p$, with two different levels of bias. A general observation is that with bias in the prior having some uncertainty will be helpful in small sample size, but not with more data. }

\subsection*{\revision{Low bias scheme}}
\revision{Here, we set the bias of the prior to $N(0,0.1)$.}

\subsubsection*{KL divergence compared to null experiment}
\begin{figure}[H]
    \centering
    \includegraphics[width = 0.8\linewidth]{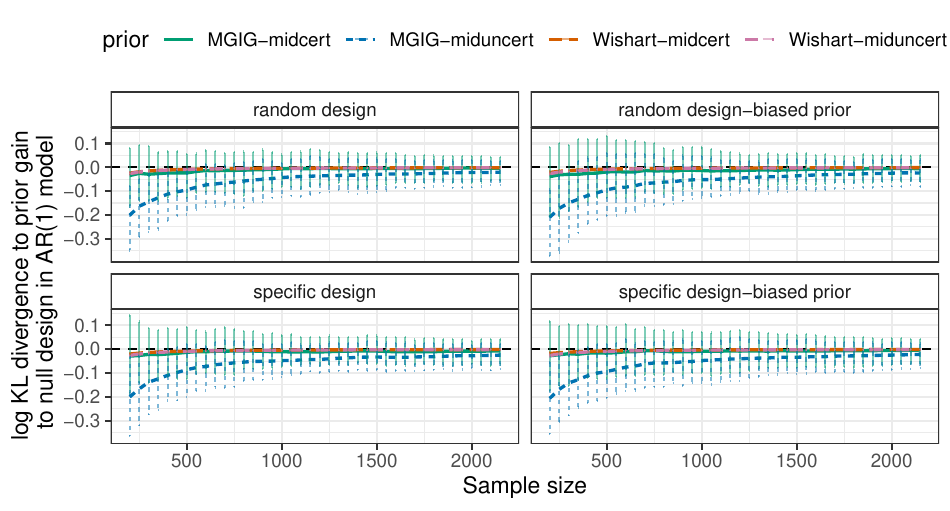}
    \caption{Difference in log KL divergence between prior and posterior comparing random experiment ($\mathbf X \ne \mathbf 0$) and specific experiment (diagonal $\mathbf X$) vs null experiment ($\mathbf X=\mathbf 0$) under AR1 models with 50 responses and 50 predictors, with and without biases on the prior of $\mathbf{B}$. Lines are averages over 100 repeats while error bars are 0.975 and 0.025 quantiles. }
    \label{fig:ar1_kl2}
\end{figure}

\begin{figure}[H]
    \centering
    \includegraphics[width = 0.8\linewidth]{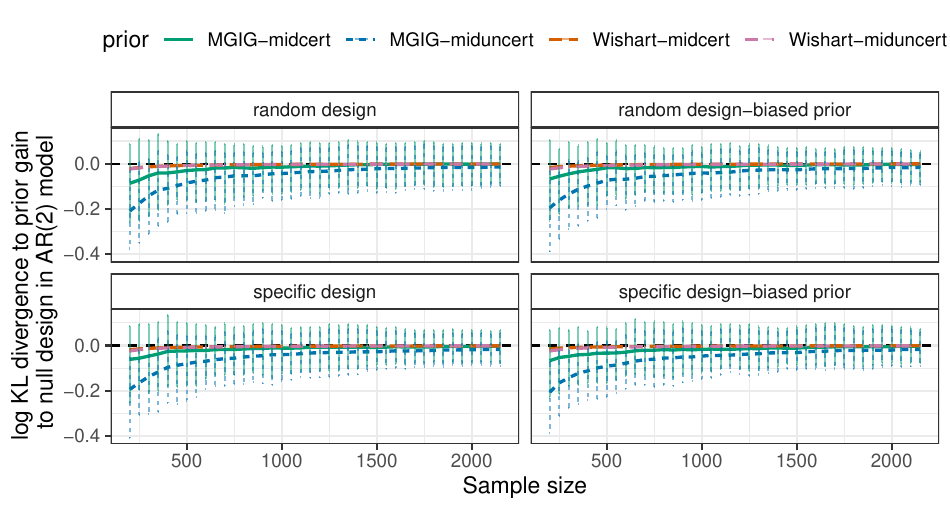}
    \caption{Difference in log KL divergence between prior and posterior comparing random experiment ($\mathbf X \ne \mathbf 0$) and specific experiment (diagonal $\mathbf X$) vs null experiment ($\mathbf X=\mathbf 0$) under AR2 models with 50 responses and 50 predictors, with and without biases on the prior of $\mathbf{B}$. Lines are averages over 100 repeats while error bars are 0.975 and 0.025 quantiles. }
    \label{fig:ar2_kl}
\end{figure}

\begin{figure}[H]
    \centering
    \includegraphics[width = 0.8\linewidth]{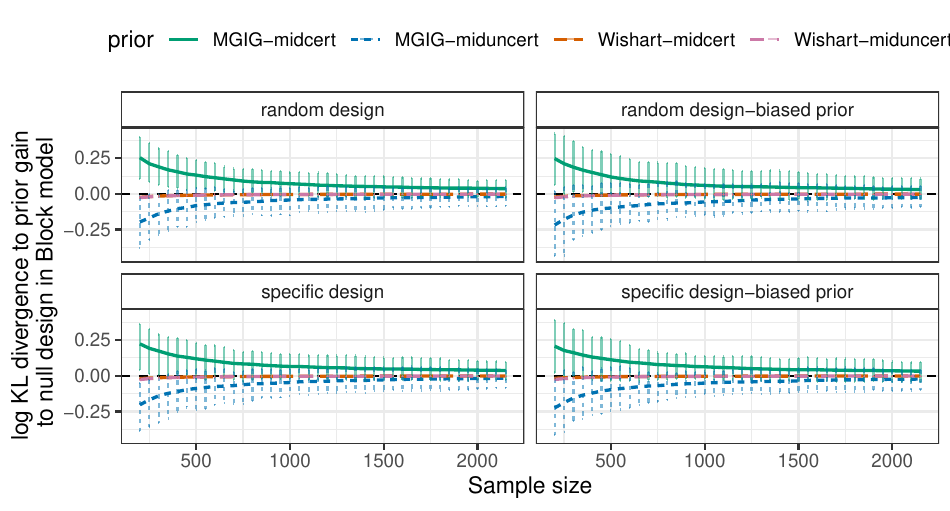}
    \caption{Difference in log KL divergence between prior and posterior comparing random experiment ($\mathbf X \ne \mathbf 0$) and specific experiment (diagonal $\mathbf X$) vs null experiment ($\mathbf X=\mathbf 0$) under block models with 50 responses and 50 predictors, with and without biases on the prior of $\mathbf{B}$. Lines are averages over 100 repeats while error bars are 0.975 and 0.025 quantiles. }
    \label{fig:block_kl}
\end{figure}

\begin{figure}[H]
    \centering
    \includegraphics[width = 0.8\linewidth]{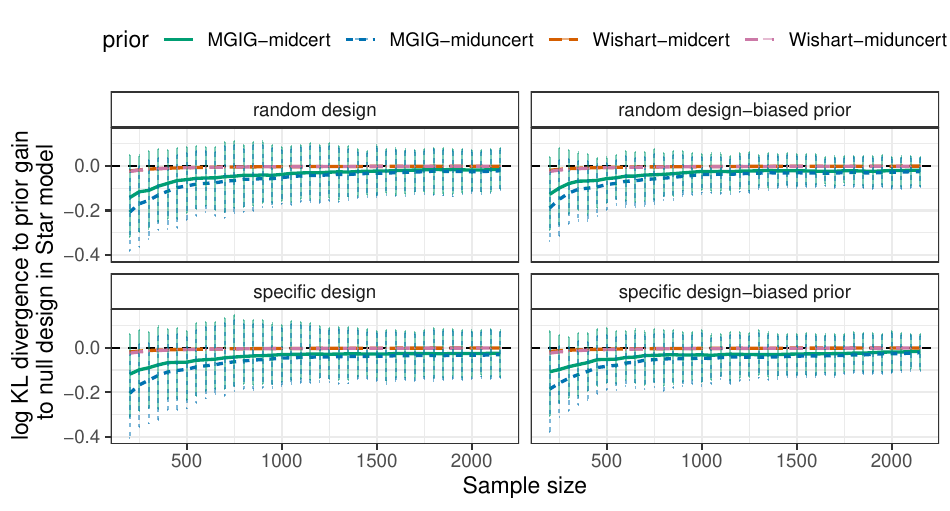}
    \caption{Difference in log KL divergence between prior and posterior comparing random experiment ($\mathbf X \ne \mathbf 0$) and specific experiment (diagonal $\mathbf X$) vs null experiment ($\mathbf X=\mathbf 0$) under Star models with 50 responses and 50 predictors, with and without biases on the prior of $\mathbf{B}$. Lines are averages over 100 repeats while error bars are 0.975 and 0.025 quantiles. }
    \label{fig:block_kl}
\end{figure}

\begin{figure}[H]
    \centering
    \includegraphics[width = 0.8\linewidth]{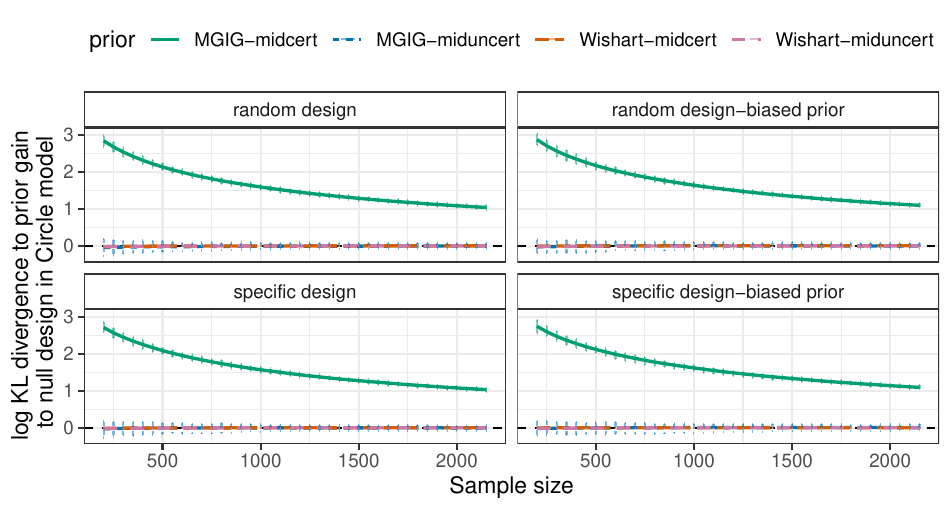}
    \caption{Difference in log KL divergence between prior and posterior comparing random experiment ($\mathbf X \ne \mathbf 0$) and specific experiment (diagonal $\mathbf X$) vs null experiment ($\mathbf X=\mathbf 0$) under circle models with 50 responses and 50 predictors, with and without biases on the prior of $\mathbf{B}$. Lines are averages over 100 repeats while error bars are 0.975 and 0.025 quantiles. }
    \label{fig:circle_kl}
\end{figure}

\begin{figure}[H]
    \centering
    \includegraphics[width = 0.8\linewidth]{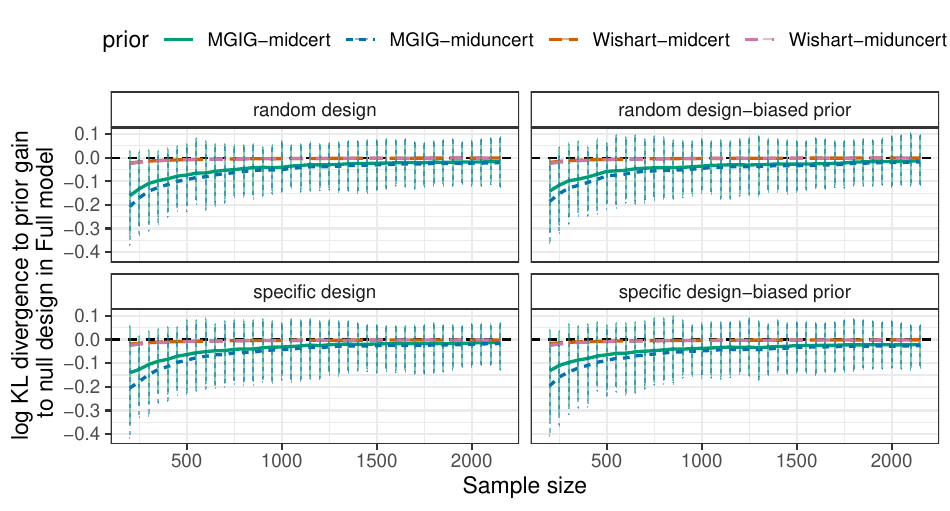}
    \caption{Difference in log KL divergence between prior and posterior comparing random experiment ($\mathbf X \ne \mathbf 0$) and specific experiment (diagonal $\mathbf X$) vs null experiment ($\mathbf X=\mathbf 0$) under Full models with 50 responses and 50 predictors, with and without biases on the prior of $\mathbf{B}$. Lines are averages over 100 repeats while error bars are 0.975 and 0.025 quantiles. }
    \label{fig:full_kl}
\end{figure}

\subsubsection*{Stein's loss compared to null experiment}
\revision{This set of experiments suggests that good prior on $\mathbf{B}$ make experiments helpful in the point estimate of $\Omega$ in other models, but with biased prior this is not the case as more positive Stein's loss change means a worse point estimate compared to null design.}

\begin{figure}[H]
    \centering
    \includegraphics[width = 0.8\linewidth]{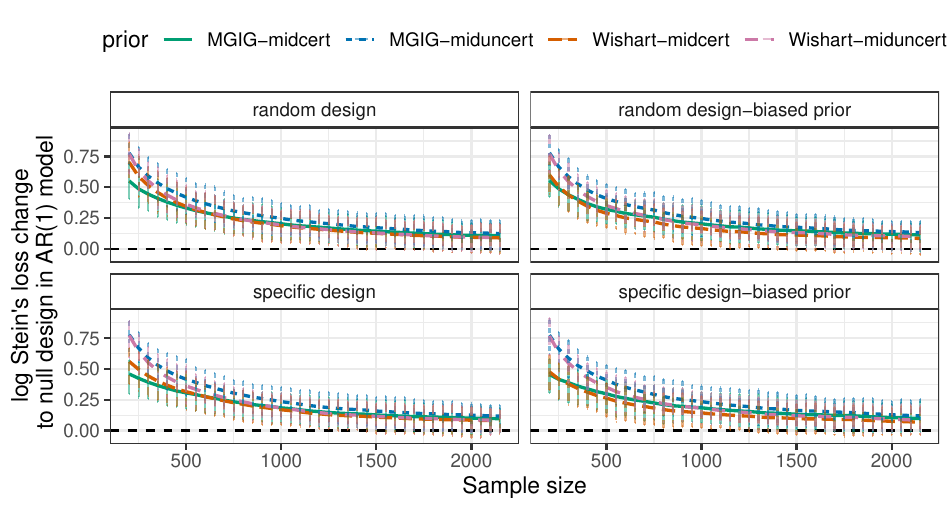}
    \caption{Difference in log Stein's loss of random experiment ($\mathbf X \ne \mathbf 0$) and specific experiment (diagonal $\mathbf X$) vs null experiment ($\mathbf X=\mathbf 0$) under AR1 models with 50 responses and 50 predictors, with and without biases on the prior of $\mathbf{B}$. Lines are averages over 100 repeats while error bars are 0.975 and 0.025 quantiles. }
    \label{fig:ar1stein}
\end{figure}

\begin{figure}[H]
    \centering
    \includegraphics[width = 0.8\linewidth]{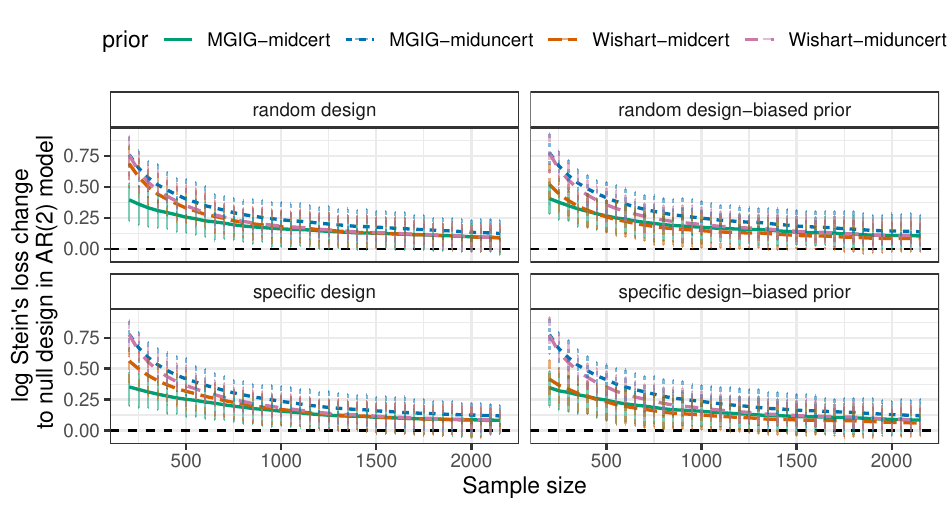}
    \caption{Difference in log Stein's loss of random experiment ($\mathbf X \ne \mathbf 0$) and specific experiment (diagonal $\mathbf X$) vs null experiment ($\mathbf X=\mathbf 0$) under AR2 models with 50 responses and 50 predictors, with and without biases on the prior of $\mathbf{B}$. Lines are averages over 100 repeats while error bars are 0.975 and 0.025 quantiles. }
    \label{fig:ar2stein}
\end{figure}

\begin{figure}[H]
    \centering
    \includegraphics[width = 0.8\linewidth]{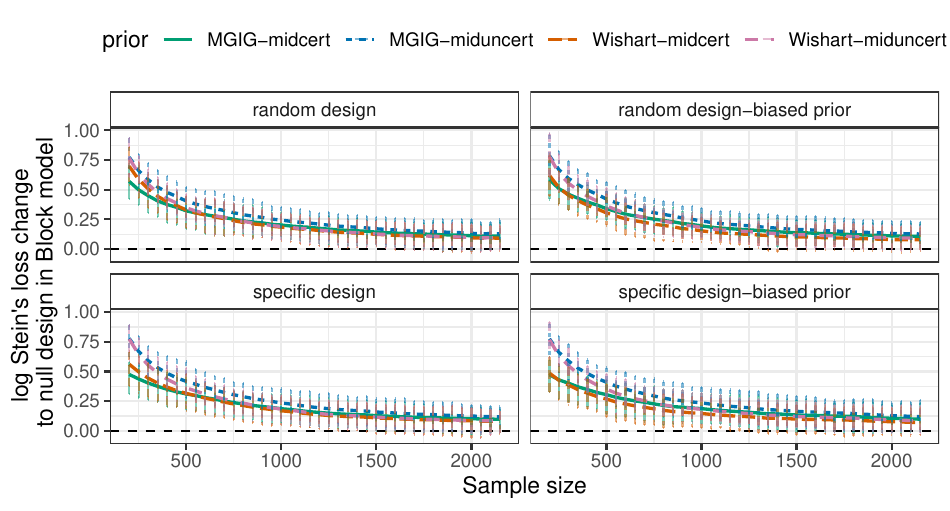}
    \caption{Difference in log Stein's loss of random experiment ($\mathbf X \ne \mathbf 0$) and specific experiment (diagonal $\mathbf X$) vs null experiment ($\mathbf X=\mathbf 0$) under Block models with 50 responses and 50 predictors, with and without biases on the prior of $\mathbf{B}$. Lines are averages over 100 repeats while error bars are 0.975 and 0.025 quantiles. }
    \label{fig:Blockstein}
\end{figure}

\begin{figure}[H]
    \centering
    \includegraphics[width = 0.8\linewidth]{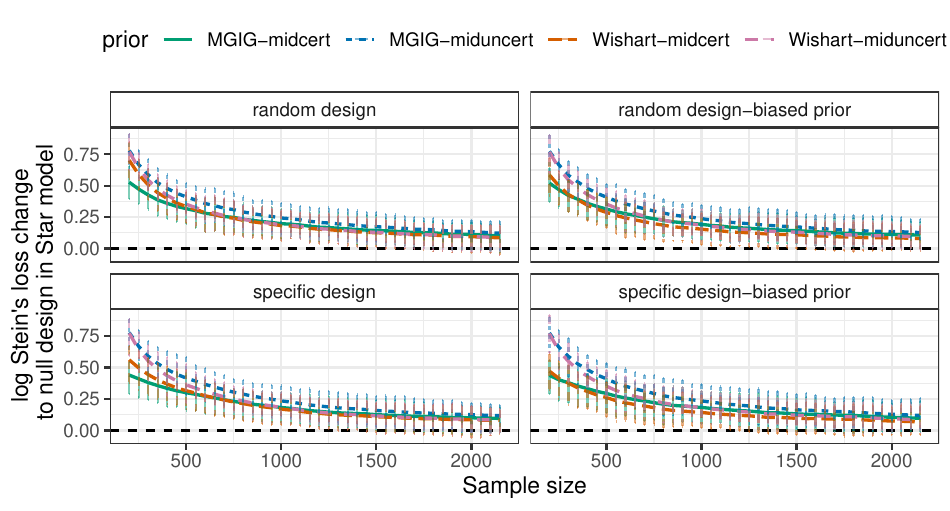}
    \caption{Difference in log Stein's loss of random experiment ($\mathbf X \ne \mathbf 0$) and specific experiment (diagonal $\mathbf X$) vs null experiment ($\mathbf X=\mathbf 0$) under Star models with 50 responses and 50 predictors, with and without biases on the prior of $\mathbf{B}$. Lines are averages over 100 repeats while error bars are 0.975 and 0.025 quantiles. }
    \label{fig:Blockstein}
\end{figure}

\begin{figure}[H]
    \centering
    \includegraphics[width = 0.8\linewidth]{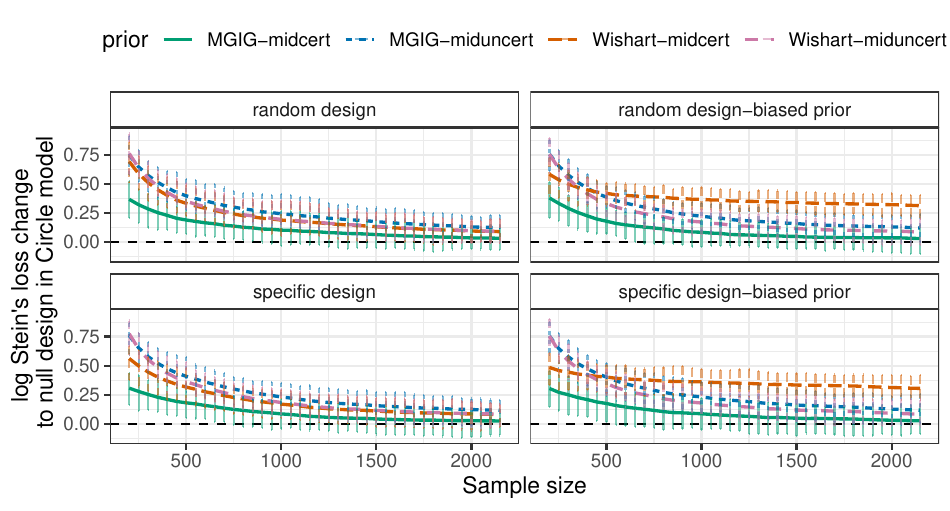}
    \caption{Difference in log Stein's loss of random experiment ($\mathbf X \ne \mathbf 0$) and specific experiment (diagonal $\mathbf X$) vs null experiment ($\mathbf X=\mathbf 0$) under Circle models with 50 responses and 50 predictors, with and without biases on the prior of $\mathbf{B}$. Lines are averages over 100 repeats while error bars are 0.975 and 0.025 quantiles. }
    \label{fig:circlestein}
\end{figure}

\begin{figure}[H]
    \centering
    \includegraphics[width = 0.8\linewidth]{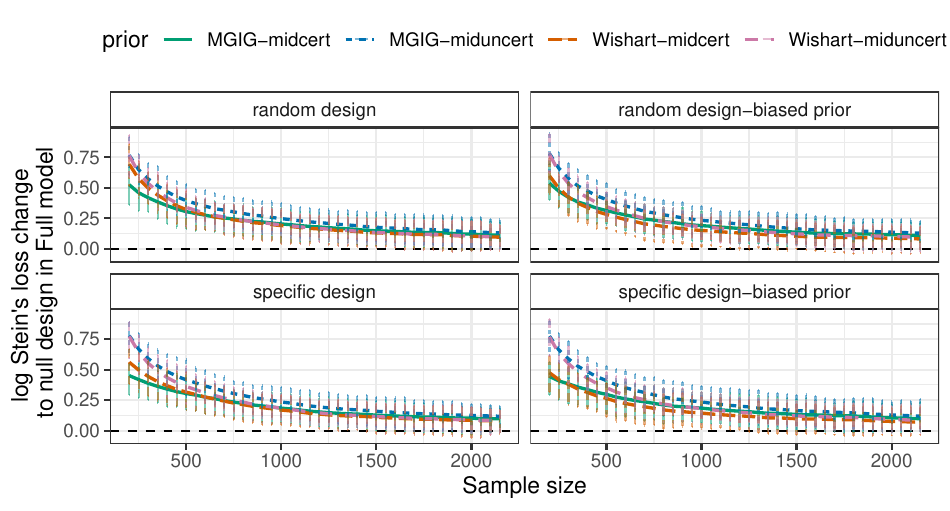}
    \caption{Difference in log Stein's loss of random experiment ($\mathbf X \ne \mathbf 0$) and specific experiment (diagonal $\mathbf X$) vs null experiment ($\mathbf X=\mathbf 0$) under Full models with 50 responses and 50 predictors, with and without biases on the prior of $\mathbf{B}$. Lines are averages over 100 repeats while error bars are 0.975 and 0.025 quantiles. }
    \label{fig:fullstein}
\end{figure}

\subsection*{\revision{High bias scheme}}
\revision{Here, we set the bias of the prior to $N(0,1)$.}

\subsubsection*{KL divergence compared to null experiment}
\begin{figure}[H]
    \centering
    \includegraphics[width = 0.8\linewidth]{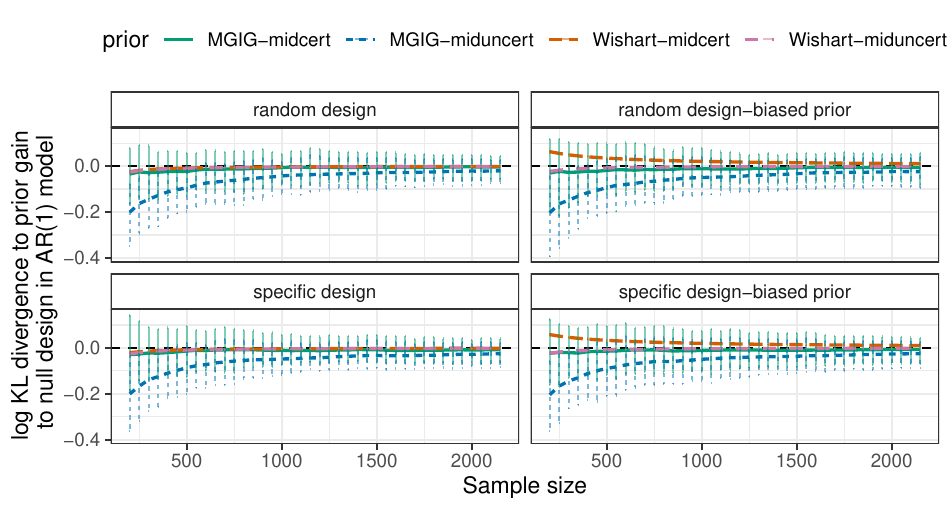}
    \caption{Difference in log KL divergence between prior and posterior comparing random experiment ($\mathbf X \ne \mathbf 0$) and specific experiment (diagonal $\mathbf X$) vs null experiment ($\mathbf X=\mathbf 0$) under AR1 models with 50 responses and 50 predictors, with and without biases on the prior of $\mathbf{B}$. Lines are averages over 100 repeats while error bars are 0.975 and 0.025 quantiles. }
    \label{fig:ar1_kl2}
\end{figure}

\begin{figure}[H]
    \centering
    \includegraphics[width = 0.8\linewidth]{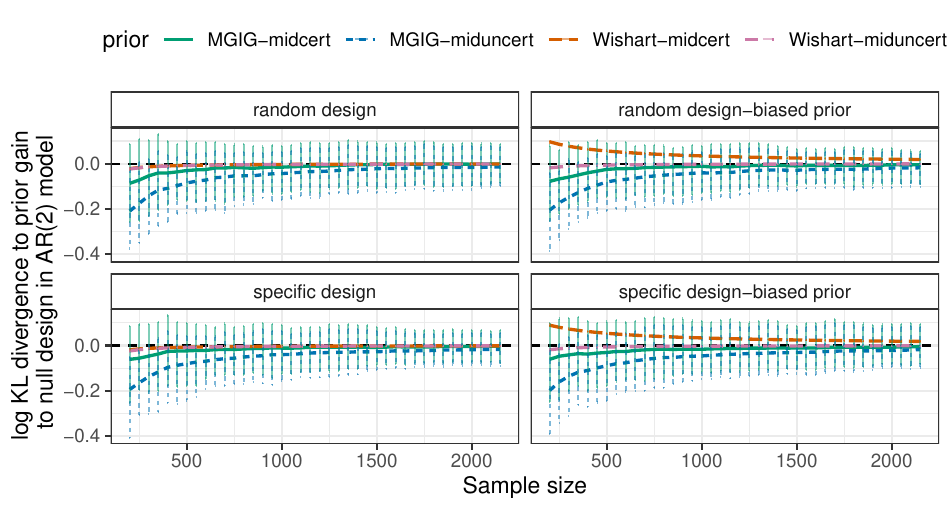}
    \caption{Difference in log KL divergence between prior and posterior comparing random experiment ($\mathbf X \ne \mathbf 0$) and specific experiment (diagonal $\mathbf X$) vs null experiment ($\mathbf X=\mathbf 0$) under AR2 models with 50 responses and 50 predictors, with and without biases on the prior of $\mathbf{B}$. Lines are averages over 100 repeats while error bars are 0.975 and 0.025 quantiles. }
    \label{fig:ar2_kl}
\end{figure}

\begin{figure}[H]
    \centering
    \includegraphics[width = 0.8\linewidth]{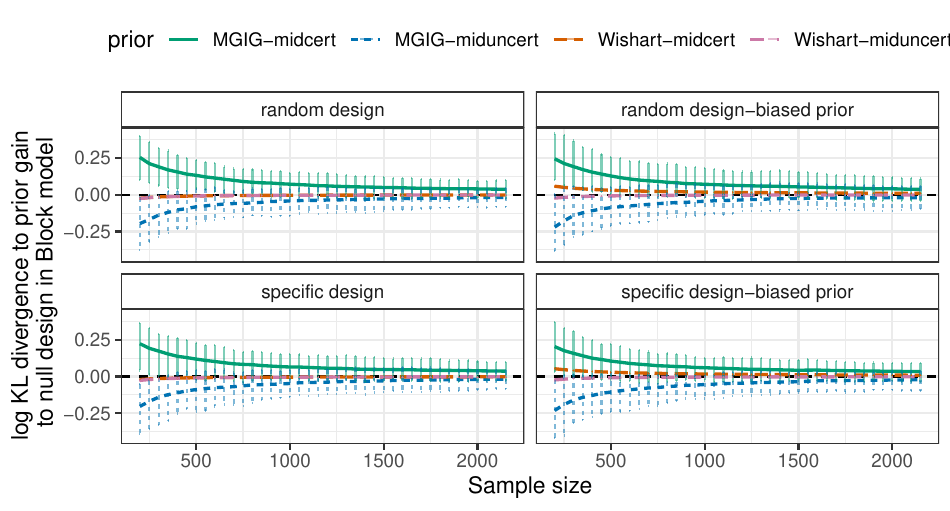}
    \caption{Difference in log KL divergence between prior and posterior comparing random experiment ($\mathbf X \ne \mathbf 0$) and specific experiment (diagonal $\mathbf X$) vs null experiment ($\mathbf X=\mathbf 0$) under block models with 50 responses and 50 predictors, with and without biases on the prior of $\mathbf{B}$. Lines are averages over 100 repeats while error bars are 0.975 and 0.025 quantiles. }
    \label{fig:block_kl}
\end{figure}

\begin{figure}[H]
    \centering
    \includegraphics[width = 0.8\linewidth]{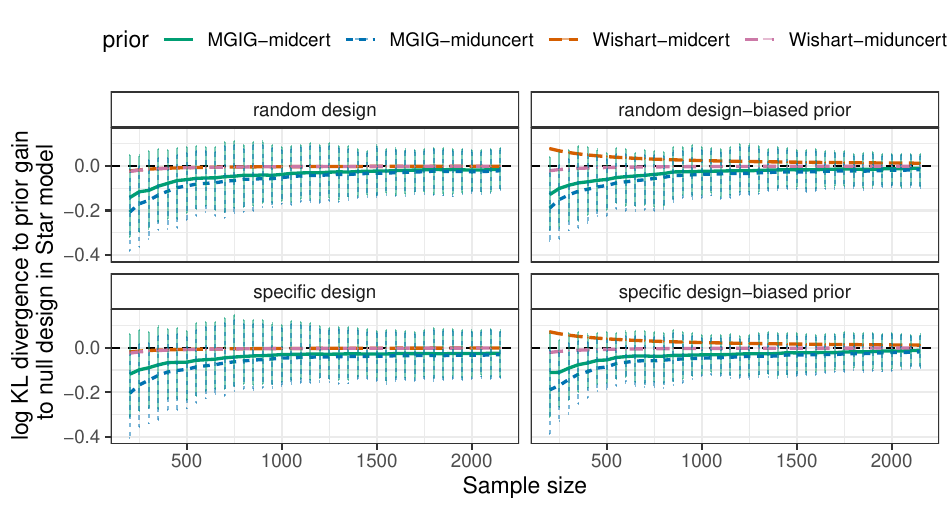}
    \caption{Difference in log KL divergence between prior and posterior comparing random experiment ($\mathbf X \ne \mathbf 0$) and specific experiment (diagonal $\mathbf X$) vs null experiment ($\mathbf X=\mathbf 0$) under Star models with 50 responses and 50 predictors, with and without biases on the prior of $\mathbf{B}$. Lines are averages over 100 repeats while error bars are 0.975 and 0.025 quantiles. }
    \label{fig:block_kl}
\end{figure}

\begin{figure}[H]
    \centering
    \includegraphics[width = 0.8\linewidth]{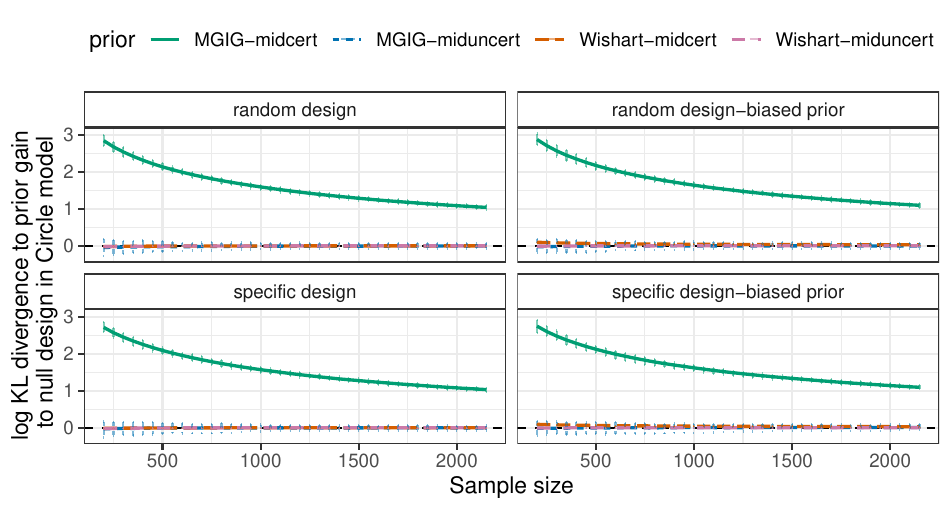}
    \caption{Difference in log KL divergence between prior and posterior comparing random experiment ($\mathbf X \ne \mathbf 0$) and specific experiment (diagonal $\mathbf X$) vs null experiment ($\mathbf X=\mathbf 0$) under circle models with 50 responses and 50 predictors, with and without biases on the prior of $\mathbf{B}$. Lines are averages over 100 repeats while error bars are 0.975 and 0.025 quantiles. }
    \label{fig:circle_kl}
\end{figure}

\begin{figure}[H]
    \centering
    \includegraphics[width = 0.8\linewidth]{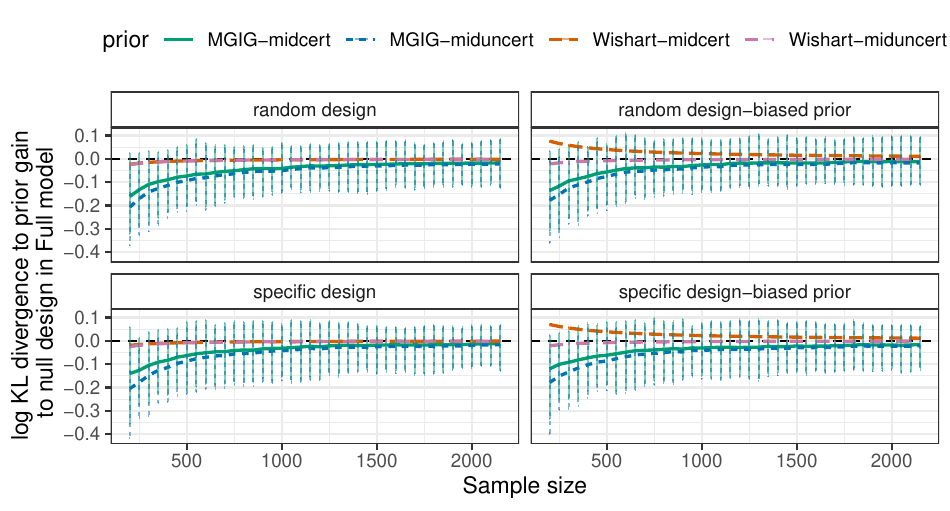}
    \caption{Difference in log KL divergence between prior and posterior comparing random experiment ($\mathbf X \ne \mathbf 0$) and specific experiment (diagonal $\mathbf X$) vs null experiment ($\mathbf X=\mathbf 0$) under Full models with 50 responses and 50 predictors, with and without biases on the prior of $\mathbf{B}$. Lines are averages over 100 repeats while error bars are 0.975 and 0.025 quantiles. }
    \label{fig:full_kl}
\end{figure}

\subsubsection*{Stein's loss compared to null experiment}
\revision{This set of experiments suggests that good prior on $\mathbf{B}$ make experiments helpful in the point estimate of $\Omega$ in other models, but with biased prior this is not the case as more positive Stein's loss change means a worse point estimate compared to null design.} 

\begin{figure}[H]
    \centering
    \includegraphics[width = 0.8\linewidth]{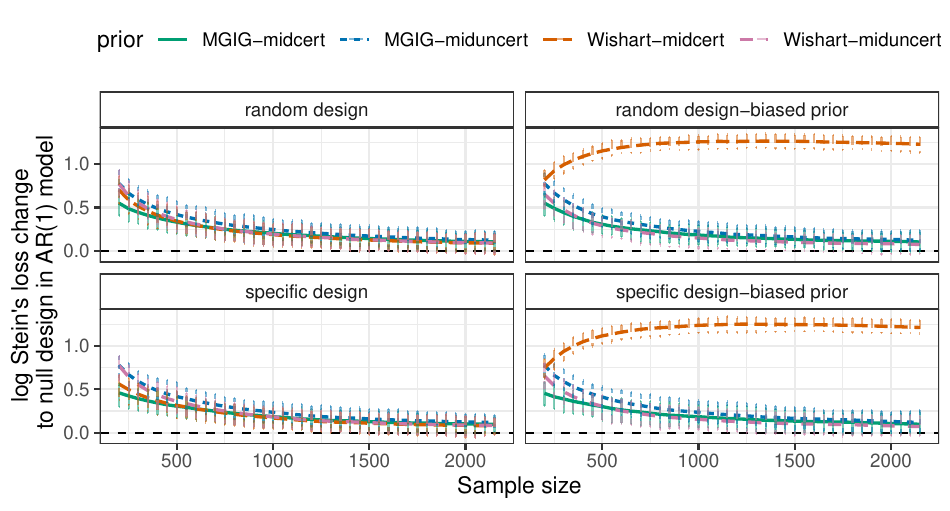}
    \caption{Difference in log Stein's loss of random experiment ($\mathbf X \ne \mathbf 0$) and specific experiment (diagonal $\mathbf X$) vs null experiment ($\mathbf X=\mathbf 0$) under AR1 models with 50 responses and 50 predictors, with and without biases on the prior of $\mathbf{B}$. Lines are averages over 100 repeats while error bars are 0.975 and 0.025 quantiles. }
    \label{fig:ar1stein}
\end{figure}

\begin{figure}[H]
    \centering
    \includegraphics[width = 0.8\linewidth]{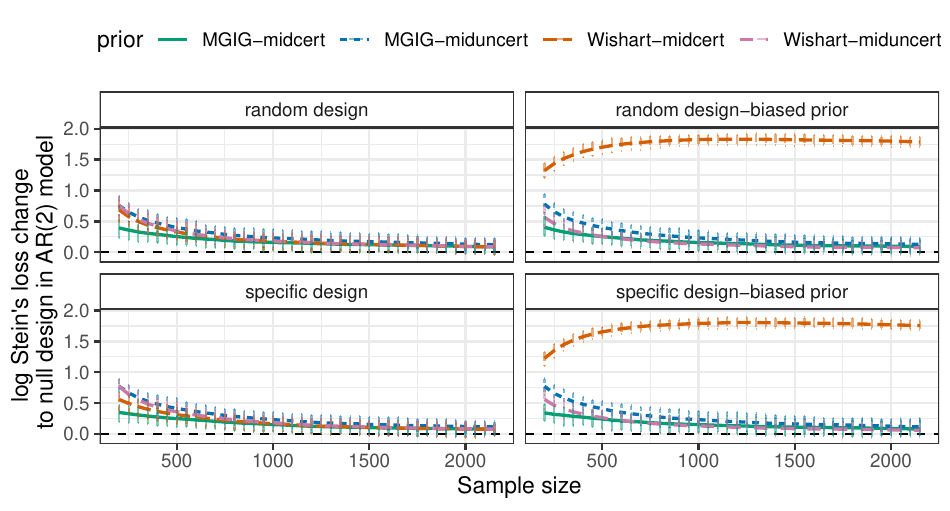}
    \caption{Difference in log Stein's loss of random experiment ($\mathbf X \ne \mathbf 0$) and specific experiment (diagonal $\mathbf X$) vs null experiment ($\mathbf X=\mathbf 0$) under AR2 models with 50 responses and 50 predictors, with and without biases on the prior of $\mathbf{B}$. Lines are averages over 100 repeats while error bars are 0.975 and 0.025 quantiles. }
    \label{fig:ar2stein}
\end{figure}

\begin{figure}[H]
    \centering
    \includegraphics[width = 0.8\linewidth]{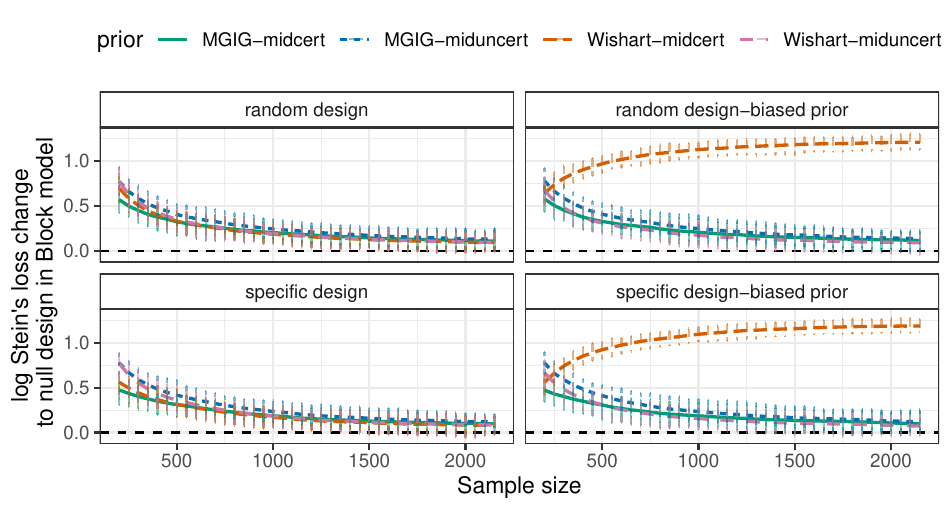}
    \caption{Difference in log Stein's loss of random experiment ($\mathbf X \ne \mathbf 0$) and specific experiment (diagonal $\mathbf X$) vs null experiment ($\mathbf X=\mathbf 0$) under Block models with 50 responses and 50 predictors, with and without biases on the prior of $\mathbf{B}$. Lines are averages over 100 repeats while error bars are 0.975 and 0.025 quantiles. }
    \label{fig:Blockstein}
\end{figure}

\begin{figure}[H]
    \centering
    \includegraphics[width = 0.8\linewidth]{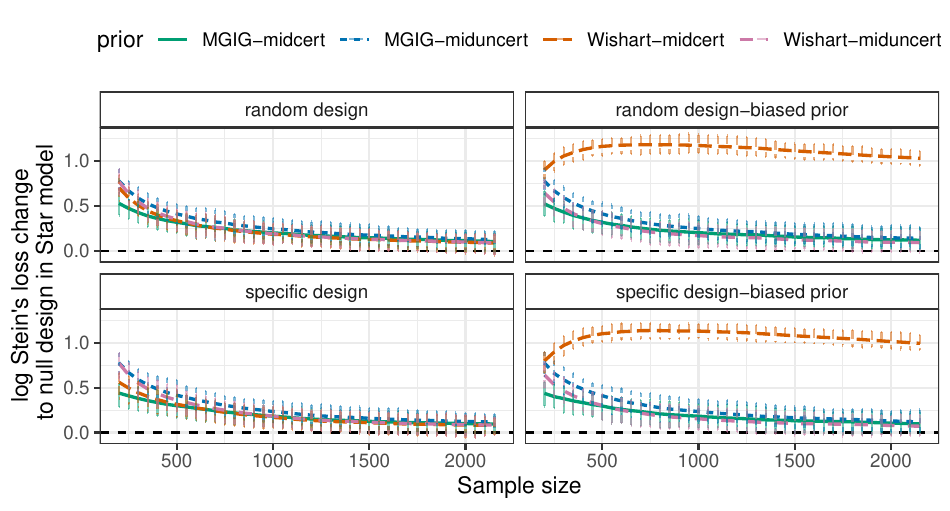}
    \caption{Difference in log Stein's loss of random experiment ($\mathbf X \ne \mathbf 0$) and specific experiment (diagonal $\mathbf X$) vs null experiment ($\mathbf X=\mathbf 0$) under Star models with 50 responses and 50 predictors, with and without biases on the prior of $\mathbf{B}$. Lines are averages over 100 repeats while error bars are 0.975 and 0.025 quantiles. }
    \label{fig:Blockstein}
\end{figure}

\begin{figure}[H]
    \centering
    \includegraphics[width = 0.8\linewidth]{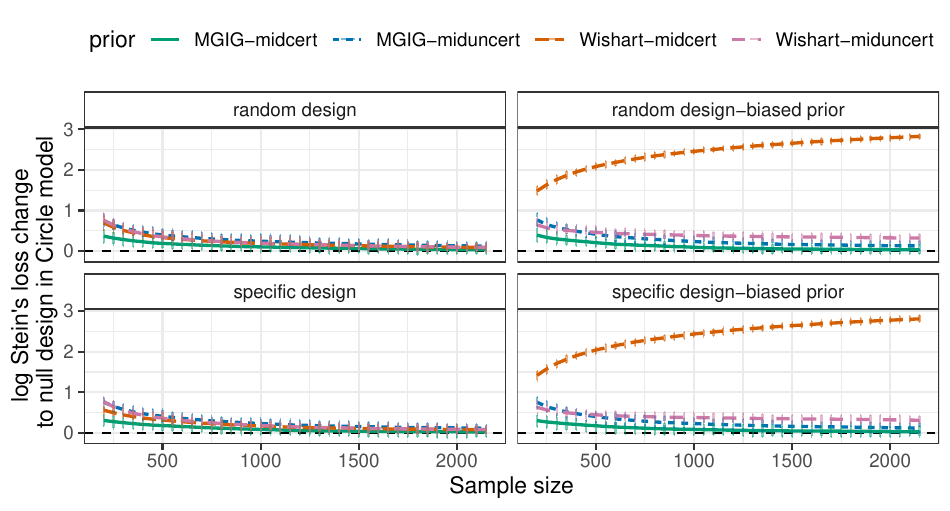}
    \caption{Difference in log Stein's loss of random experiment ($\mathbf X \ne \mathbf 0$) and specific experiment (diagonal $\mathbf X$) vs null experiment ($\mathbf X=\mathbf 0$) under Circle models with 50 responses and 50 predictors, with and without biases on the prior of $\mathbf{B}$. Lines are averages over 100 repeats while error bars are 0.975 and 0.025 quantiles. }
    \label{fig:circlestein}
\end{figure}

\begin{figure}[H]
    \centering
    \includegraphics[width = 0.8\linewidth]{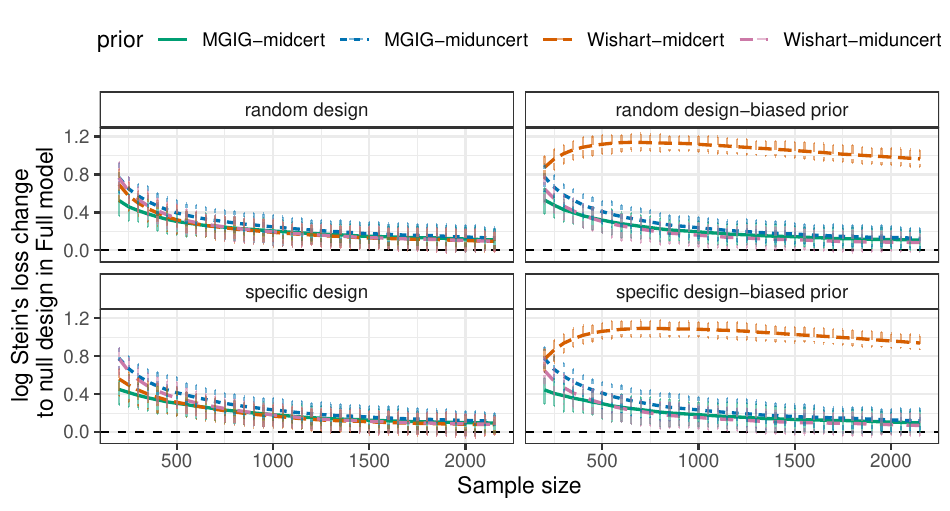}
    \caption{Difference in log Stein's loss of random experiment ($\mathbf X \ne \mathbf 0$) and specific experiment (diagonal $\mathbf X$) vs null experiment ($\mathbf X=\mathbf 0$) under Full models with 50 responses and 50 predictors, with and without biases on the prior of $\mathbf{B}$. Lines are averages over 100 repeats while error bars are 0.975 and 0.025 quantiles. }
    \label{fig:fullstein}
\end{figure}

\section{\revision{Simulations with violations to distributional assumptions}}

\revision{Here, we investigate the performance of the inference under violations to the distributional assumptions. 
We repeat some of the experiments to have data generated from a multivariate Laplace distribution and multivariate Student's t distribution with two degrees of freedom while still fit a Gaussian chain graph model to infer the precision matrix. The posterior is based on Gaussian likelihood.}

\subsection*{Laplace tail}

\revision{Inference performance is quite robust to Laplace tails.}

\subsubsection*{KL divergence compared to null experiments}
\begin{figure}[H]
    \centering
    \includegraphics[width = 0.8\linewidth]{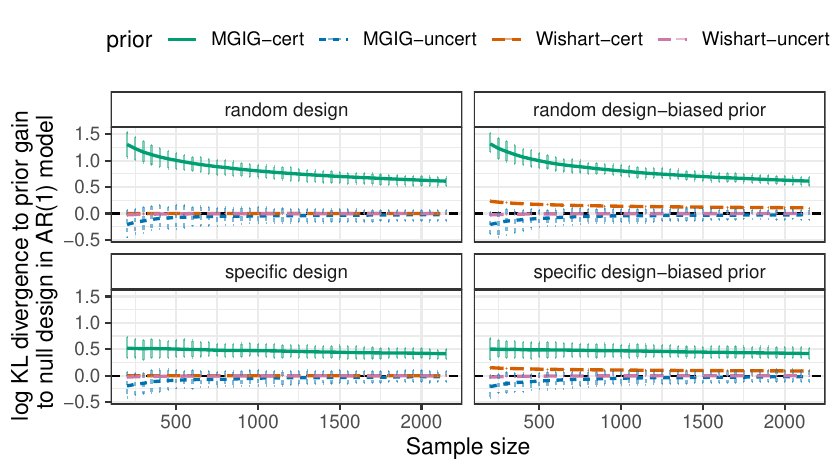}
    \caption{Difference in log KL divergence between prior and posterior comparing random experiment ($\mathbf X \ne \mathbf 0$) and specific experiment (diagonal $\mathbf X$) vs null experiment ($\mathbf X=\mathbf 0$) under AR1 models with 50 responses and 50 predictors, with and without biases on the prior of $\mathbf{B}$. Lines are averages over 100 repeats while error bars are 0.975 and 0.025 quantiles. }
    \label{fig:ar1_kl2}
\end{figure}

\begin{figure}[H]
    \centering
    \includegraphics[width = 0.8\linewidth]{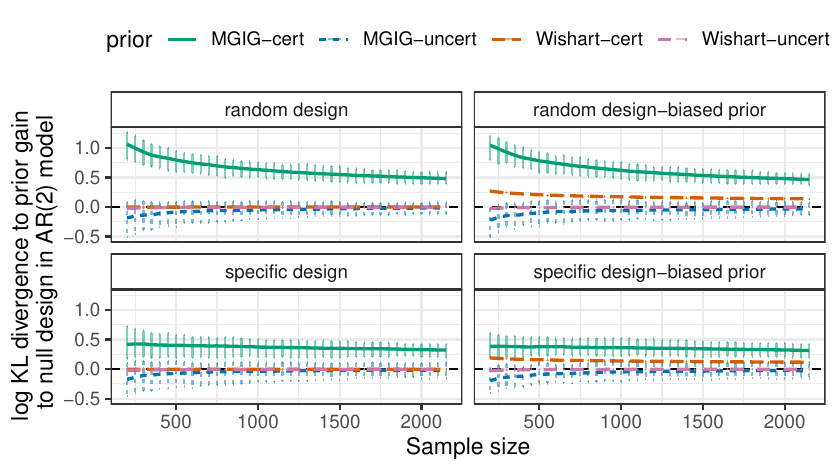}
    \caption{Difference in log KL divergence between prior and posterior comparing random experiment ($\mathbf X \ne \mathbf 0$) and specific experiment (diagonal $\mathbf X$) vs null experiment ($\mathbf X=\mathbf 0$) under AR2 models with 50 responses and 50 predictors, with and without biases on the prior of $\mathbf{B}$. Lines are averages over 100 repeats while error bars are 0.975 and 0.025 quantiles. }
    \label{fig:ar2_kl}
\end{figure}

\begin{figure}[H]
    \centering
    \includegraphics[width = 0.8\linewidth]{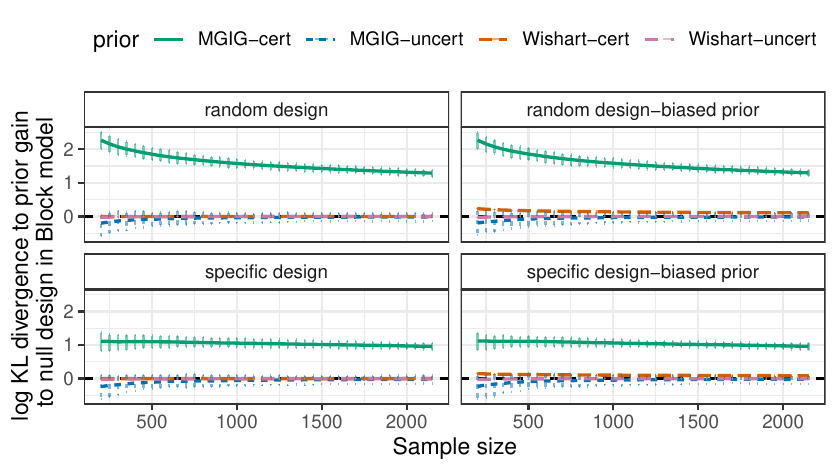}
    \caption{Difference in log KL divergence between prior and posterior comparing random experiment ($\mathbf X \ne \mathbf 0$) and specific experiment (diagonal $\mathbf X$) vs null experiment ($\mathbf X=\mathbf 0$) under block models with 50 responses and 50 predictors, with and without biases on the prior of $\mathbf{B}$. Lines are averages over 100 repeats while error bars are 0.975 and 0.025 quantiles. }
    \label{fig:block_kl}
\end{figure}

\begin{figure}[H]
    \centering
    \includegraphics[width = 0.8\linewidth]{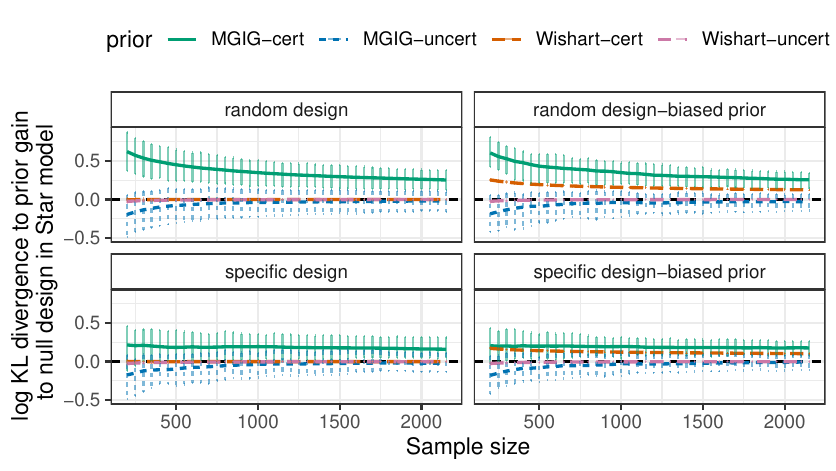}
    \caption{Difference in log KL divergence between prior and posterior comparing random experiment ($\mathbf X \ne \mathbf 0$) and specific experiment (diagonal $\mathbf X$) vs null experiment ($\mathbf X=\mathbf 0$) under Star models with 50 responses and 50 predictors, with and without biases on the prior of $\mathbf{B}$. Lines are averages over 100 repeats while error bars are 0.975 and 0.025 quantiles. }
    \label{fig:block_kl}
\end{figure}

\begin{figure}[H]
    \centering
    \includegraphics[width = 0.8\linewidth]{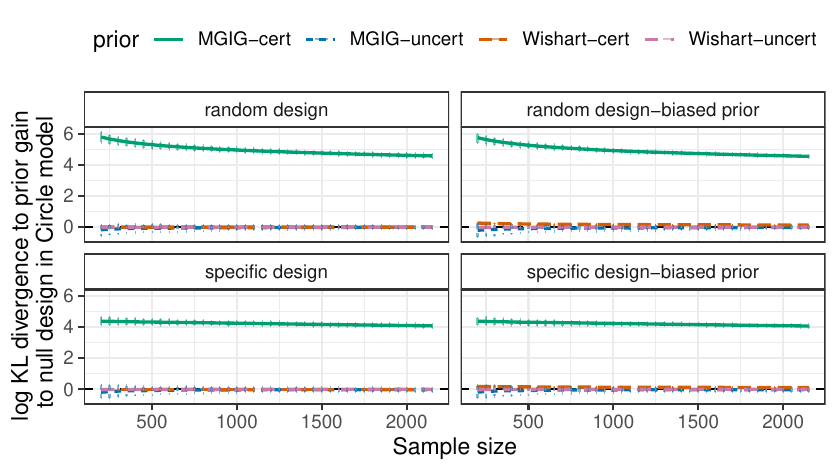}
    \caption{Difference in log KL divergence between prior and posterior comparing random experiment ($\mathbf X \ne \mathbf 0$) and specific experiment (diagonal $\mathbf X$) vs null experiment ($\mathbf X=\mathbf 0$) under circle models with 50 responses and 50 predictors, with and without biases on the prior of $\mathbf{B}$. Lines are averages over 100 repeats while error bars are 0.975 and 0.025 quantiles. }
    \label{fig:circle_kl}
\end{figure}

\begin{figure}[H]
    \centering
    \includegraphics[width = 0.8\linewidth]{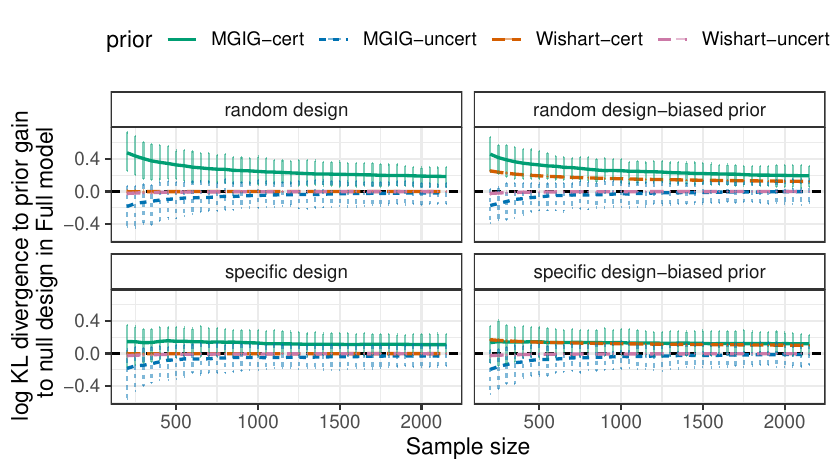}
    \caption{Difference in log KL divergence between prior and posterior comparing random experiment ($\mathbf X \ne \mathbf 0$) and specific experiment (diagonal $\mathbf X$) vs null experiment ($\mathbf X=\mathbf 0$) under Full models with 50 responses and 50 predictors, with and without biases on the prior of $\mathbf{B}$. Lines are averages over 100 repeats while error bars are 0.975 and 0.025 quantiles. }
    \label{fig:full_kl}
\end{figure}

\subsubsection*{Stein's loss compared to null experiments}

\begin{figure}[H]
    \centering
    \includegraphics[width = 0.8\linewidth]{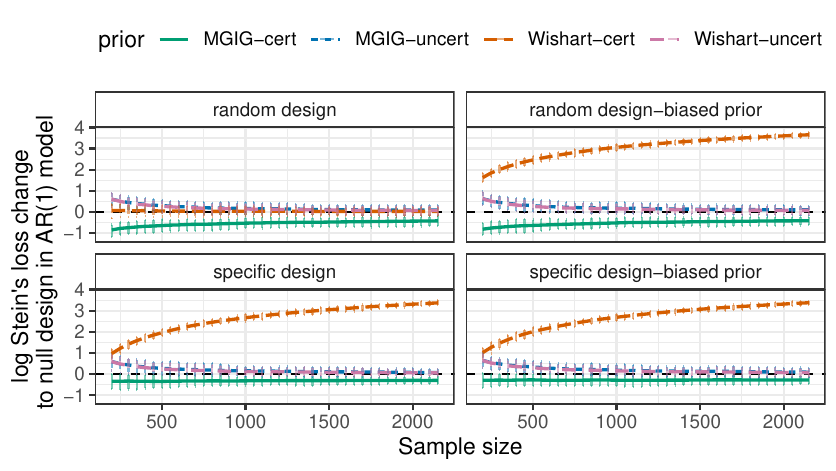}
    \caption{Difference in log Stein's loss of random experiment ($\mathbf X \ne \mathbf 0$) and specific experiment (diagonal $\mathbf X$) vs null experiment ($\mathbf X=\mathbf 0$) under AR1 models with 50 responses and 50 predictors, with and without biases on the prior of $\mathbf{B}$. Lines are averages over 100 repeats while error bars are 0.975 and 0.025 quantiles. }
    \label{fig:ar1stein}
\end{figure}

\begin{figure}[H]
    \centering
    \includegraphics[width = 0.8\linewidth]{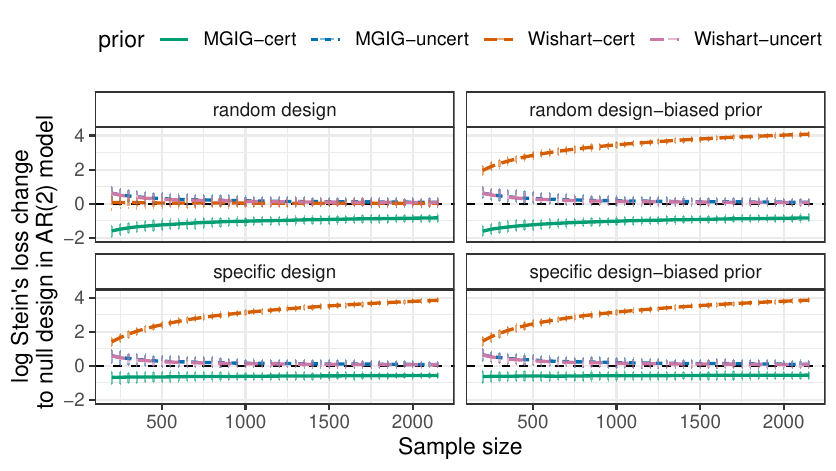}
    \caption{Difference in log Stein's loss of random experiment ($\mathbf X \ne \mathbf 0$) and specific experiment (diagonal $\mathbf X$) vs null experiment ($\mathbf X=\mathbf 0$) under AR2 models with 50 responses and 50 predictors, with and without biases on the prior of $\mathbf{B}$. Lines are averages over 100 repeats while error bars are 0.975 and 0.025 quantiles. }
    \label{fig:ar2stein}
\end{figure}

\begin{figure}[H]
    \centering
    \includegraphics[width = 0.8\linewidth]{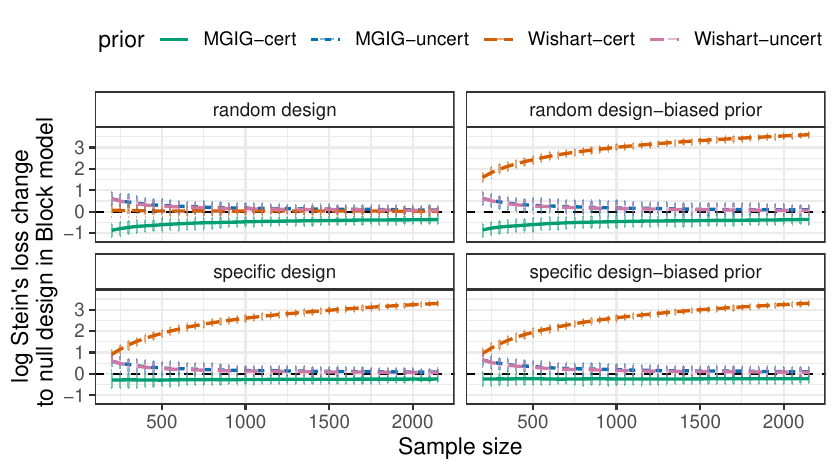}
    \caption{Difference in log Stein's loss of random experiment ($\mathbf X \ne \mathbf 0$) and specific experiment (diagonal $\mathbf X$) vs null experiment ($\mathbf X=\mathbf 0$) under Block models with 50 responses and 50 predictors, with and without biases on the prior of $\mathbf{B}$. Lines are averages over 100 repeats while error bars are 0.975 and 0.025 quantiles. }
    \label{fig:Blockstein}
\end{figure}

\begin{figure}[H]
    \centering
    \includegraphics[width = 0.8\linewidth]{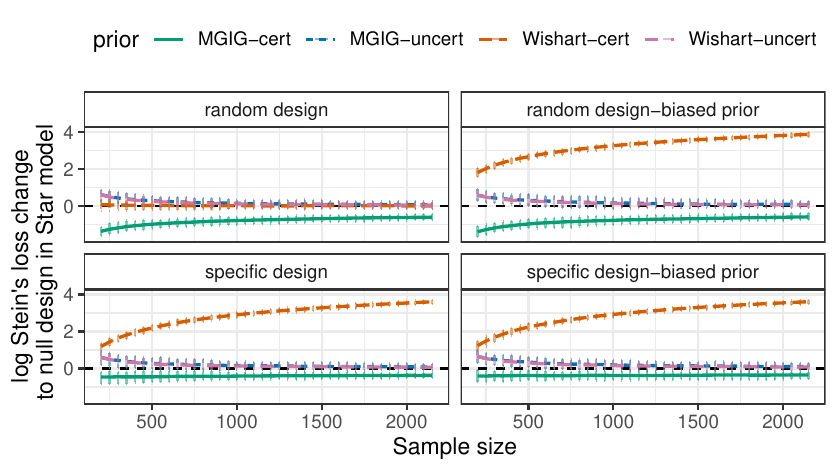}
    \caption{Difference in log Stein's loss of random experiment ($\mathbf X \ne \mathbf 0$) and specific experiment (diagonal $\mathbf X$) vs null experiment ($\mathbf X=\mathbf 0$) under Star models with 50 responses and 50 predictors, with and without biases on the prior of $\mathbf{B}$. Lines are averages over 100 repeats while error bars are 0.975 and 0.025 quantiles. }
    \label{fig:Blockstein}
\end{figure}

\begin{figure}[H]
    \centering
    \includegraphics[width = 0.8\linewidth]{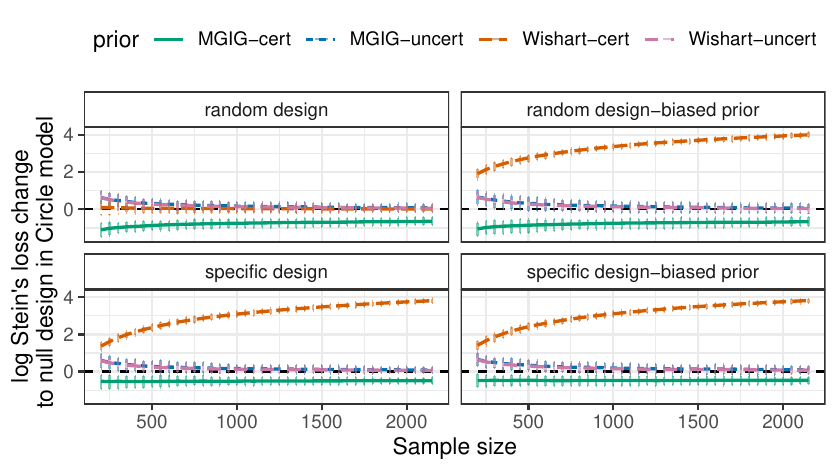}
    \caption{Difference in log Stein's loss of random experiment ($\mathbf X \ne \mathbf 0$) and specific experiment (diagonal $\mathbf X$) vs null experiment ($\mathbf X=\mathbf 0$) under Circle models with 50 responses and 50 predictors, with and without biases on the prior of $\mathbf{B}$. Lines are averages over 100 repeats while error bars are 0.975 and 0.025 quantiles. }
    \label{fig:circlestein}
\end{figure}

\begin{figure}[H]
    \centering
    \includegraphics[width = 0.8\linewidth]{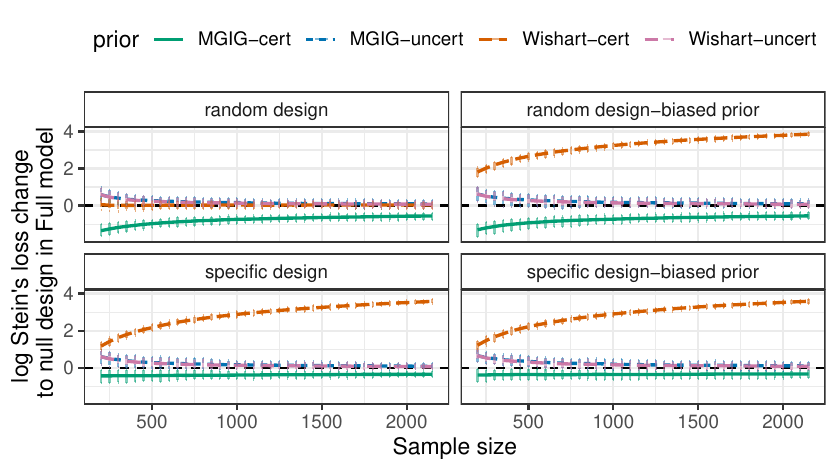}
    \caption{Difference in log Stein's loss of random experiment ($\mathbf X \ne \mathbf 0$) and specific experiment (diagonal $\mathbf X$) vs null experiment ($\mathbf X=\mathbf 0$) under Full models with 50 responses and 50 predictors, with and without biases on the prior of $\mathbf{B}$. Lines are averages over 100 repeats while error bars are 0.975 and 0.025 quantiles. }
    \label{fig:fullstein}
\end{figure}

\subsection*{Student's t-distribution tails}

\revision{The conclusions are different with t-distribution tails. Results for point estimates still showed that a good prior is needed for experiment to outperform a null experiments, but KL divergence gain is different from Gaussian chain graph. This is expected given that we have model misspecification.}

\subsubsection*{KL divergence compared to null experiments}
\begin{figure}[H]
    \centering
    \includegraphics[width = 0.8\linewidth]{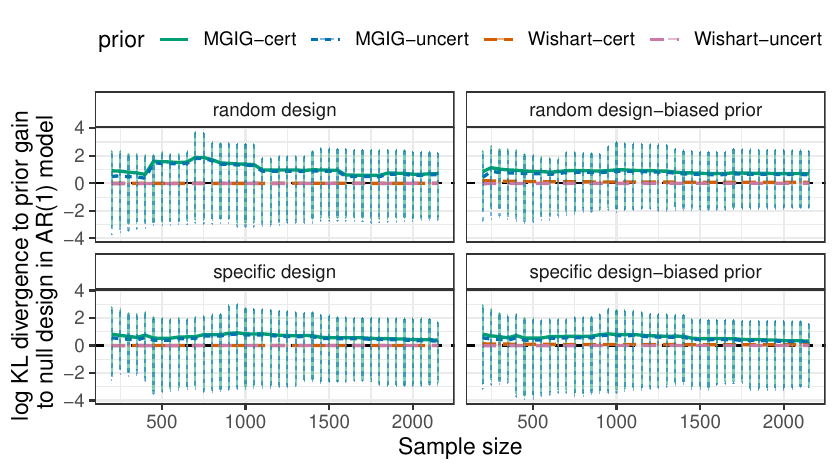}
    \caption{Difference in log KL divergence between prior and posterior comparing random experiment ($\mathbf X \ne \mathbf 0$) and specific experiment (diagonal $\mathbf X$) vs null experiment ($\mathbf X=\mathbf 0$) under AR1 models with 50 responses and 50 predictors, with and without biases on the prior of $\mathbf{B}$. Lines are averages over 100 repeats while error bars are 0.975 and 0.025 quantiles. }
    \label{fig:ar1_kl2}
\end{figure}

\begin{figure}[H]
    \centering
    \includegraphics[width = 0.8\linewidth]{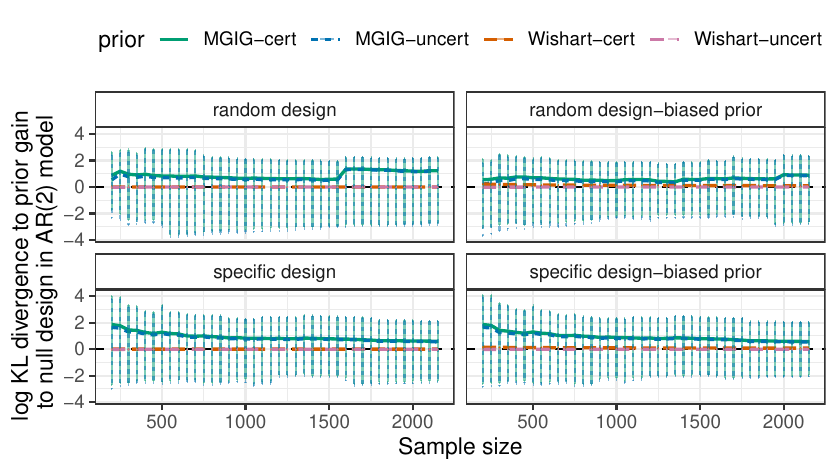}
    \caption{Difference in log KL divergence between prior and posterior comparing random experiment ($\mathbf X \ne \mathbf 0$) and specific experiment (diagonal $\mathbf X$) vs null experiment ($\mathbf X=\mathbf 0$) under AR2 models with 50 responses and 50 predictors, with and without biases on the prior of $\mathbf{B}$. Lines are averages over 100 repeats while error bars are 0.975 and 0.025 quantiles. }
    \label{fig:ar2_kl}
\end{figure}

\begin{figure}[H]
    \centering
    \includegraphics[width = 0.8\linewidth]{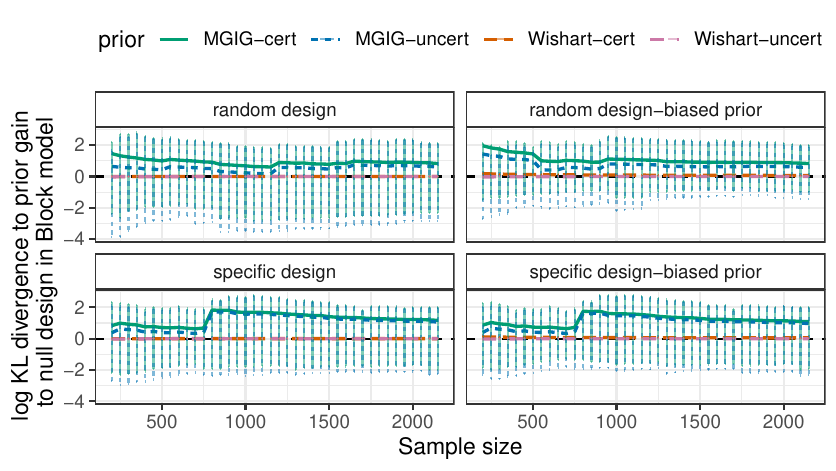}
    \caption{Difference in log KL divergence between prior and posterior comparing random experiment ($\mathbf X \ne \mathbf 0$) and specific experiment (diagonal $\mathbf X$) vs null experiment ($\mathbf X=\mathbf 0$) under block models with 50 responses and 50 predictors, with and without biases on the prior of $\mathbf{B}$. Lines are averages over 100 repeats while error bars are 0.975 and 0.025 quantiles. }
    \label{fig:block_kl}
\end{figure}

\begin{figure}[H]
    \centering
    \includegraphics[width = 0.8\linewidth]{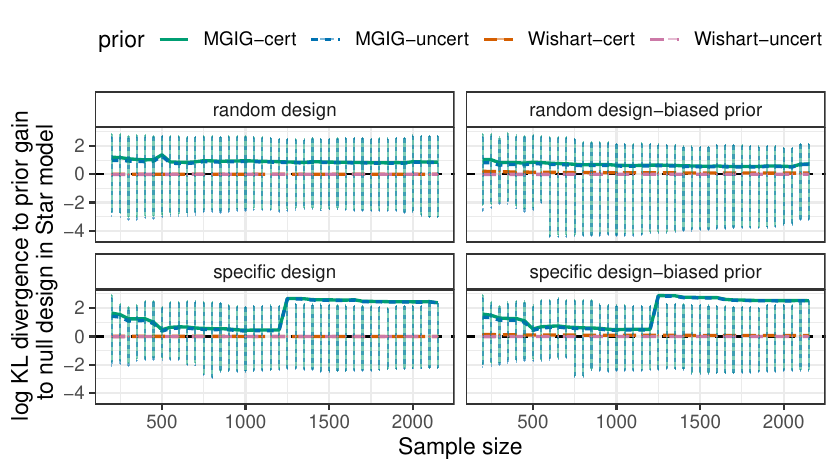}
    \caption{Difference in log KL divergence between prior and posterior comparing random experiment ($\mathbf X \ne \mathbf 0$) and specific experiment (diagonal $\mathbf X$) vs null experiment ($\mathbf X=\mathbf 0$) under Star models with 50 responses and 50 predictors, with and without biases on the prior of $\mathbf{B}$. Lines are averages over 100 repeats while error bars are 0.975 and 0.025 quantiles. }
    \label{fig:block_kl}
\end{figure}

\begin{figure}[H]
    \centering
    \includegraphics[width = 0.8\linewidth]{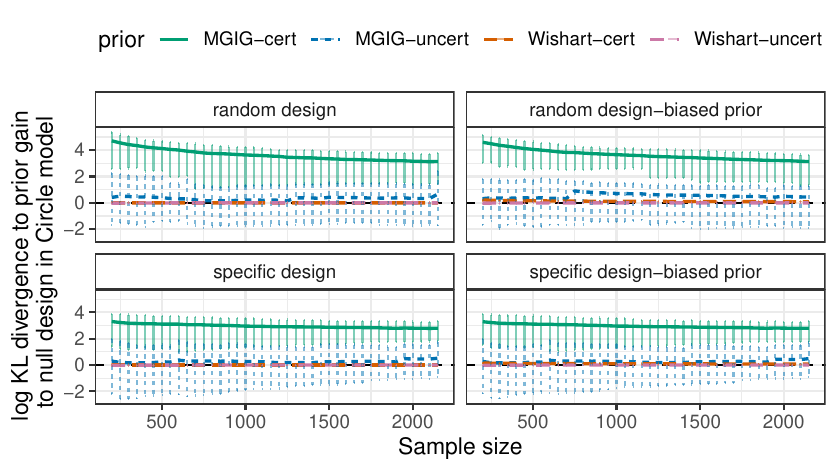}
    \caption{Difference in log KL divergence between prior and posterior comparing random experiment ($\mathbf X \ne \mathbf 0$) and specific experiment (diagonal $\mathbf X$) vs null experiment ($\mathbf X=\mathbf 0$) under circle models with 50 responses and 50 predictors, with and without biases on the prior of $\mathbf{B}$. Lines are averages over 100 repeats while error bars are 0.975 and 0.025 quantiles. }
    \label{fig:circle_kl}
\end{figure}

\begin{figure}[H]
    \centering
    \includegraphics[width = 0.8\linewidth]{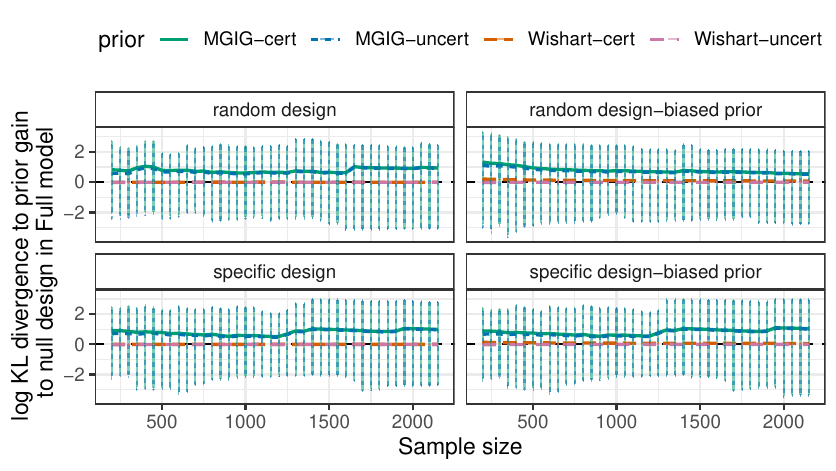}
    \caption{Difference in log KL divergence between prior and posterior comparing random experiment ($\mathbf X \ne \mathbf 0$) and specific experiment (diagonal $\mathbf X$) vs null experiment ($\mathbf X=\mathbf 0$) under Full models with 50 responses and 50 predictors, with and without biases on the prior of $\mathbf{B}$. Lines are averages over 100 repeats while error bars are 0.975 and 0.025 quantiles. }
    \label{fig:full_kl}
\end{figure}

\subsubsection*{Stein's loss compared to null experiments}

\begin{figure}[H]
    \centering
    \includegraphics[width = 0.8\linewidth]{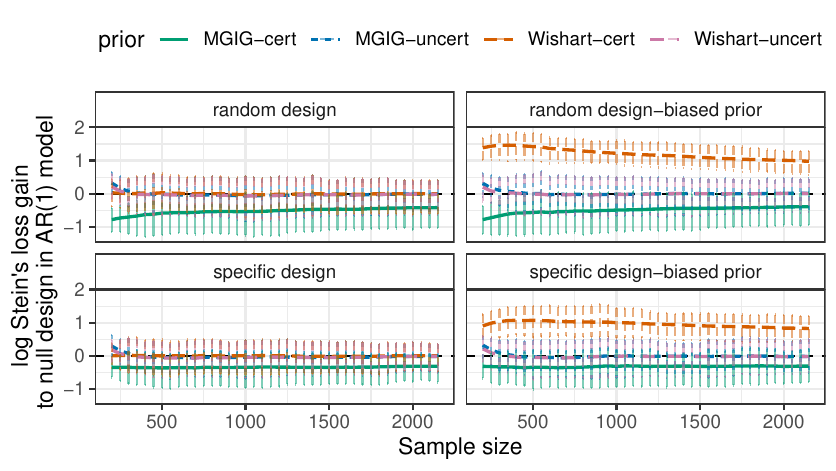}
    \caption{Difference in log Stein's loss of random experiment ($\mathbf X \ne \mathbf 0$) and specific experiment (diagonal $\mathbf X$) vs null experiment ($\mathbf X=\mathbf 0$) under AR1 models with 50 responses and 50 predictors, with and without biases on the prior of $\mathbf{B}$. Lines are averages over 100 repeats while error bars are 0.975 and 0.025 quantiles. }
    \label{fig:ar1stein}
\end{figure}

\begin{figure}[H]
    \centering
    \includegraphics[width = 0.8\linewidth]{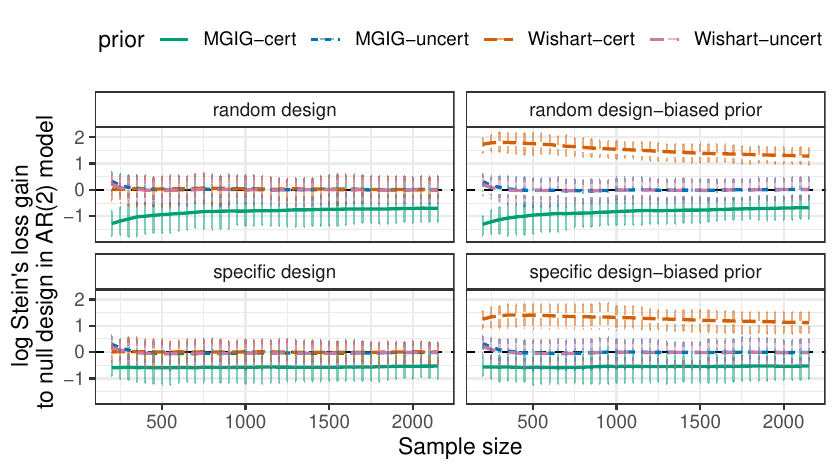}
    \caption{Difference in log Stein's loss of random experiment ($\mathbf X \ne \mathbf 0$) and specific experiment (diagonal $\mathbf X$) vs null experiment ($\mathbf X=\mathbf 0$) under AR2 models with 50 responses and 50 predictors, with and without biases on the prior of $\mathbf{B}$. Lines are averages over 100 repeats while error bars are 0.975 and 0.025 quantiles. }
    \label{fig:ar2stein}
\end{figure}

\begin{figure}[H]
    \centering
    \includegraphics[width = 0.8\linewidth]{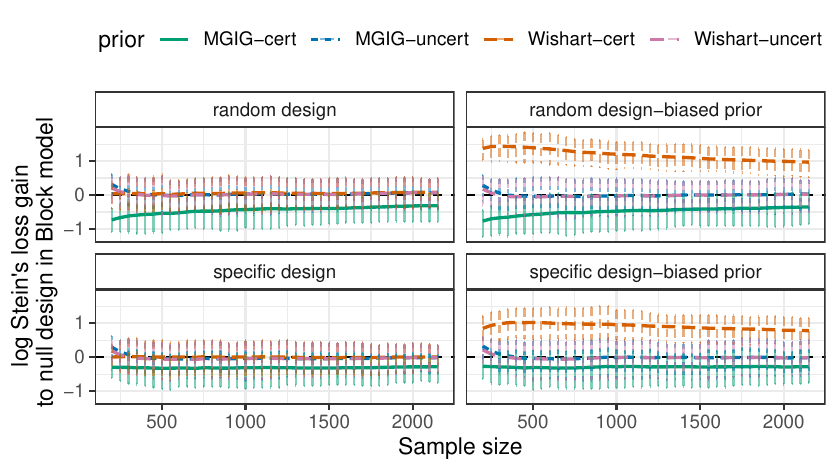}
    \caption{Difference in log Stein's loss of random experiment ($\mathbf X \ne \mathbf 0$) and specific experiment (diagonal $\mathbf X$) vs null experiment ($\mathbf X=\mathbf 0$) under Block models with 50 responses and 50 predictors, with and without biases on the prior of $\mathbf{B}$. Lines are averages over 100 repeats while error bars are 0.975 and 0.025 quantiles. }
    \label{fig:Blockstein}
\end{figure}

\begin{figure}[H]
    \centering
    \includegraphics[width = 0.8\linewidth]{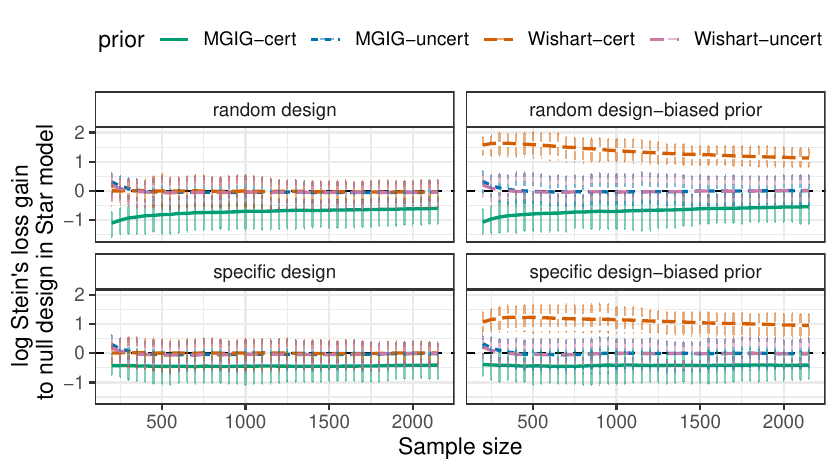}
    \caption{Difference in log Stein's loss of random experiment ($\mathbf X \ne \mathbf 0$) and specific experiment (diagonal $\mathbf X$) vs null experiment ($\mathbf X=\mathbf 0$) under Star models with 50 responses and 50 predictors, with and without biases on the prior of $\mathbf{B}$. Lines are averages over 100 repeats while error bars are 0.975 and 0.025 quantiles. }
    \label{fig:Blockstein}
\end{figure}

\begin{figure}[H]
    \centering
    \includegraphics[width = 0.8\linewidth]{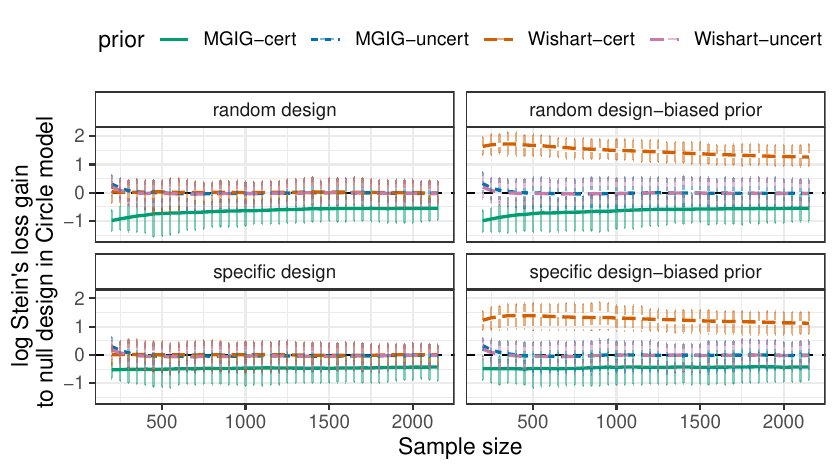}
    \caption{Difference in log Stein's loss of random experiment ($\mathbf X \ne \mathbf 0$) and specific experiment (diagonal $\mathbf X$) vs null experiment ($\mathbf X=\mathbf 0$) under Circle models with 50 responses and 50 predictors, with and without biases on the prior of $\mathbf{B}$. Lines are averages over 100 repeats while error bars are 0.975 and 0.025 quantiles. }
    \label{fig:circlestein}
\end{figure}

\begin{figure}[H]
    \centering
    \includegraphics[width = 0.8\linewidth]{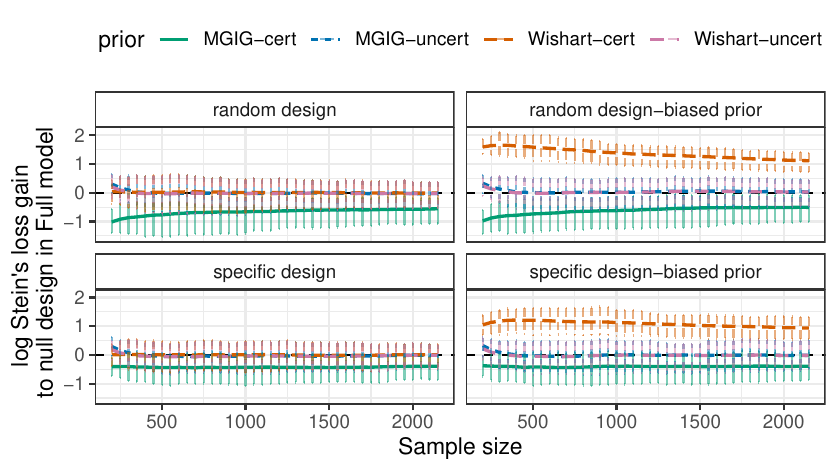}
    \caption{Difference in log Stein's loss of random experiment ($\mathbf X \ne \mathbf 0$) and specific experiment (diagonal $\mathbf X$) vs null experiment ($\mathbf X=\mathbf 0$) under Full models with 50 responses and 50 predictors, with and without biases on the prior of $\mathbf{B}$. Lines are averages over 100 repeats while error bars are 0.975 and 0.025 quantiles. }
    \label{fig:fullstein}
\end{figure}

\end{document}